\documentclass{article}

\PassOptionsToPackage{numbers, compress}{natbib}

\usepackage[preprint]{neurips_2026}

\usepackage[utf8]{inputenc} 
\usepackage[T1]{fontenc}    
\usepackage{hyperref}       
\usepackage{url}            
\usepackage{booktabs}       
\usepackage{amsfonts}       
\usepackage{nicefrac}       
\usepackage{microtype}      
\usepackage{xcolor}         
\usepackage{enumitem}
\usepackage{amsmath,amssymb,amsthm}
\usepackage{algorithm}
\usepackage{algpseudocode}

\newtheorem{theorem}{Theorem}[section]
\newtheorem{lemma}{Lemma}[section]
\newtheorem{proposition}{Proposition}[section]
\newtheorem{remark}{Remark}[section]
\newtheorem{definition}{Definition}[section] 

\usepackage{graphicx} 
\providecommand{\keywords}[1]{\par\medskip\noindent\textbf{Keywords: } #1}

\title{Deep Learning under Fractional-Order Differential Privacy}

\author{%
  Mohammad Partohaghighi \\
  MESA Lab, Dept. of EECS\\
  University of California, Merced\\
  Merced, CA 95343 \\
  \texttt{mpartohaghighi@ucmerced.edu} \\
  \And
  Roummel Marcia \\
  Dept. of Applied Mathematics\\
  University of California, Merced\\
  Merced, CA 95343 \\
  \texttt{rmarcia@ucmerced.edu} \\
}

\begin{document}

\maketitle

\maketitle

\begin{abstract}
Differentially private stochastic gradient descent (DP-SGD) is a standard
approach to privacy-preserving learning based on per-example gradient clipping,
random subsampling, Gaussian perturbation, and privacy accounting. In classical
DP-SGD, each private release depends only on the current clipped subsampled
gradient sum. We propose Fractional-Order Differentially Private Stochastic
Gradient Descent (\textbf{FO-DP-SGD}), a mechanism-level extension of DP-SGD in which,
before Gaussian perturbation, the standard clipped subsampled sum query is
replaced by a fractional-order recursive query that combines the current clipped
sum with a finite-window, power-law-weighted aggregation of previously released
private sum-level outputs. This introduces fractional memory directly into the
release mechanism while preserving the standard \emph{sum-then-noise-then-divide}
structure.

Under add/remove adjacency with Poisson subsampling, we show that in the
current-step sensitivity analysis, the only newly data-dependent term is the
scaled current clipped sum, yielding a history-conditioned effective
\(\ell_2\)-sensitivity bound of at most \(\beta C\), where \(C\) is the
clipping threshold and \(\beta\in(0,1]\) controls the current-step contribution.
As a result, FO-DP-SGD admits standard per-step R\'enyi differential privacy
accounting via a Poisson-subsampled Gaussian mechanism with effective
noise-to-sensitivity ratio \(\sigma/\beta\), and composes naturally to provide
overall \((\varepsilon,\delta)\)-differential privacy guarantees.

Beyond privacy preservation, FO-DP-SGD provides a principled mechanism-level
framework for studying long-memory effects in private optimization. Our analysis
clarifies how the fractional order, memory window, and mixing coefficient govern
the trade-off among current-step sensitivity, direct signal retention, and
inherited private-history influence. Experiments on SVHN, CIFAR-10, and
CIFAR-100 show that FO-DP-SGD improves test accuracy and privacy--utility
performance relative to DP-SGD and several private optimization baselines,
including DP-Adam, DP-IS, SA-DP-SGD, ADP-AdamW, DP-SAT, and DP-Adam-AC.
\end{abstract}

\keywords{
Differential Privacy;
DP-SGD;
Fractional Calculus

}


\section{Introduction}
\label{sec:introduction}

Modern machine learning systems are increasingly trained on data that are
intrinsically sensitive, including medical records, behavioral traces,
financial histories, and other person-level information whose misuse may create
legal, ethical, and societal harms. Differential privacy (DP) has emerged as a
leading formal framework for addressing these risks by ensuring that the output
distribution of a computation does not depend strongly on the participation of
any single individual \cite{DworkMNS06,DworkRoth14,KasiviswanathanLNRS11}. In
learning settings, this challenge is especially acute because privacy must be
maintained not for a single statistic, but across an iterative optimization
procedure whose intermediate computations repeatedly access sensitive data
\cite{ChaudhuriMS11,BassilyST14}. The core problem is therefore to design
training procedures that preserve meaningful privacy guarantees while retaining
useful optimization and statistical performance.

Among practical approaches to private training, differentially private
stochastic gradient descent (DP-SGD) has become the canonical baseline
\cite{AbadiCGMMTZ16,SongCS13}. At each iteration, per-example gradients are
clipped to control sensitivity, aggregated over a subsample, perturbed with
Gaussian noise, and then used for parameter updating. Combined with privacy
amplification by subsampling and suitable composition analysis, this simple
clipped-sum Gaussian mechanism has enabled private training of deep models
across many tasks \cite{AbadiCGMMTZ16,PapernotSMR16,PapernotSESMR18,McMahanRTZ18}.
Precisely because of its broad adoption, DP-SGD is the natural reference point
for new work in private optimization.

At the same time, the optimization limitations of DP-SGD are well known.
Gradient clipping introduces bias, while Gaussian perturbation adds variance
that accumulates across training steps \cite{AbadiCGMMTZ16,AndrewKMSV21,AsiDFJT21}.
In deep and nonconvex settings, these effects can slow convergence, reduce
final utility, and make performance highly sensitive to clipping norms, noise
scales, subsampling rates, and step sizes \cite{KairouzMSTTX21,LiZLRMS23}. A
large body of prior work has therefore improved private learning through tighter
privacy accounting, optimizer-side refinements, architecture design, or
alternative training structures. However, in many such methods, the core
private release itself remains unchanged: the clipped subsampled sum is still
privatized in the standard way, and improvements occur mainly in how that
released quantity is subsequently used.

This leaves open a more direct question: can the private query itself be made
more informative while preserving compatibility with the standard
subsampled-Gaussian accounting framework? This paper answers that question in
the affirmative. We propose \emph{Fractional-Order Differentially Private
Stochastic Gradient Descent} (\textbf{FO-DP-SGD}), a mechanism-level extension of
DP-SGD in which the usual clipped subsampled sum query is replaced, before
Gaussian perturbation, by a recursive query with structured private-history
dependence.

Our design is motivated by fractional calculus, which provides a principled
framework for modeling nonlocal dependence, long memory, and power-law temporal
effects \cite{OldhamS74,MillerR93,SamkoKM93,Podlubny98,KilbasST06,Caputo67}.
Fractional-order methods have 
been used in optimization to enrich
iterative dynamics beyond standard first-order recursions
\cite{WeiKYW18,BaoPZ18,ShengWCW20}. We bring this perspective to private
learning at the mechanism level 
not
only at the optimizer level. In
FO-DP-SGD, the current clipped subsampled sum enters the private query with
coefficient $\beta\in(0,1]$, while previously released \emph{private}
sum-level outputs are incorporated through a normalized fractional memory
kernel. To make this history selective rather than purely lag-based, we form a
causal exponential moving average (EMA) trend from prior private releases,
measure the inconsistency of each lagged release relative to that trend, and
temper its contribution accordingly. A confidence factor derived from the EMA
magnitude further moderates the tempering when the private trend is weak or
noisy.

This viewpoint is both conceptual and technical. FO-DP-SGD is not merely
DP-SGD followed by a history-aware optimizer; rather, the release mechanism
itself is recursive and transcript-dependent before normalization and parameter
updating. This distinction matters because privacy must be established for the
released query itself, not justified only through post-processing intuition
\cite{DworkRoth14,Mironov17,WangBK19}. The key privacy observation is that,
conditioned on the realized Poisson subsampling mask and the prior private
transcript, the EMA trend, inconsistency scores, confidence factor, and
normalized memory weights are fixed. Hence, the only newly data-dependent term
in the current recursive query is the scaled clipped-sum term, yielding an
effective $\ell_2$-sensitivity bound of at most $\beta C$, where $C$ is the
clipping norm. As a result, FO-DP-SGD remains compatible with the
Poisson-subsampled Gaussian mechanism and standard R\'enyi differential privacy
(RDP) accounting under add/remove adjacency
\cite{Mironov17,WangBK19,BalleW18,DongRS22}.

This mechanism-level modification creates a new design axis for private
optimization. Instead of treating the clipped-sum Gaussian release as a fixed
primitive and modifying only the downstream optimizer, FO-DP-SGD enriches the
private release itself with selectively trusted history while preserving the
classical sum-then-noise-then-divide structure. In this way, the method opens a
new privacy--utility trade-off between fully memoryless DP-SGD and purely
optimizer-side refinements.

Our main contributions are as follows:
\begin{itemize}[leftmargin=*]
\item We introduce \emph{FO-DP-SGD}, a mechanism-level extension of DP-SGD in
which the private release combines the current clipped subsampled sum with a
fractional private memory of past sum-level releases.
\item We augment this private memory with an EMA-based private trend,
inconsistency-aware tempering, and confidence-aware scaling, yielding a causal
and transcript-dependent release mechanism that selectively suppresses stale or
unreliable history.
\item We show that the proposed method preserves the classical
sum-then-noise-then-divide structure of DP-SGD and reduces to standard DP-SGD
when $\beta=1$, making it a strict generalization of the canonical baseline.
\item We provide a privacy analysis under Poisson subsampling, add/remove
adjacency, and RDP accounting, and show that the history- and
mask-conditioned sensitivity of the recursive query remains bounded by
$\beta C$.
\item We empirically evaluate FO-DP-SGD on SVHN, CIFAR-10, and CIFAR-100 and
show that it improves privacy--utility performance and convergence behavior
relative to standard DP-SGD and relevant private optimization baselines.
\end{itemize}

\paragraph{The remainder of this work.}
Section~\ref{sec:related_work} reviews related work on private optimization,
privacy accounting, and history-aware optimization. 
Section~\ref{sec:fodpsgd_methodology} introduces the FO-DP-SGD mechanism,
including the recursive private query, fractional memory kernel, and
confidence-aware tempering rule. 
Section~\ref{sec:fodpsgd_theory} provides the theoretical analysis, including
history-conditioned sensitivity, privacy accounting, and limiting regimes. 
Section~\ref{sec:experiment} presents the experimental setup, reproducibility
details, comparisons across SVHN, CIFAR-10, and CIFAR-100, and ablation studies.
Section~\ref{sec:conclusion} concludes the paper and summarizes limitations and
future directions. 
Section~\ref{sec:broader-impact} discusses broader impacts and deployment
considerations.


\section{Related Work}
\label{sec:related_work}

Research on privacy-preserving learning spans statistical, algorithmic, and
systems-oriented directions. At its foundation, classical differential privacy
established the formal framework for reasoning about individual-level
protection under randomized computation
\cite{DworkMNS06,DworkRoth14,KasiviswanathanLNRS11}. In machine learning, this
perspective evolved into private empirical risk minimization and private
stochastic optimization, where privacy loss must be controlled across
iterative, data-dependent updates
\cite{ChaudhuriMS11,BassilyST14,BassilyFTT19}.

Among practical methods, differentially private stochastic gradient descent
(DP-SGD) has become the dominant baseline for private deep learning because of
its simple and modular mechanism: per-example gradient clipping controls
sensitivity, random subsampling provides privacy amplification, and Gaussian
perturbation supports composition through modern privacy-accounting frameworks
\cite{AbadiCGMMTZ16,SongCS13}. Its limitations are also well known: clipping
distorts the optimization signal, injected noise accumulates over many
iterations, and performance can be highly sensitive to the clipping norm, noise
multiplier, sampling rate, and learning rate
\cite{AbadiCGMMTZ16,AndrewKMSV21,AsiDFJT21,KairouzMSTTX21,LiZLRMS23}.

A major line of work has therefore focused on improving privacy accounting for
iterative Gaussian mechanisms. Moments-based accounting was later generalized
and refined through R\'enyi differential privacy, subsampled-RDP analyses, and
related composition tools, enabling tighter and more flexible end-to-end
privacy guarantees for private training
\cite{AbadiCGMMTZ16,Mironov17,WangBK19,BunS16,DongRS22,FeldmanZ21}. These works
are complementary to our approach: FO-DP-SGD does not introduce a new privacy
accountant, but instead modifies the sum-level private query while remaining
compatible with standard subsampled-Gaussian RDP accounting.

A second line of work improves private optimization itself. Examples include
adaptive clipping \cite{AndrewKMSV21}, private adaptive gradient methods
\cite{AsiDFJT21,LiZLRMS23}, DP-FTRL-style training procedures
\cite{KairouzMSTTX21}, public-data-assisted refinements
\cite{AmidGMRSSTT22}, and adaptive or sharpness-aware private optimizers such
as DP-Adam, DP-IS, SA-DP-SGD, ADP-AdamW, DP-SAT, and DP-Adam-AC
\cite{tang2024dpadambc,WeiBXY22,phan2017adaptive,chilukoti2025differentially,park2023differentially,yang2025dp}.
Other frameworks, such as PATE, improve the privacy--utility trade-off through
changes in supervision and aggregation structure rather than only the optimizer
\cite{PapernotSMR16,PapernotSESMR18}. In contrast to these optimizer-side or
training-protocol refinements, our method introduces memory directly into the
private release mechanism before Gaussian perturbation.

A separate but conceptually related literature studies fractional-order and
history-aware optimization in non-private settings. Fractional calculus
provides principled tools for modeling nonlocal dependence, long memory, and
power-law dynamics
\cite{OldhamS74,MillerR93,SamkoKM93,Podlubny98,KilbasST06,Caputo67}, and
fractional-order optimization methods have been explored as a way to alter
transient behavior, robustness, and stability through structured memory in
iterative updates \cite{WeiKYW18,BaoPZ18,ShengWCW20}. However, this literature
does not address privacy-specific requirements such as clipped sensitivity
control, Poisson subsampling, transcript-dependent releases, Gaussian
mechanism calibration, and formal privacy accounting.

Our work lies at the intersection of these directions but differs in design
emphasis. Unlike accounting-focused methods, we do not propose a new accounting
framework. Unlike optimizer-side modifications, we do not only alter the
parameter update after a private release has been formed. Unlike non-private
fractional optimization methods, we develop structured memory in a setting where
the released statistic itself must remain compatible with sensitivity control
and subsampled-Gaussian privacy analysis. FO-DP-SGD therefore introduces
history dependence directly into the \emph{sum-level private query} while
preserving the classical sum-then-noise-then-divide structure of DP-SGD.

More specifically, FO-DP-SGD combines a recursive private release at the
clipped-sum level, fractional memory over prior \emph{private} sum-level
outputs, an EMA-based private trend for transcript-dependent inconsistency
measurement, and a confidence-aware tempering rule that avoids overly
aggressive reweighting when the trend itself is weak. To our knowledge, this
combination of query-level fractional memory, private-history-dependent
tempering, and compatibility with subsampled-Gaussian DP accounting has not
been jointly developed in prior private optimization work.

\section{Methodology: Fractional-Order Differentially Private Stochastic Gradient Descent}
\label{sec:fodpsgd_methodology}

We present \emph{Fractional-Order Differentially Private Stochastic Gradient
Descent} (FO-DP-SGD), a mechanism-level extension of DP-SGD in which the
private sum-level query combines the current clipped subsampled sum with a
fractional private memory state constructed from previously released private
sum-level outputs. Unlike post-processing-based memory heuristics applied after
a private gradient has already been released, FO-DP-SGD modifies the private
query itself while preserving the classical \emph{sum-then-noise-then-divide}
structure. The method further incorporates an EMA-based private trend, a
normalized inconsistency score, and a confidence-aware scaling factor that
attenuates inconsistency-based tempering when the recent private trend is weak.
The full procedure is summarized in Algorithm~\ref{alg:fodpsgd}, while
conditional sensitivity and privacy accounting are deferred to
Section~\ref{sec:fodpsgd_theory}.

\subsection{Setup}
\label{subsec:fodpsgd_setup}

Let \(D=\{x_i\}_{i=1}^N\) denote the dataset. At iteration \(t\), Poisson
subsampling draws independent indicators \(m_{t,i}\sim\mathrm{Bernoulli}(q)\),
with mask \(m_t=(m_{t,1},\dots,m_{t,N})\in\{0,1\}^N\), sampled set
\(S_t=\{i:m_{t,i}=1\}\), and expected lot size \(L=Nq\). For each
\(i\in S_t\), let \(g_t(x_i)=\nabla_\theta \ell(\theta_t;x_i)\), and clip the
gradient as
\begin{equation}
\bar g_t(x_i)=
\frac{g_t(x_i)}
{\max\!\left(1,\frac{\|g_t(x_i)\|_2}{C}\right)},
\label{eq:fodpsgd_clip}
\end{equation}
so that \(\|\bar g_t(x_i)\|_2\le C\). The clipped subsampled sum is
\begin{equation}
s_t(D;m_t)=\sum_{i\in S_t}\bar g_t(x_i)
=\sum_{i=1}^N m_{t,i}\bar g_t(x_i).
\label{eq:fodpsgd_subsampled_sum}
\end{equation}

We adopt add/remove adjacency throughout.

\begin{definition}[Add/remove adjacency]~\cite{AbadiCGMMTZ16}
\label{def:fodpsgd_adjacency}
Two datasets \(D\) and \(D'\) are adjacent, denoted \(D\sim D'\), if one can
be obtained from the other by adding or removing exactly one example.
\end{definition}

For a fixed sampling mask \(m_t\), define the mask-conditioned sensitivity
\begin{equation}
\Delta_s(m_t)=\sup_{D\sim D'}
\|s_t(D;m_t)-s_t(D';m_t)\|_2.
\label{eq:fodpsgd_delta_s_def}
\end{equation}
Under add/remove adjacency, \(\Delta_s(m_t)\le C\), since for any fixed mask
the two subsampled sums can differ in at most one clipped example contribution.

\subsection{EMA-based confidence-aware inconsistency tempering}
\label{subsec:fodpsgd_ema_confidence}

Let \(h_t=(\tilde s_0,\tilde s_1,\dots,\tilde s_{t-1})\) denote the prior
private transcript, where \(\tilde s_r\) is the released private sum-level
output at iteration \(r\). Let \(K\in\mathbb N\) be the memory window,
\(\alpha\in(0,1]\) the fractional order, \(\lambda\ge 0\) the baseline
tempering parameter, \(\tau\ge 0\) the inconsistency-aware tempering
coefficient, \(\gamma\in(0,1]\) the EMA coefficient, \(\kappa>0\) the minimum
normalization scale, \(\varepsilon>0\) a numerical stability constant, and
\(\zeta>0\) the confidence parameter. Define \(K_t=\min\{K,t+1\}\).

To summarize recent private history causally, we define the EMA-based private
trend by \(\bar s_1^{\mathrm{EMA}}=\tilde s_0\) and, for \(t\ge 2\),
\begin{equation}
\bar s_t^{\mathrm{EMA}}
=
\gamma \tilde s_{t-1}
+
(1-\gamma)\bar s_{t-1}^{\mathrm{EMA}}.
\label{eq:fodpsgd_ema}
\end{equation}
For each lag \(j=1,\dots,K_t-1\), define the inconsistency score
\begin{equation}
\nu_{t,j}
=
\frac{\|\tilde s_{t-j}-\bar s_t^{\mathrm{EMA}}\|_2}
{\max\!\bigl(\|\bar s_t^{\mathrm{EMA}}\|_2,\kappa\bigr)+\varepsilon},
\label{eq:fodpsgd_inconsistency}
\end{equation}
and the confidence factor
\begin{equation}
\chi_t
=
\frac{\|\bar s_t^{\mathrm{EMA}}\|_2}
{\|\bar s_t^{\mathrm{EMA}}\|_2+\zeta},
\qquad 0\le \chi_t<1.
\label{eq:fodpsgd_confidence}
\end{equation}
Thus, inconsistency-aware tempering is weakened when the recent private trend
is itself weak. The raw confidence-aware fractional kernel is
\begin{equation}
a_{t,j}^{\mathrm{CA}}
=
(j+1)^{\alpha-1}
\exp\!\Bigl(-\bigl(\lambda+\chi_t\tau \nu_{t,j}\bigr)j\Bigr),
\qquad j=1,\dots,K_t-1,
\label{eq:fodpsgd_kernel}
\end{equation}
with normalized weights
\begin{equation}
\hat a_{t,j}^{\mathrm{CA}}
=
\frac{a_{t,j}^{\mathrm{CA}}}{\sum_{l=1}^{K_t-1}a_{t,l}^{\mathrm{CA}}},
\qquad
\sum_{j=1}^{K_t-1}\hat a_{t,j}^{\mathrm{CA}}=1.
\label{eq:fodpsgd_norm_weights}
\end{equation}
For \(K_t\ge 2\), the private memory state is
\begin{equation}
u_{t-1}^{\mathrm{CA}}(h_t)
=
\sum_{j=1}^{K_t-1}\hat a_{t,j}^{\mathrm{CA}}\,\tilde s_{t-j},
\label{eq:fodpsgd_memory_state}
\end{equation}
while for \(K_t=1\) we set \(u_{t-1}^{\mathrm{CA}}(h_t)=0\).

\subsection{Recursive private mechanism}
\label{subsec:fodpsgd_mechanism}

Let \(\beta\in(0,1]\) denote the mixing coefficient between the current clipped
subsampled sum and the private memory state. The recursive sum-level private
query is
\begin{equation}
r_t(D;m_t,h_t)
=
\beta\, s_t(D;m_t)
+
(1-\beta)\,u_{t-1}^{\mathrm{CA}}(h_t).
\label{eq:fodpsgd_recursive_query}
\end{equation}
At \(t=0\), one has \(K_t=1\) and hence
\(r_0(D;m_0,h_0)=\beta\, s_0(D;m_0)\). The private sum-level release is
\begin{equation}
\tilde s_t
=
r_t(D;m_t,h_t)+Z_t,
\qquad
Z_t\sim\mathcal N(0,\sigma^2 C^2 I),
\label{eq:fodpsgd_release}
\end{equation}
where \(\sigma>0\) is the noise multiplier. The released noisy gradient is
\(\tilde g_t=\tilde s_t/L\), and the model update is
\begin{equation}
\theta_{t+1}=\theta_t-\eta \tilde g_t,
\label{eq:fodpsgd_update}
\end{equation}
where \(\eta>0\) is the learning rate. The complete proposed
procedure is given in Algorithm~\ref{alg:fodpsgd}.

\begin{algorithm}[H]
\caption{Fractional-Order Differentially Private Stochastic Gradient Descent with EMA-Based Confidence-Aware Inconsistency Tempering}
\label{alg:fodpsgd}
\begin{algorithmic}[1]
\Require Dataset $D=\{x_i\}_{i=1}^N$, loss $\ell(\theta;x)$, clipping norm $C$, subsampling probability $q$, expected lot size $L=Nq$, noise multiplier $\sigma$, learning rate $\eta$, memory window $K$, fractional order $\alpha\in(0,1]$, baseline tempering parameter $\lambda\ge 0$, inconsistency-aware tempering coefficient $\tau\ge 0$, EMA coefficient $\gamma\in(0,1]$, mixing coefficient $\beta\in(0,1]$, minimum normalization scale $\kappa>0$, confidence parameter $\zeta>0$, numerical stability constant $\varepsilon>0$
\State Initialize parameters $\theta_0$, private release buffer $\mathcal B\gets [\,]$, and EMA trend state as undefined
\For{$t=0,1,\dots,T-1$}
    \State Draw $m_{t,i}\sim\mathrm{Bernoulli}(q)$ independently for $i=1,\dots,N$ and define $S_t=\{i:m_{t,i}=1\}$
    \State Compute the clipped subsampled sum $s_t=\sum_{i\in S_t}\bar g_t(x_i)$ using \eqref{eq:fodpsgd_clip}
    \State Set $K_t\gets \min\{K,t+1\}$
    \If{$K_t=1$}
        \State $u_{t-1}^{\mathrm{CA}}\gets 0$, \quad $r_t\gets \beta s_t$
    \Else
        \State Retrieve EMA trend state $\bar s_t^{\mathrm{EMA}}$
        \For{$j=1,\dots,K_t-1$}
            \State Compute $\nu_{t,j}$ and $a_{t,j}^{\mathrm{CA}}$ using \eqref{eq:fodpsgd_inconsistency} and \eqref{eq:fodpsgd_kernel}
        \EndFor
        \State Compute \(\chi_t\) using \eqref{eq:fodpsgd_confidence} and normalize \(a_{t,j}^{\mathrm{CA}}\) to obtain \(\hat a_{t,j}^{\mathrm{CA}}\)
        \State Form \(u_{t-1}^{\mathrm{CA}}=\sum_{j=1}^{K_t-1}\hat a_{t,j}^{\mathrm{CA}}\,\mathcal B[j]\)
        \State Form \(r_t\gets \beta s_t+(1-\beta)u_{t-1}^{\mathrm{CA}}\)
    \EndIf
    \State Draw \(Z_t\sim \mathcal N(0,\sigma^2 C^2 I)\) and set \(\tilde s_t\gets r_t+Z_t\)
    \State Set \(\tilde g_t\gets \tilde s_t/L\) and update \(\theta_{t+1}\gets\theta_t-\eta\tilde g_t\)
    \State Insert \(\tilde s_t\) at the front of \(\mathcal B\); if needed, remove the oldest entry
    \If{$t=0$}
        \State Set \(\bar s_1^{\mathrm{EMA}}\gets \tilde s_0\)
    \Else
        \State Update \(\bar s_{t+1}^{\mathrm{EMA}}\gets \gamma \tilde s_t+(1-\gamma)\bar s_t^{\mathrm{EMA}}\)
    \EndIf
\EndFor
\State \Return \(\theta_T\)
\end{algorithmic}
\end{algorithm}

\paragraph{Deferred analysis.}
The conditional sensitivity argument, compatible R\'enyi privacy accounting
under Poisson subsampling, and interpretation of parameter regimes are deferred
to Section~\ref{sec:fodpsgd_theory}.


\section{Experiments}
\label{sec:experiment}

We empirically evaluate FO-DP-SGD on multiple benchmark datasets and privacy
configurations to assess four questions: (i) whether FO-DP-SGD improves utility
relative to standard DP-SGD and related private optimization baselines,
(ii) whether its gains arise from the proposed mechanism-level recursive query
rather than post-processing alone, (iii) how the key parameters
\(\beta\), \(\alpha\), and \(K\) affect the privacy--utility trade-off, and
(iv) what computational overhead is introduced by the proposed method.

\subsection{Experimental Setup and Reproducibility}
\label{subsec:experimental-setup-reproducibility}

We evaluate all methods on SVHN, CIFAR-10, and CIFAR-100 under the same differentially private training protocol. 
For each dataset, we use a fixed train/test subset and apply the same preprocessing pipeline across all methods. 
All input images are normalized and flattened before being passed to the classifier. 
Unless otherwise stated, experiments use a fixed training subset of 5,000 examples and a fixed test subset of 2,000 examples. All methods are evaluated using the same neural network architecture to ensure a fair comparison. 
The model consists of a fully connected classifier with input dimension matching the flattened image size, two hidden layers with 64 and 32 units, respectively, and a final classification layer. 
The hidden layers use hyperbolic tangent activations. 
The same initialization protocol is used across all methods and random seeds. We use Gaussian-noise-based differentially private training following the standard DP-SGD framework~\cite{AbadiCGMMTZ16}. 
Unless otherwise stated, we use clipping norm $C=1.0$, noise multiplier $\sigma=1.1$, sampling rate $q=0.04$, and privacy parameter $\delta=10^{-5}$. 
The accumulated privacy budget $\varepsilon$ is computed using a Rényi differential privacy accountant. 
For FO-DP-SGD, the effective privacy accounting depends on the fractional mixing parameter $\beta$, and we report the corresponding accumulated $\varepsilon$ for each configuration. The main FO-DP-SGD configuration uses $\beta=0.90$, fractional order $\alpha=0.80$, and memory depth $K=8$. 
The learning rate is set to $0.8$ for FO-DP-SGD. 
For the ablation studies, we vary one parameter at a time while keeping the remaining parameters fixed. 
Specifically, we evaluate $\beta \in \{1.00,0.95,0.90,0.80,0.65\}$, $\alpha \in \{0.20,0.50,0.80\}$, and $K \in \{1,2,8,32\}$. We compare the proposed method against several differentially private optimization baselines, including DP-SGD~\cite{AbadiCGMMTZ16}, DP-Adam~\cite{tang2024dpadambc}, DP-IS~\cite{WeiBXY22}, SA-DP-SGD~\cite{phan2017adaptive}, ADP-AdamW~\cite{chilukoti2025differentially}, DP-SAT~\cite{park2023differentially}, and DP-Adam-AC~\cite{yang2025dp}. 
In our implementation, DP-SGD corresponds to the FO-DP-SGD variant with $\beta=1.00$, which removes the fractional privacy--memory modulation and serves as the non-fractional baseline. 
DP-Adam and ADP-AdamW represent adaptive moment-based private optimization methods, DP-IS incorporates importance-aware private sampling, SA-DP-SGD and DP-SAT represent adaptive or sharpness-aware private training strategies, and DP-Adam-AC uses adaptive clipping within an Adam-style private optimizer. 
All baselines are trained under the same dataset split, model architecture, privacy parameters, and evaluation protocol whenever applicable. 
Baseline-specific hyperparameters are selected according to their best validation behavior under the considered experimental budget. All main comparison experiments are repeated over five independent random seeds. 
We report mean $\pm$ standard deviation across seeds. 
For final test accuracy, we additionally report 95\% confidence intervals computed using Student's $t$ interval $\bar{x} \pm t_{0.975,n-1}\frac{s}{\sqrt{n}},$ where $\bar{x}$ is the sample mean, $s$ is the sample standard deviation, and $n$ is the number of independent runs. 
Training curves report the representative trajectory unless otherwise specified, while final comparison tables summarize the multi-seed statistics. To support reproducibility, we keep the dataset splits, model architecture, privacy parameters, optimizer hyperparameters, and random seeds fixed across methods. 
The implementation records per-run final accuracy, best accuracy, final loss, final privacy budget, and runtime. 
The multi-seed summary tables are generated directly from the raw seed-level results, which include the algorithm name, seed, final accuracy, best accuracy, final loss, final $\varepsilon$, and runtime. 
This ensures that the reported mean, standard deviation, and confidence intervals can be regenerated from the saved experimental logs.
\subsection{Cross-Dataset Accuracy Comparison}
\label{subsec:cross-dataset-accuracy}

Figure~\ref{fig:cross-dataset-accuracy} compares the test accuracy trajectories of the proposed FO-DP-SGD method against several differentially private optimizer baselines across three benchmark datasets: SVHN, CIFAR-10, and CIFAR-100. 
The evaluated baselines include DP-SGD/FO-DP-SGD with $\beta=1.00$, DP-Adam, DP-AdamW, DP-IS, DP-SAM, DP-SAT, and DP-Adam-AC. 
The proposed configuration uses $\beta=0.90$, $\alpha=0.80$, and $K=8$.

\begin{figure}[H]
\centering
\includegraphics[height=3.5cm,width=4.5cm]{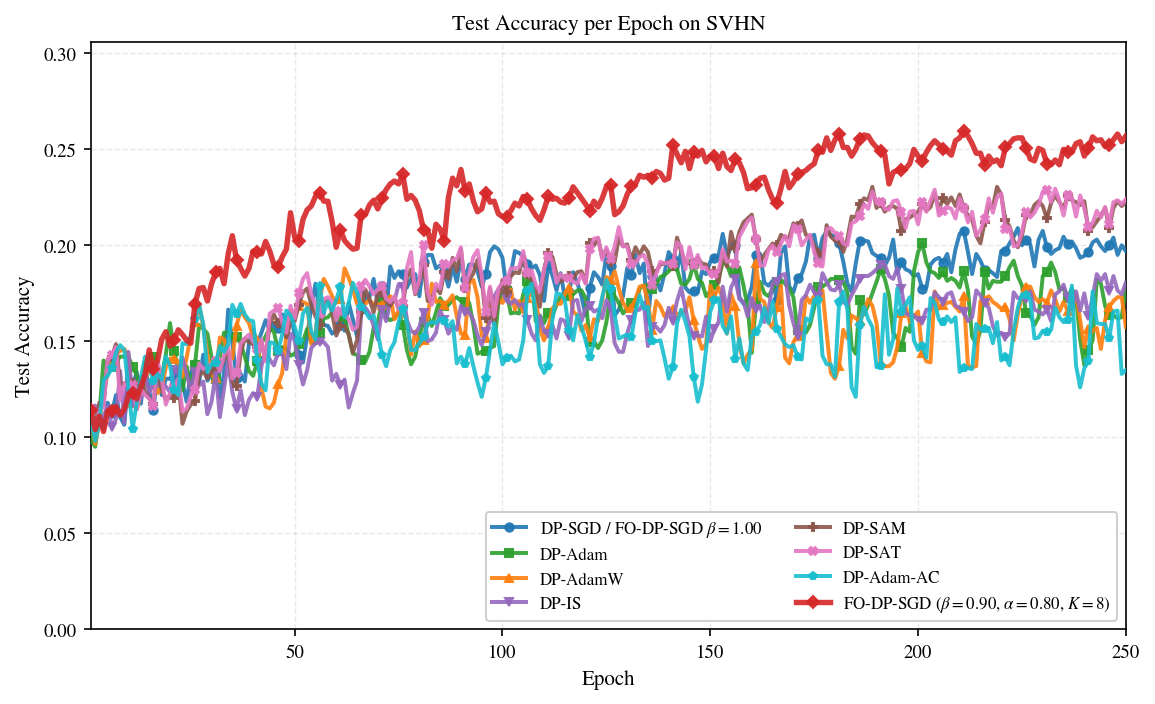}\includegraphics[height=3.5cm,width=4.5cm]{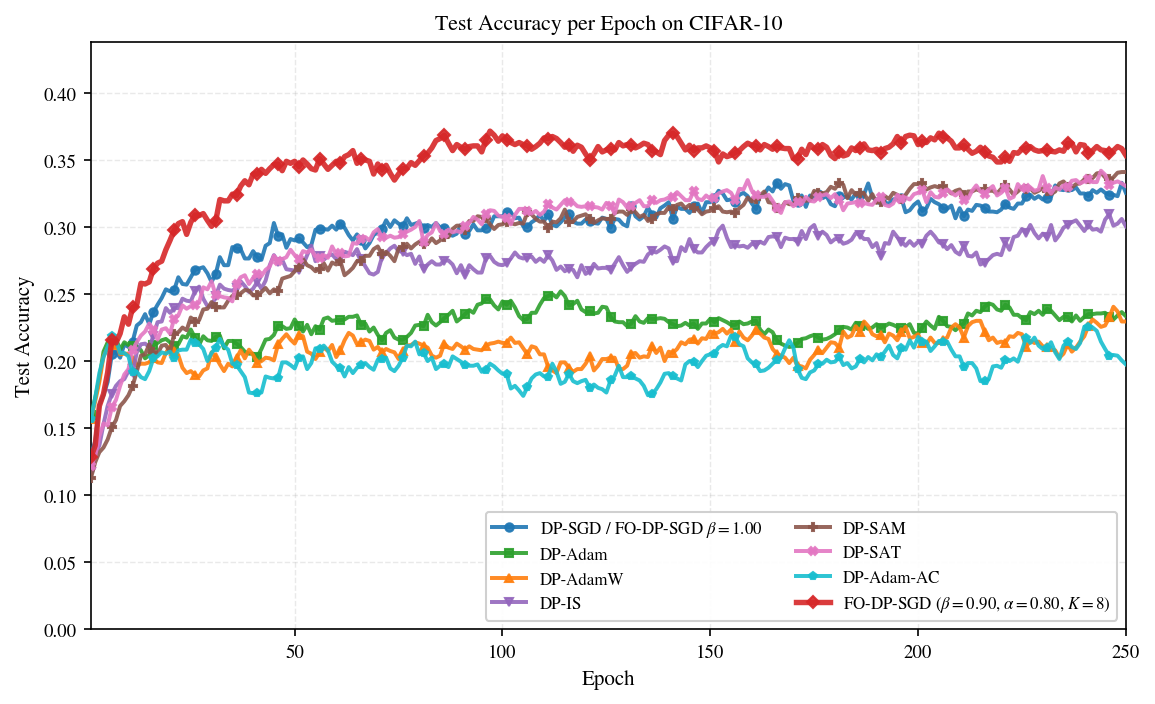}\includegraphics[height=3.5cm,width=4.5cm]{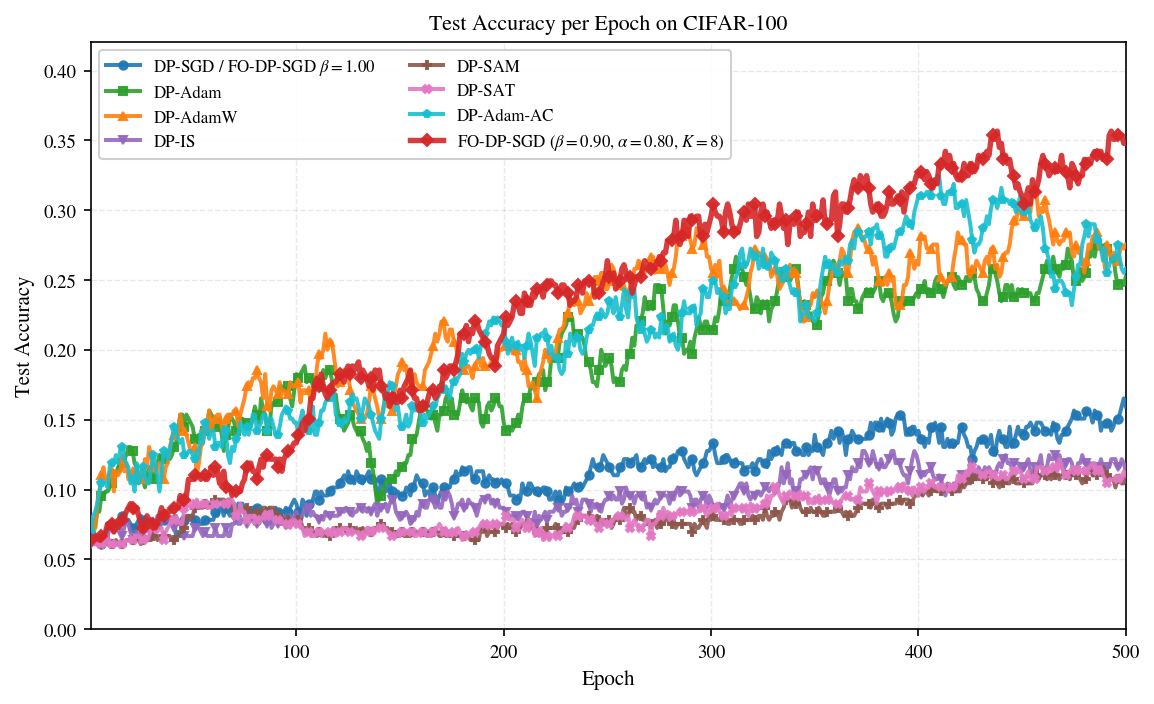}
    \caption{
    Cross-dataset comparison of test accuracy over training epochs on SVHN, CIFAR-10, and CIFAR-100. 
    The proposed FO-DP-SGD method with $\beta=0.90$, $\alpha=0.80$, and $K=8$ consistently achieves the strongest or among the strongest accuracy trajectories across all datasets, demonstrating improved robustness under differentially private training.
    }
\label{fig:cross-dataset-accuracy}
\end{figure}
As shown in Figure~\ref{fig:cross-dataset-accuracy}, the proposed FO-DP-SGD method consistently provides strong accuracy performance across all evaluated datasets. 
On SVHN, the proposed method achieves the highest accuracy trajectory among the compared methods for most of the training process. 
Although the absolute accuracy is lower than in CIFAR-10 due to the specific experimental setting and private training noise, the proposed method shows a clear advantage over the competing DP optimizers, particularly during the middle and late training stages.

On CIFAR-10, the proposed FO-DP-SGD configuration rapidly improves during the early epochs and reaches the highest overall accuracy. 
Its trajectory remains consistently above the DP-Adam, DP-AdamW, DP-SAM, DP-SAT, DP-Adam-AC, and DP-SGD/FO-DP-SGD $\beta=1.00$ baselines. 

The CIFAR-100 results further highlight the advantage of the proposed method in a more challenging classification setting. 
Because CIFAR-100 contains a larger number of classes, the optimization problem is more difficult and the accuracy curves exhibit stronger fluctuations. 
Nevertheless, the proposed FO-DP-SGD method achieves the best final trajectory and maintains a clear advantage over the standard DP-SGD/FO-DP-SGD $\beta=1.00$ baseline as well as the adaptive DP optimizer variants. 

\subsection{Cross-Dataset Privacy--Utility Tradeoff over the Fractional Mixing Parameter}
\label{subsec:cross-dataset-privacy-utility-beta}

Figure~\ref{fig:cross-dataset-privacy-utility-beta} compares the privacy--utility tradeoff of FO-DP-SGD under different values of the fractional mixing parameter $\beta$ across CIFAR-10, CIFAR-100, and SVHN. 
For each dataset, the horizontal axis reports the accumulated privacy budget $\varepsilon$, while the vertical axis reports test accuracy. 
We evaluate $\beta \in \{1.00,0.95,0.90,0.80,0.65\}$ while keeping the remaining optimization and privacy parameters fixed within each dataset.

\begin{figure}[H]
\centering
\includegraphics[height=3.5cm,width=4.5cm]{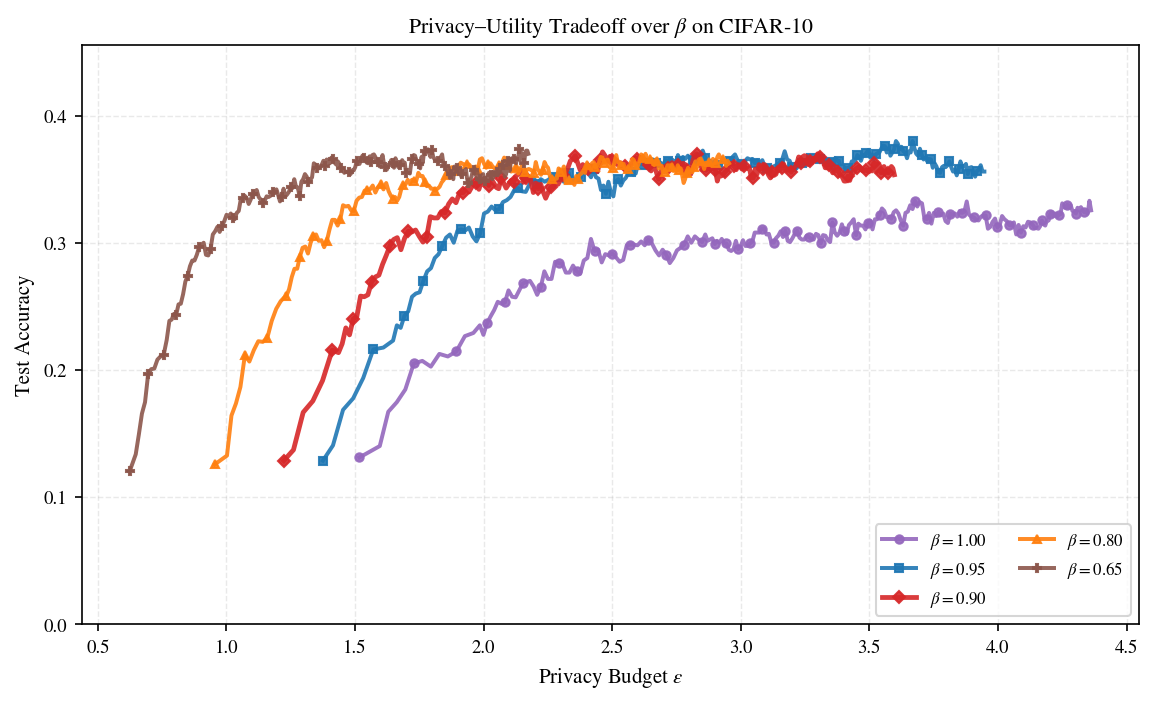}\includegraphics[height=3.5cm,width=4.5cm]{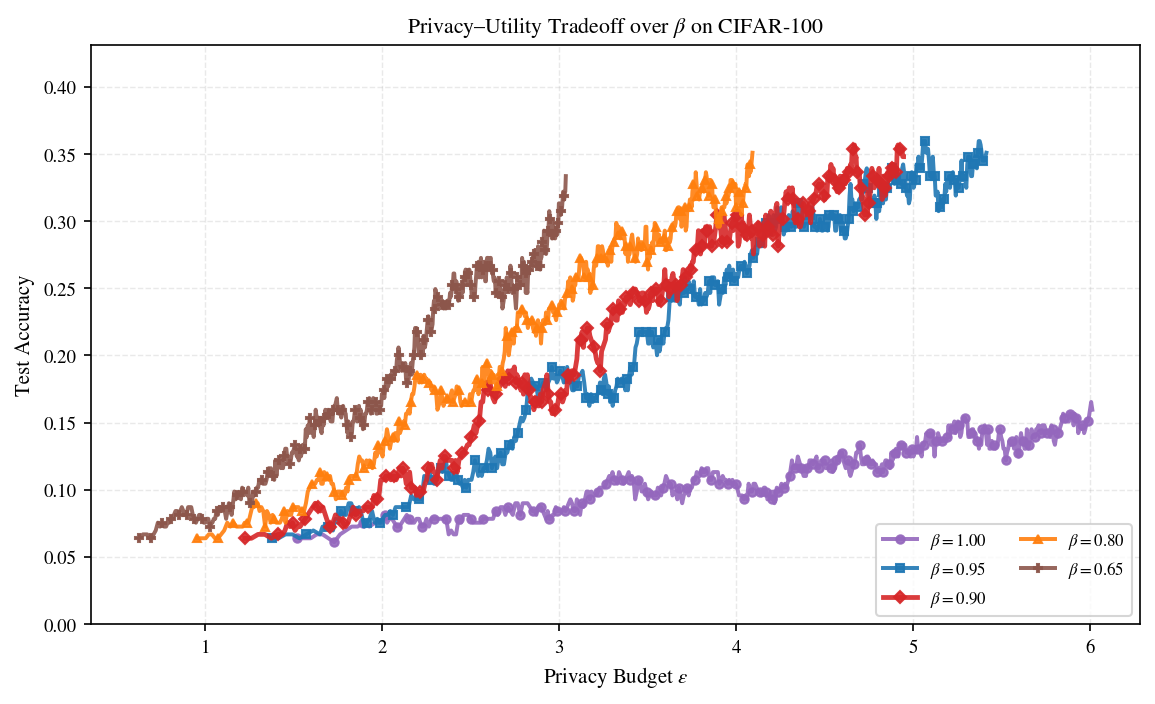}\includegraphics[height=3.5cm,width=4.5cm]{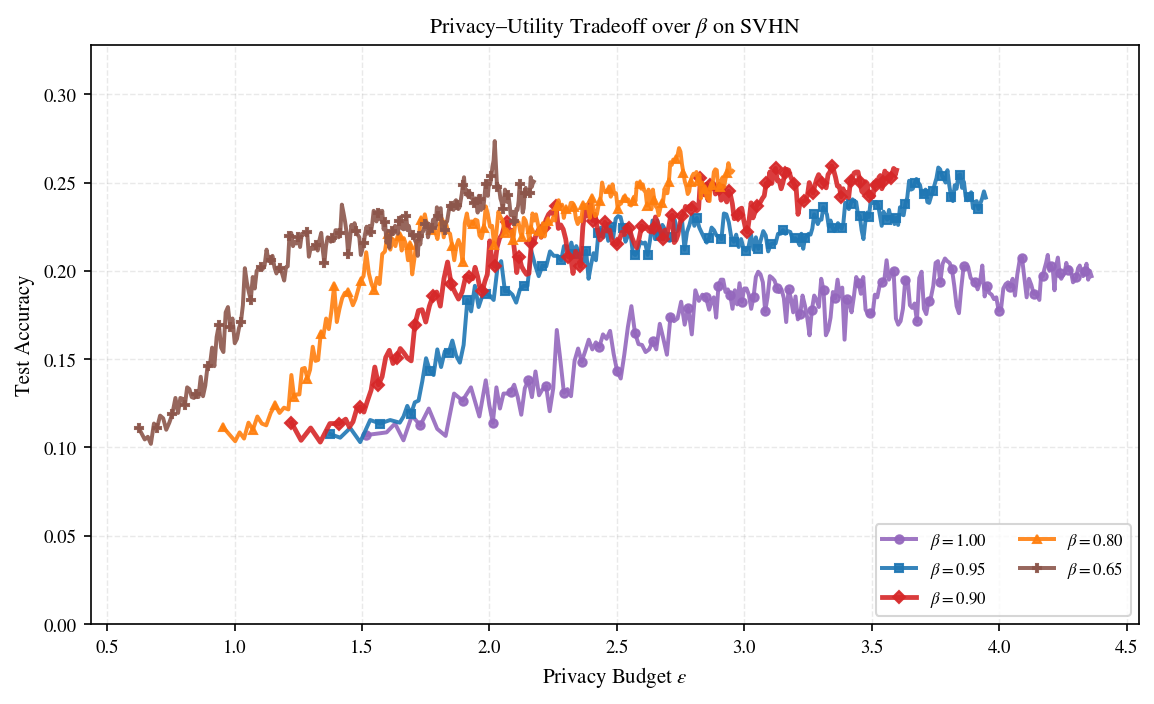}
    \caption{
    Cross-dataset privacy--utility tradeoff over the fractional mixing parameter $\beta$ on CIFAR-10, CIFAR-100, and SVHN. 
    Each curve shows test accuracy as a function of the accumulated privacy budget $\varepsilon$. 
    }
 \label{fig:cross-dataset-privacy-utility-beta}
\end{figure}

As shown in Figure~\ref{fig:cross-dataset-privacy-utility-beta}, the fractional mixing parameter $\beta$ strongly influences the tradeoff between model utility and privacy expenditure across all three datasets. 
The baseline configuration $\beta=1.00$ generally produces weaker tradeoff curves, achieving lower accuracy for a given privacy budget compared with the fractional-memory variants. 
This behavior is especially visible on CIFAR-100 and SVHN, where the $\beta=1.00$ curve remains substantially below the curves obtained with $\beta<1$ throughout most of training. 

On CIFAR-10, the fractional configurations achieve high accuracy at substantially smaller privacy budgets than the $\beta=1.00$ baseline. 
In particular, $\beta=0.65$ and $\beta=0.80$ rapidly improve accuracy at low $\varepsilon$, showing that stronger memory contribution can accelerate utility gain under limited privacy expenditure. 
Meanwhile, $\beta=0.90$ and $\beta=0.95$ also reach competitive high-accuracy regimes, providing a smoother transition toward the best final utility. 

On CIFAR-100, the privacy--utility tradeoff is more difficult because the classification task is substantially more complex. 
The $\beta=1.00$ curve remains far below the other variants, while $\beta=0.80$, $\beta=0.90$, and $\beta=0.95$ progressively achieve higher accuracy as the privacy budget increases. 
The lower value $\beta=0.65$ reaches competitive accuracy at a smaller privacy budget, suggesting that stronger memory weighting can be especially useful when privacy cost must be tightly controlled. 
However, intermediate settings such as $\beta=0.90$ remain attractive because they provide strong final utility while still reducing $\varepsilon$ relative to the $\beta=1.00$ baseline.

On SVHN, the same general pattern appears: all fractional-memory variants outperform the $\beta=1.00$ baseline in the privacy--utility plane. 
The $\beta=0.65$ and $\beta=0.80$ settings achieve strong early utility at relatively low privacy budgets, while $\beta=0.90$ and $\beta=0.95$ provide competitive accuracy in the later stages. 
This consistency across SVHN, CIFAR-10, and CIFAR-100 suggests that the role of $\beta$ is not dataset-specific; rather, it acts as a general mechanism for controlling the balance between immediate private gradient information and stabilizing fractional-memory effects.



\section{Conclusion and Future Work}
\label{sec:conclusion}

We introduced FO-DP-SGD, a fractional-order recursive extension of DP-SGD that incorporates memory directly into the private release mechanism. Unlike optimizer-level modifications applied only after privatization, FO-DP-SGD modifies the sum-level private query itself and integrates historical information through a fractional, tempered, and confidence-aware memory construction.

Our analysis showed that the proposed mechanism remains compatible with standard differential privacy accounting under Poisson subsampling, while the recursive query admits a history-conditioned sensitivity bound scaled by \(\beta\). This preserves the standard accounting pipeline while enabling a mechanism-level trade-off between current-step sensitivity control and memory-enhanced optimization dynamics under privacy noise.

Empirically, FO-DP-SGD improved convergence behavior, predictive utility, and privacy--utility performance relative to standard DP-SGD and the evaluated private baselines. In particular, across the considered experimental settings, it achieved stronger accuracy under fixed privacy budgets and required fewer training epochs to reach target utility levels.

\paragraph{Limitations and future work.}
FO-DP-SGD mitigates unreliable history through fractional weighting,
inconsistency-aware tempering, and confidence-aware scaling, but it does not
explicitly denoise previous private releases. Since each historical release
\(\tilde{s}_{t-j}\) contains both the recursive query signal and injected
Gaussian noise, the memory state may inherit useful directional information as
well as accumulated release noise. Future work will therefore investigate
explicit transcript-based denoising or filtering mechanisms for private memory,
such as reliability-weighted shrinkage, uncertainty-aware filtering, or
Kalman-style smoothing applied only to previously released private quantities.
Because such operations can be defined as functions of the private transcript
and public hyperparameters, they may be treated as post-processing and need not
consume additional privacy budget, while potentially improving the stability of
long-memory private optimization.
\section{Broader Impact}
\label{sec:broader-impact}

This work contributes to privacy-preserving machine learning by improving the utility of differentially private optimization. A positive societal impact is that more accurate private training methods may support learning from sensitive data in domains such as healthcare, finance, education, and mobile or behavioral analytics without requiring unrestricted exposure of individual-level records.

At the same time, formal privacy guarantees must be interpreted carefully. FO-DP-SGD provides privacy protection only under the stated clipping, sampling, noise, adjacency, and accounting assumptions. Incorrect privacy accounting, inappropriate hyperparameter choices, excessive privacy budgets, or deployment under distribution shift may weaken the intended protection. In addition, improved private optimization does not by itself address fairness, robustness, model misuse, or all forms of information leakage. Therefore, practical deployments should combine formal privacy accounting with careful validation, auditing, and application-specific risk assessment.


\bibliographystyle{plainnat}
\bibliography{references}

\newpage

\clearpage
\appendix

\section*{Appendix Overview}
\label{app:overview}
\addcontentsline{toc}{section}{Appendix Overview}

This appendix provides additional experimental evidence and theoretical analysis
supporting the main paper. To improve readability and navigation, the
supplementary material is organized into two main parts. 
Appendix~\ref{app:additional-results} contains additional empirical results,
ablation studies, robustness analyses, runtime comparisons, and statistical
summaries. Appendix~\ref{app:theoretical-analysis} provides the theoretical
analysis and interpretation of FO-DP-SGD, including privacy validity,
mechanism-level novelty, recursive release decomposition, parameter
interpretations, and limiting regimes.

\paragraph{Appendix~\ref{app:additional-results}: Additional Experimental Results.}
This part complements the main experimental section with supplementary results.
It includes ablation studies on fractional memory depth, mechanism-level memory
insertion, query-level memory design, clipping-norm sensitivity,
privacy--utility behavior, runtime overhead, and final-accuracy statistics
reported with mean $\pm$ standard deviation and 95\% confidence intervals.

\paragraph{Appendix~\ref{app:theoretical-analysis}: Theoretical Analysis and Interpretation.}
This part analyzes the FO-DP-SGD mechanism from a theoretical perspective. It
establishes the history-conditioned query structure, sensitivity of the
recursive query, privacy accounting under the Poisson-subsampled Gaussian
mechanism, adaptive R\'enyi DP composition, signal--memory--noise decomposition,
and the roles of the main parameters
$\alpha,\lambda,\tau,\gamma,\kappa,\zeta,\beta$, and $K$.

\paragraph{Navigation Guide.}
For convenience, the appendix is organized as follows:
\begin{itemize}

    \item Appendix~\ref{app:additional-results}: supplementary experimental results and ablation studies.
    \item Appendix~\ref{subsec:joint-beta-utility-privacy-cifar100}: joint effect of the fractional mixing parameter on utility and privacy.
    \item Appendix~\ref{subsec:k-ablation-cifar10}: effect of fractional memory depth.
    \item Appendix~\ref{subsec:results_mechanism_privacy_utility}: mechanism-level privacy--utility advantage.
    \item Appendix~\ref{subsec:results_mechanism_training_utility}: mechanism-level memory improves training utility.
    \item Appendix~\ref{subsec:runtime_comparison_cifar10}: absolute runtime comparison on CIFAR-10.
    \item Appendix~\ref{subsec:relative_runtime_overhead_cifar10}: relative runtime overhead on CIFAR-10.
    \item Appendix~\ref{subsec:fixed_sigma_varying_clipping}: effect of clipping norm under fixed noise.
    \item Appendix~\ref{subsec:final-accuracy-analysis}: final accuracy analysis with mean $\pm$ standard deviation and 95\% confidence intervals.
    \item Appendix~\ref{subsec:ablation_query_memory_design}: ablation on query-level memory design.
    \item Appendix~\ref{app:theoretical-analysis}: theoretical analysis and interpretation of FO-DP-SGD.
    \item Appendix~\ref{subsec:fodpsgd_theory_privacy_accounting}: privacy validity, mechanism-level novelty, and privacy accounting.
    \item Appendix~\ref{subsec:fodpsgd_theory_decomposition}: signal--memory--noise decomposition.
    \item Appendix~\ref{subsec:fodpsgd_theory_interpretation}: interpretation of EMA-based confidence-aware tempering.
    \item Appendix~\ref{subsec:fodpsgd_theory_special_cases}: special cases and limiting regimes.
\end{itemize}

\clearpage

\section{Additional Experimental Results}
\label{app:additional-results}

This section provides supplementary empirical results that support the main
experimental findings. The additional results examine how FO-DP-SGD behaves
under different memory depths, memory insertion mechanisms, query-level memory
designs, clipping norms, runtime constraints, and multi-seed statistical
evaluations. These experiments clarify which components of the proposed method
are responsible for the observed privacy--utility improvements.

\subsection{Joint Effect of the Fractional Mixing Parameter on Utility and Privacy}
\label{subsec:joint-beta-utility-privacy-cifar100}

Figure~\ref{fig:joint-beta-utility-privacy-cifar100} jointly analyzes the effect of the fractional mixing parameter $\beta$ on both test accuracy and privacy expenditure on CIFAR-100. 
The left panel reports test accuracy as a function of training epoch, while the right panel reports the accumulated privacy budget $\varepsilon$ over the same training horizon. 
We evaluate $\beta \in \{1.00,0.95,0.90,0.80,0.65\}$ while keeping all other optimization and privacy parameters fixed.

\begin{figure}[H]
\centering
\includegraphics[height=4.5cm,width=5cm]{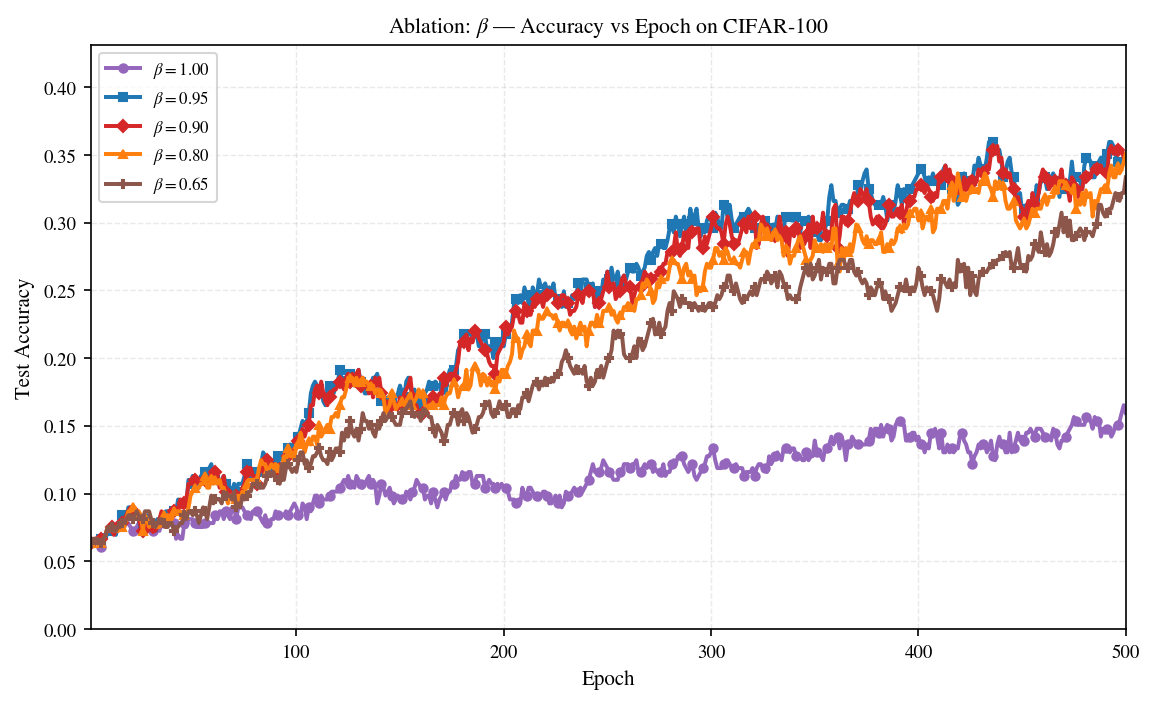}\includegraphics[height=4.5cm,width=5cm]{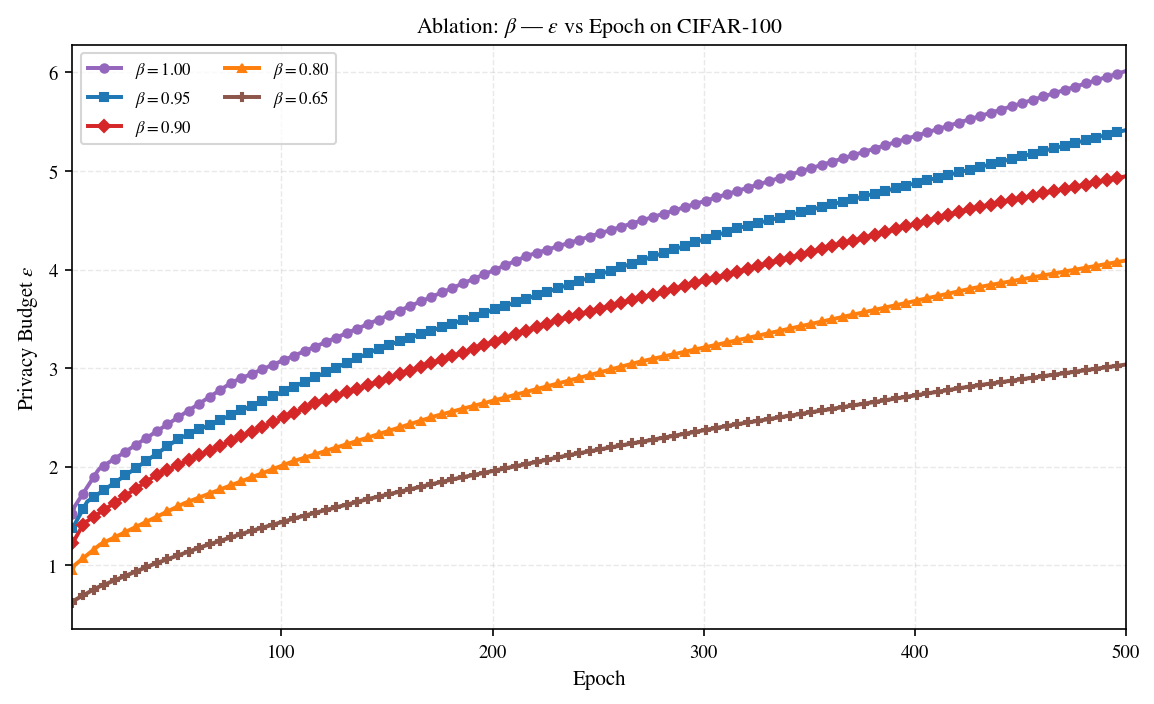}
    \caption{    Joint ablation of the fractional mixing parameter $\beta$ on CIFAR-100. 
    Left: test accuracy versus training epoch. 
    Right: accumulated privacy budget $\varepsilon$ versus training epoch. 
    }
 \label{fig:joint-beta-utility-privacy-cifar100}
\end{figure}
As shown in the left panel of Figure~\ref{fig:joint-beta-utility-privacy-cifar100}, the fractional-memory variants with $\beta<1$ consistently outperform the $\beta=1.00$ baseline in terms of test accuracy. 
The $\beta=1.00$ setting, which corresponds to the weakest-memory or non-fractional behavior, remains substantially below the other curves throughout training and reaches a much lower final accuracy. 
This indicates that relying primarily on the current noisy clipped gradient is insufficient for effective optimization under differential privacy, especially on the more challenging CIFAR-100 classification task.

Among the fractional-memory settings, $\beta=0.95$, $\beta=0.90$, and $\beta=0.80$ achieve the strongest accuracy trajectories. 
These configurations improve more rapidly than the baseline and maintain higher test accuracy during the middle and late stages of training. 
The $\beta=0.65$ setting also improves over $\beta=1.00$, but it generally lags behind the intermediate $\beta$ values, suggesting that excessive down-weighting of the current gradient branch may reduce responsiveness to the present optimization direction.

The right panel of Figure~\ref{fig:joint-beta-utility-privacy-cifar100} shows that the privacy budget $\varepsilon$ increases monotonically with the number of training epochs for every value of $\beta$, as expected under repeated differentially private updates. 
However, the growth rate of $\varepsilon$ is strongly controlled by $\beta$. 
The largest privacy cost is obtained when $\beta=1.00$, while progressively smaller values of $\beta$ yield lower accumulated privacy budgets. 
In particular, $\beta=0.65$ produces the smallest final $\varepsilon$, followed by $\beta=0.80$, $\beta=0.90$, and $\beta=0.95$.

\subsection{Effect of Fractional Memory Depth}
\label{subsec:k-ablation-cifar10}

Figure~\ref{fig:k-ablation-cifar10} evaluates the effect of the fractional
memory depth $K$ on the test accuracy of FO-DP-SGD on CIFAR-10. The parameter
$K$ controls how many previous fractional-memory directions are retained and
incorporated into the current private update. We compare
$K \in \{1,2,8,32\}$, where $K=1$ corresponds to the no-memory setting.

\begin{figure}[H]
    \centering
    \includegraphics[width=0.82\linewidth]{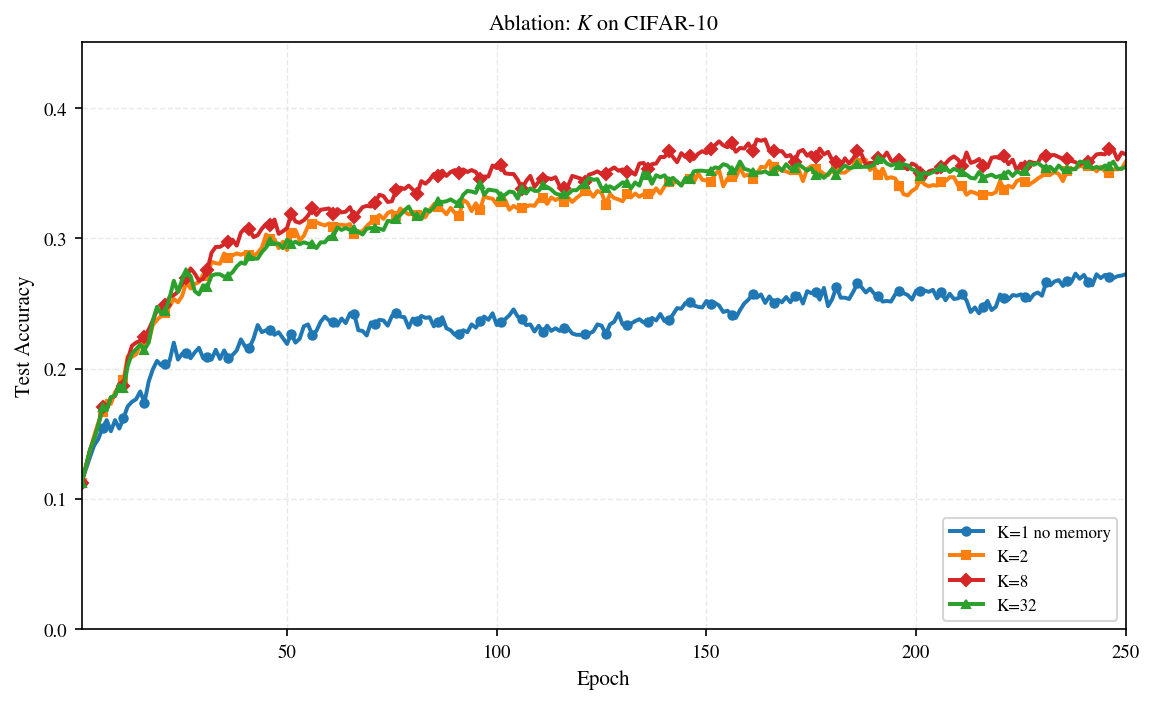}
    \caption{
    Ablation study of the fractional memory depth $K$ on CIFAR-10. 
    The curve with $K=1$ corresponds to the no-memory variant. 
    Increasing $K$ improves accuracy by incorporating longer-range optimization
    history, while an intermediate memory depth, particularly $K=8$, provides
    the strongest overall performance.
    }
    \label{fig:k-ablation-cifar10}
\end{figure}

As shown in Figure~\ref{fig:k-ablation-cifar10}, the no-memory configuration
$K=1$ performs substantially worse than the memory-enhanced variants. Although
$K=1$ improves during the early training stage, its accuracy quickly saturates
at a noticeably lower level compared with $K=2$, $K=8$, and $K=32$. This
confirms that relying only on the current noisy clipped gradient is insufficient
for achieving strong utility under differentially private training.

Introducing fractional memory improves the optimization trajectory. Both $K=2$
and $K=32$ significantly outperform the no-memory baseline across nearly all
epochs, indicating that historical gradient information helps stabilize learning
in the presence of DP noise. The memory mechanism acts as a smoothing component
that aggregates useful directional information from previous updates, thereby
reducing the instability caused by stochastic sampling and Gaussian
perturbations.

Among the evaluated memory depths, $K=8$ achieves the strongest overall
accuracy profile. It rises quickly during the early phase of training and
remains among the top-performing configurations throughout the later epochs.
Compared with $K=2$, the $K=8$ setting benefits from a richer memory window,
which provides more stable long-range optimization information. Compared with
$K=32$, however, $K=8$ avoids excessive reliance on older historical directions,
which may become less aligned with the current optimization landscape.

These results suggest that the memory depth $K$ introduces an important
tradeoff. A very small memory window may not sufficiently reduce the variance of
private updates, while an overly long memory window may include stale gradient
information. An intermediate value such as $K=8$ provides a favorable balance
between stability and adaptivity, improving accuracy without over-smoothing the
optimization dynamics. This supports the use of $K=8$ as the default memory
depth in our main FO-DP-SGD configuration.

\subsection{Mechanism-Level Privacy--Utility Advantage}
\label{subsec:results_mechanism_privacy_utility}

To evaluate whether the benefit of FO-DP-SGD comes from inserting memory
directly into the private query, rather than merely smoothing already privatized
gradients, we compare three methods under the same private training setting:
standard DP-SGD, the post-processing fractional-memory baseline
(Post-FM-DP-SGD), and the proposed FO-DP-SGD mechanism.

Standard DP-SGD first forms the clipped subsampled sum $s_t(D;m_t)$, adds
Gaussian noise, and then normalizes by the expected lot size $L$:
\begin{equation}
\tilde s_t^{\mathrm{std}}
=
s_t(D;m_t)+Z_t,
\qquad
Z_t\sim \mathcal N(0,\sigma^2 C^2 I),
\qquad
\tilde g_t^{\mathrm{std}}
=
\frac{\tilde s_t^{\mathrm{std}}}{L}.
\label{eq:std_dpsgd_release_results}
\end{equation}

The post-processing baseline, denoted Post-FM-DP-SGD, applies fractional memory
only after the standard private release in
\eqref{eq:std_dpsgd_release_results} has already been formed. Specifically, it
constructs the post-processing memory state from previously released private
gradients,
\begin{equation}
u_{t-1}^{\mathrm{post}}
=
\sum_{j=1}^{K_t-1}
\hat a_{t,j}^{\mathrm{post}}
\tilde g_{t-j}^{\mathrm{std}},
\label{eq:post_fm_memory_results}
\end{equation}
where the normalized fractional-memory weights are computed using the same
confidence-aware tempered rule as in FO-DP-SGD:
\begin{equation}
\hat a_{t,j}^{\mathrm{post}}
=
\frac{
(j+1)^{\alpha-1}
\exp\!\left(-(\lambda+\chi_t^{\mathrm{post}}\tau\nu_{t,j}^{\mathrm{post}})j\right)
}{
\sum_{\ell=1}^{K_t-1}
(\ell+1)^{\alpha-1}
\exp\!\left(-(\lambda+\chi_t^{\mathrm{post}}\tau\nu_{t,\ell}^{\mathrm{post}})\ell\right)
}.
\label{eq:post_fm_weights_results}
\end{equation}
The final post-processing update direction is then
\begin{equation}
v_t^{\mathrm{post}}
=
\beta \tilde g_t^{\mathrm{std}}
+
(1-\beta)u_{t-1}^{\mathrm{post}},
\label{eq:post_fm_update_direction_results}
\end{equation}
and the parameter update is
\begin{equation}
\theta_{t+1}^{\mathrm{post}}
=
\theta_t
-
\eta_{\mathrm{post}} v_t^{\mathrm{post}}.
\label{eq:post_fm_parameter_update_results}
\end{equation}
Equivalently, substituting
$\tilde g_t^{\mathrm{std}}=(s_t(D;m_t)+Z_t)/L$, Post-FM-DP-SGD can be written as
\begin{equation}
\theta_{t+1}^{\mathrm{post}}
=
\theta_t
-
\eta_{\mathrm{post}}
\left[
\beta
\frac{s_t(D;m_t)+Z_t}{L}
+
(1-\beta)
\sum_{j=1}^{K_t-1}
\hat a_{t,j}^{\mathrm{post}}
\tilde g_{t-j}^{\mathrm{std}}
\right].
\label{eq:post_fm_expanded_update_results}
\end{equation}
Thus, Post-FM-DP-SGD does not modify the private query itself; it only filters
the already privatized gradients.

By contrast, FO-DP-SGD inserts memory before Gaussian perturbation by forming
the memory-enhanced sum-level query
\begin{equation}
r_t^{\mathrm{FO}}
=
\beta s_t(D;m_t)
+
(1-\beta)u_{t-1}^{\mathrm{FO}},
\qquad
u_{t-1}^{\mathrm{FO}}
=
\sum_{j=1}^{K_t-1}
\hat a_{t,j}^{\mathrm{FO}}\tilde s_{t-j},
\label{eq:fo_query_results}
\end{equation}
and then releases
\begin{equation}
\tilde s_t^{\mathrm{FO}}
=
r_t^{\mathrm{FO}}+Z_t,
\qquad
\theta_{t+1}^{\mathrm{FO}}
=
\theta_t
-
\eta_{\mathrm{FO}}
\frac{\tilde s_t^{\mathrm{FO}}}{L}.
\label{eq:fo_release_update_results}
\end{equation}
Therefore, the two methods use the same class of fractional-memory rules, but
they insert memory at different points: FO-DP-SGD modifies the sum-level private
query before noise, while Post-FM-DP-SGD applies memory only after the standard
private release.

\begin{figure}[H]
    \centering
    \includegraphics[width=0.82\linewidth]{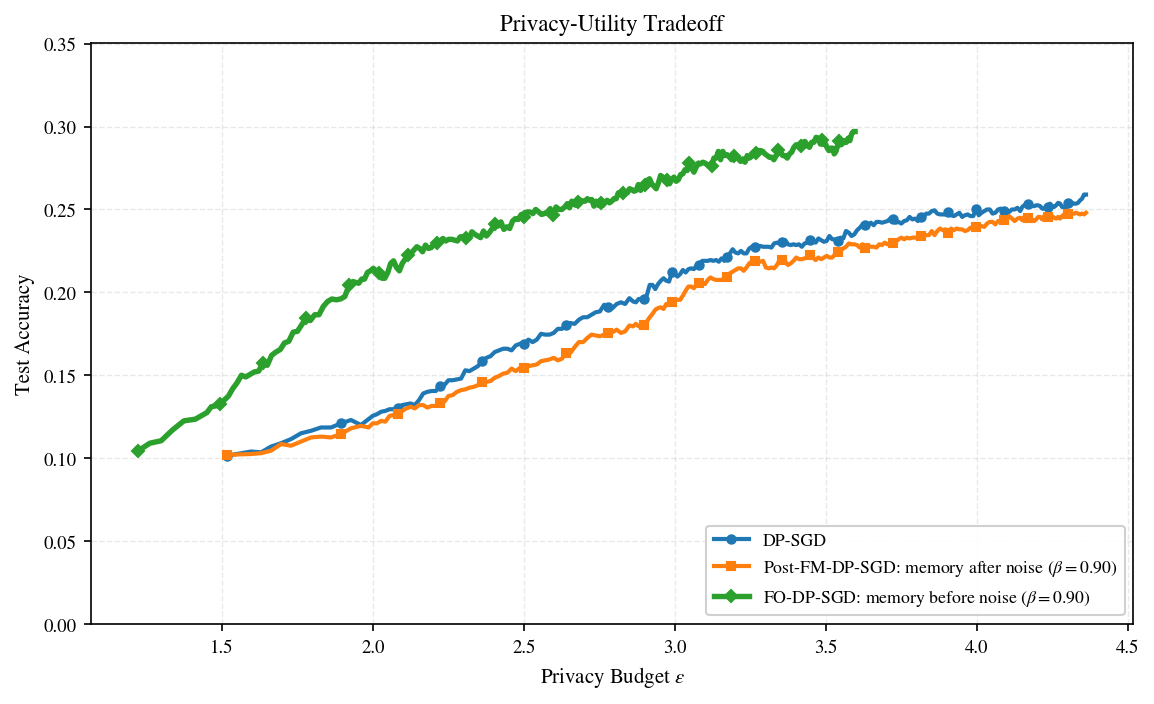}
    \caption{
    \textbf{Privacy--utility tradeoff for mechanism-level memory insertion.}
    Test accuracy is plotted as a function of the privacy budget $\varepsilon$
    on CIFAR-10. DP-SGD serves as the no-memory private baseline. Post-FM-DP-SGD
    applies fractional memory after the standard DP-SGD private release, whereas
    FO-DP-SGD inserts the memory term directly into the sum-level private query
    before Gaussian perturbation. FO-DP-SGD achieves higher accuracy in the
    low-to-moderate privacy-budget regime, supporting the mechanism-level
    advantage of memory-before-noise over post-processing memory-after-noise.
    }
    \label{fig:privacy_utility_mechanism_level}
\end{figure}

As shown in Figure~\ref{fig:privacy_utility_mechanism_level}, FO-DP-SGD
achieves a substantially better privacy--utility tradeoff than both standard
DP-SGD and the post-processing fractional-memory baseline. In particular, for
the same range of privacy budgets, the FO-DP-SGD curve remains above the
Post-FM-DP-SGD curve over most of the trajectory. This indicates that the
improvement is not merely due to applying a smoothing filter to already
privatized gradients. Instead, the result supports the central mechanism-level
claim that incorporating fractional memory into the sum-level private query
before Gaussian perturbation yields a more effective use of the privacy budget.

The comparison separates two effects. First, Post-FM-DP-SGD improves over
standard DP-SGD, showing that fractional memory itself can stabilize private
optimization. Second, FO-DP-SGD further improves over Post-FM-DP-SGD, showing
that the point at which memory is inserted matters. Thus, the observed gain is
associated not only with the presence of memory, but with its integration into
the private query mechanism before noise is added.

\subsection{Mechanism-Level Memory Improves Training Utility}
\label{subsec:results_mechanism_training_utility}

We next evaluate how the memory insertion point affects the training trajectory.
Figure~\ref{fig:mechanism_memory_vs_post_epoch} compares standard DP-SGD,
Post-FM-DP-SGD, and FO-DP-SGD in terms of test accuracy over training epochs.
The post-processing baseline applies fractional memory only after the standard
DP-SGD private release has already been generated, whereas FO-DP-SGD inserts the
memory term directly into the sum-level private query before Gaussian
perturbation.

\begin{figure}[H]
    \centering
    \includegraphics[width=0.82\linewidth]{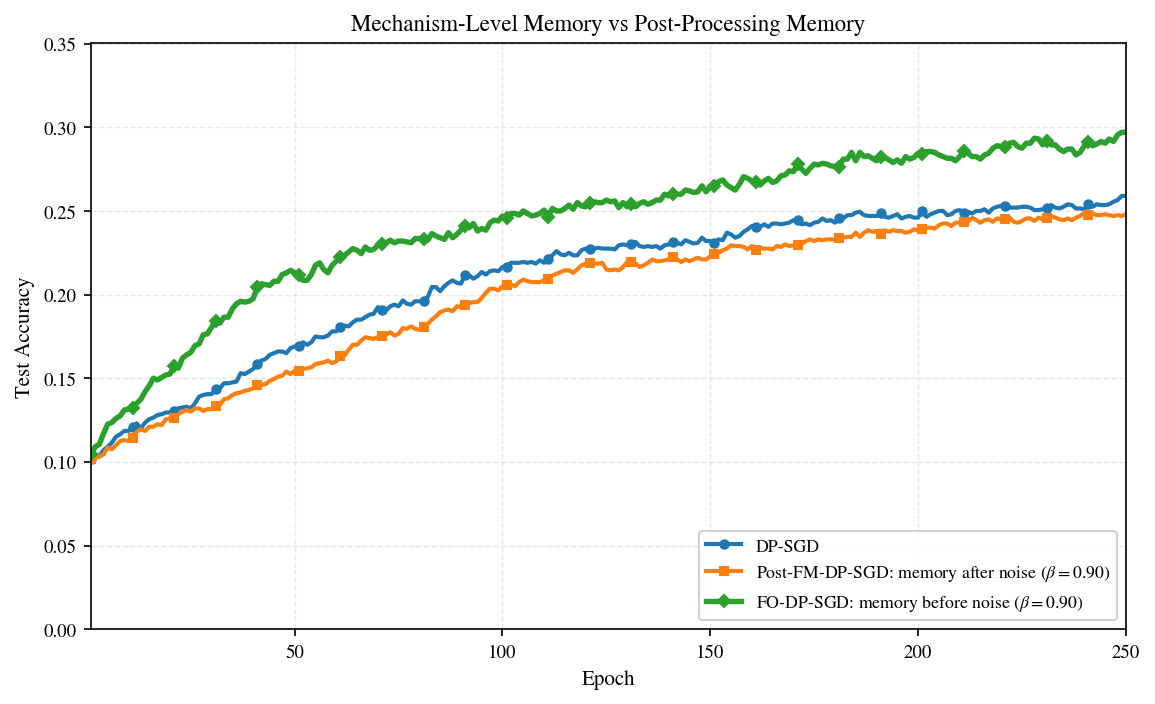}
    \caption{
    \textbf{Mechanism-level memory versus post-processing memory.}
    Test accuracy is reported over training epochs on CIFAR-10. DP-SGD is the
    standard no-memory private baseline. Post-FM-DP-SGD applies fractional
    memory after the noisy DP-SGD release, while FO-DP-SGD incorporates memory
    before Gaussian perturbation at the sum-query level. FO-DP-SGD achieves
    consistently higher test accuracy and faster improvement than both DP-SGD
    and Post-FM-DP-SGD, supporting the benefit of mechanism-level memory
    insertion.
    }
    \label{fig:mechanism_memory_vs_post_epoch}
\end{figure}

As shown in Figure~\ref{fig:mechanism_memory_vs_post_epoch}, FO-DP-SGD provides
a consistently stronger training trajectory than both baselines. Standard
DP-SGD improves gradually but remains below the two memory-based methods,
indicating that recursive memory can help stabilize noisy private optimization.
Post-FM-DP-SGD improves over DP-SGD, showing that applying fractional memory to
already privatized gradients is beneficial. However, FO-DP-SGD achieves the
highest accuracy throughout most of training, suggesting that inserting memory
before Gaussian perturbation at the query level is more effective than applying
the same memory rule only as a post-processing filter.

This result supports the mechanism-level claim of FO-DP-SGD: the gain is not
only due to the presence of memory, but also to where memory is introduced in
the private optimization pipeline. By modifying the sum-level private query
before noise is added, FO-DP-SGD obtains a more useful update direction and
improves the utility of private training.

\subsection{Runtime Comparison on CIFAR-10}
\label{subsec:runtime_comparison_cifar10}

We also compare the wall-clock training time of FO-DP-SGD against several
private optimization baselines on CIFAR-10. This comparison measures the total
runtime required to complete $250$ training epochs under the same CIFAR-10
experimental protocol. The goal is to assess whether the proposed
mechanism-level fractional memory introduces substantial computational overhead
relative to standard and adaptive DP optimizers.

\begin{table}[H]
\centering
\caption{
\textbf{Runtime comparison on CIFAR-10.}
Total wall-clock runtime is reported for $250$ training epochs on CIFAR-10.
Lower runtime is better. FO-DP-SGD with $\beta=1.00$ corresponds to the
no-memory limit, while FO-DP-SGD with $\beta=0.90$, $\alpha=0.80$, and
$K=8$ is the proposed memory-active configuration.
}
\label{tab:runtime_cifar10_private_optimizers}
\small
\begin{tabular}{lc}
\toprule
\textbf{Method} & \textbf{Runtime on CIFAR-10, 250 epochs (s)} \\
\midrule
FO-DP-SGD, $\beta=1.00$ & $59.1$ \\
DP-Adam & $76.1$ \\
DP-AdamW & $65.7$ \\
DP-IS & $65.6$ \\
DP-SAM & $92.6$ \\
DP-SAT & $88.6$ \\
DP-Adam-AC & $60.7$ \\
FO-DP-SGD, $\beta=0.90,\alpha=0.80,K=8$ & $60.1$ \\
\bottomrule
\end{tabular}
\end{table}

As shown in Table~\ref{tab:runtime_cifar10_private_optimizers}, the proposed
memory-active FO-DP-SGD configuration has a runtime of $60.1$ seconds on
CIFAR-10, which is only slightly higher than the no-memory FO-DP-SGD setting
with $\beta=1.00$, requiring $59.1$ seconds. This indicates that adding the
fractional memory state at the query level introduces only a small computational
overhead in this CIFAR-10 experiment.

The runtime of FO-DP-SGD is also competitive with the adaptive DP baselines.
For example, the memory-active FO-DP-SGD configuration is substantially faster
than DP-Adam, DP-SAM, and DP-SAT, which require $76.1$, $92.6$, and
$88.6$ seconds, respectively. It is also comparable to DP-Adam-AC, which
requires $60.7$ seconds. These results suggest that the proposed
mechanism-level memory does not impose a large runtime burden compared with
common adaptive private optimizers on CIFAR-10.

\subsection{Relative Runtime Overhead on CIFAR-10}
\label{subsec:relative_runtime_overhead_cifar10}

In addition to absolute wall-clock runtime, we report the relative runtime
overhead of each optimizer on CIFAR-10. The runtime is normalized with respect
to the no-memory FO-DP-SGD configuration, corresponding to $\beta=1.00$. Thus, a
value of $1.00\times$ indicates the reference runtime, while values larger than
$1.00\times$ indicate additional computational overhead.

\begin{table}[H]
\centering
\caption{
\textbf{Relative runtime overhead on CIFAR-10.}
Runtime is normalized by the no-memory FO-DP-SGD configuration
$(\beta=1.00)$, which serves as the $1.00\times$ reference. Lower values
indicate lower computational overhead. The memory-active FO-DP-SGD configuration
with $\beta=0.90$, $\alpha=0.80$, and $K=8$ introduces only a small relative
overhead.
}
\label{tab:relative_runtime_overhead_cifar10}
\small
\begin{tabular}{lc}
\toprule
\textbf{Method} & \textbf{Relative Runtime on CIFAR-10} \\
\midrule
FO-DP-SGD, $\beta=1.00$ & $1.00\times$ \\
DP-Adam & $1.29\times$ \\
DP-AdamW & $1.11\times$ \\
DP-IS & $1.11\times$ \\
DP-SAM & $1.57\times$ \\
DP-SAT & $1.50\times$ \\
DP-Adam-AC & $1.03\times$ \\
FO-DP-SGD, $\beta=0.90,\alpha=0.80,K=8$ & $1.02\times$ \\
\bottomrule
\end{tabular}
\end{table}

As shown in Table~\ref{tab:relative_runtime_overhead_cifar10}, the proposed
memory-active FO-DP-SGD configuration requires only $1.02\times$ the runtime of
the no-memory FO-DP-SGD reference. This indicates that incorporating fractional
memory into the private query introduces very limited additional computational
cost on CIFAR-10.

The relative runtime of FO-DP-SGD is also competitive compared with adaptive
private optimizers. DP-Adam requires $1.29\times$ the reference runtime, while
DP-SAM and DP-SAT require $1.57\times$ and $1.50\times$, respectively. Although
these methods introduce adaptive or sharpness-aware update mechanisms, their
computational overhead is substantially larger than that of the proposed
memory-active FO-DP-SGD. DP-Adam-AC is close to FO-DP-SGD in runtime, requiring
$1.03\times$, but FO-DP-SGD remains slightly more efficient.

\subsection{Effect of Clipping Norm under Fixed Noise}
\label{subsec:fixed_sigma_varying_clipping}

We next study how the clipping norm $C$ affects test accuracy when the noise
multiplier is fixed. In DP-SGD-type methods, the clipping norm controls the
maximum contribution of each per-example gradient, while the Gaussian noise is
scaled as $Z_t \sim \mathcal N(0,\sigma^2 C^2 I).$ Thus, changing $C$ affects the optimization dynamics by changing both the amount
of gradient clipping and the absolute noise scale. To isolate the effect of
clipping, we fix the noise multiplier and compare different clipping values
$C\in\{0.5,1.0,1.5,2.0\}$.

\begin{figure}[H]
    \centering
    \includegraphics[width=0.98\linewidth]{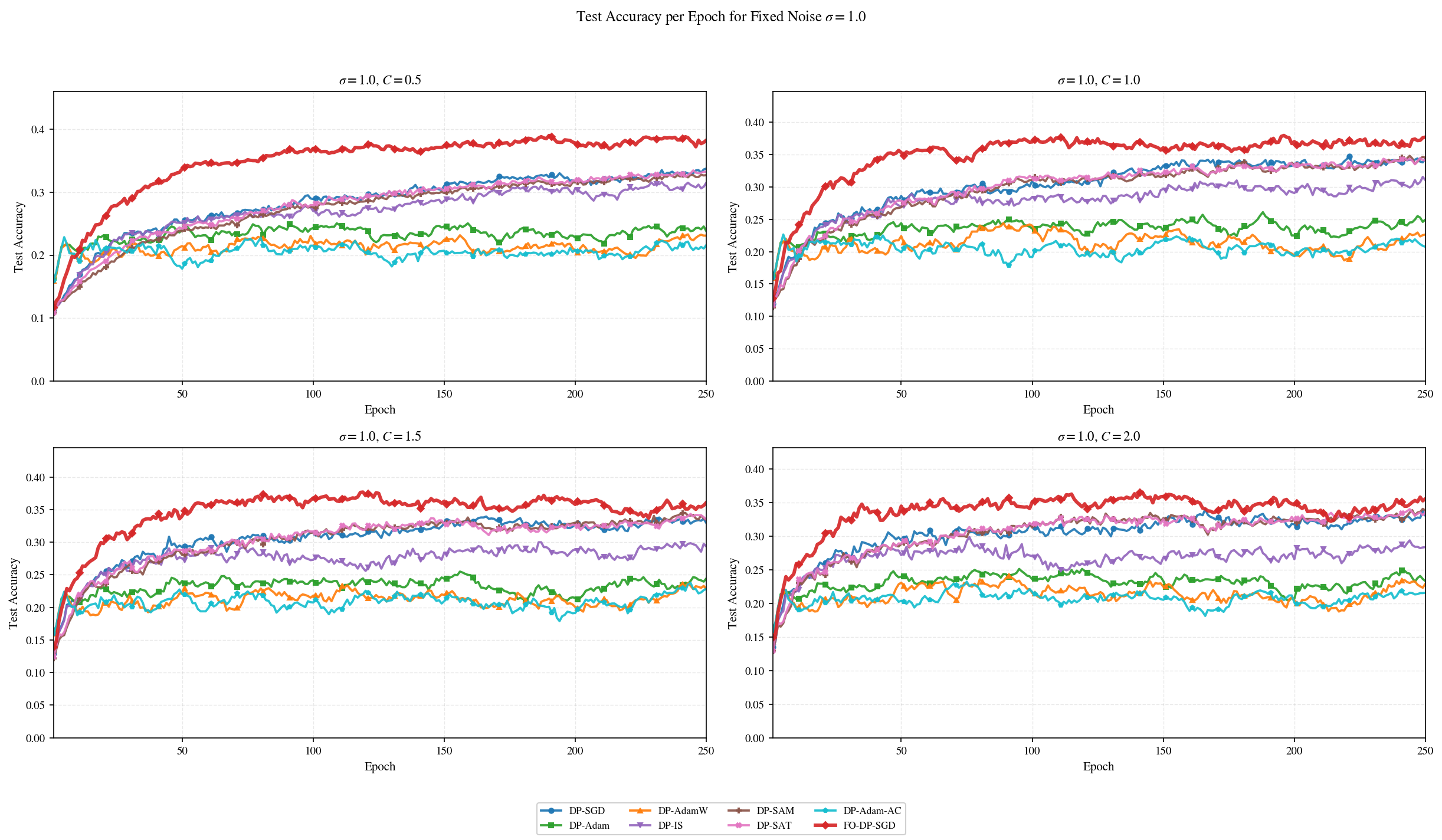}
    \caption{
    \textbf{Test accuracy under fixed noise $\sigma=1.5$ and varying clipping
    norms on CIFAR-10.}
    Test accuracy is reported over training epochs for
    $C\in\{0.5,1.0,1.5,2.0\}$ with fixed noise multiplier $\sigma=1.5$. Each
    subplot compares DP-SGD, adaptive DP optimizers, and FO-DP-SGD under the
    same fixed noise multiplier. FO-DP-SGD consistently achieves the strongest
    accuracy trajectory across all clipping values, indicating that the proposed
    fractional-memory mechanism remains robust to changes in the clipping norm.
    }
    \label{fig:fixed_sigma_1p5_varying_clipping}
\end{figure}

As shown in Figures~\ref{fig:fixed_sigma_1p5_varying_clipping}
and~\ref{fig:fixed_sigma_2p0_varying_clipping}, FO-DP-SGD maintains the highest
test accuracy across all tested clipping values under both fixed-noise settings.
This suggests that the proposed fractional-memory update is not effective only
for a single clipping configuration, but remains stable across a range of
clipping norms and noise levels. The advantage is especially visible throughout
the middle and late stages of training, where FO-DP-SGD consistently stays above
the competing private optimizers.

The results also show that changing $C$ affects the baselines differently. Some
adaptive methods exhibit relatively flat or unstable trajectories as the
clipping norm changes, suggesting sensitivity to the balance between clipping
bias and injected Gaussian noise. In contrast, FO-DP-SGD shows a more stable
accuracy trajectory across the four clipping regimes. This behavior supports
the interpretation that query-level fractional memory helps smooth noisy private
updates while preserving useful historical gradient information.

The comparison between $\sigma=1.5$ and $\sigma=2.0$ further shows that
FO-DP-SGD remains effective even when the noise multiplier is increased. Since
larger $\sigma$ introduces stronger perturbation, maintaining a consistent
accuracy advantage under $\sigma=2.0$ provides additional evidence that the
fractional-memory mechanism improves robustness to DP noise. Overall, these
results support the effectiveness of FO-DP-SGD under fixed-noise clipping
variation.

\subsection{ Accuracy Analysis}
\label{subsec:final-accuracy-analysis}

Table~\ref{tab:final-acc-ci95} reports the final test accuracy of all compared
methods over five independent random seeds. For each method, we report the mean
$\pm$ standard deviation and the corresponding 95\% confidence interval. This
evaluation allows us to assess the statistical reliability of the final accuracy
improvements under the same differentially private training protocol.

\begin{table}[H]
\centering
\caption{
Final accuracy comparison over five independent random seeds.
We report mean $\pm$ standard deviation and the 95\% confidence interval for
final test accuracy.
}
\label{tab:final-acc-ci95}
\resizebox{0.78\textwidth}{!}{
\begin{tabular}{lcc}
\toprule
\textbf{Method} 
& \textbf{Final Acc. mean $\pm$ std} 
& \textbf{Final Acc. 95\% CI} \\
\midrule
FO-DP-SGD $(\beta=0.90,\alpha=0.80,K=8)$ 
& $\mathbf{0.3654 \pm 0.0075}$ 
& $\mathbf{[0.3560,\;0.3748]}$ \\
DP-SAM 
& $0.3292 \pm 0.0111$ 
& $[0.3154,\;0.3430]$ \\
DP-SAT 
& $0.3281 \pm 0.0105$ 
& $[0.3150,\;0.3412]$ \\
FO-DP-SGD $(\beta=1.00)$ 
& $0.3241 \pm 0.0091$ 
& $[0.3128,\;0.3354]$ \\
DP-IS 
& $0.2928 \pm 0.0079$ 
& $[0.2829,\;0.3027]$ \\
DP-Adam 
& $0.2438 \pm 0.0108$ 
& $[0.2304,\;0.2572]$ \\
DP-AdamW 
& $0.2268 \pm 0.0103$ 
& $[0.2141,\;0.2395]$ \\
DP-Adam-AC 
& $0.2054 \pm 0.0086$ 
& $[0.1947,\;0.2161]$ \\
\bottomrule
\end{tabular}
}
\end{table}

As shown in Table~\ref{tab:final-acc-ci95}, the proposed FO-DP-SGD
configuration with $\beta=0.90$, $\alpha=0.80$, and $K=8$ achieves the highest
final test accuracy among all evaluated methods. It obtains a final accuracy of
$0.3654 \pm 0.0075$ with a 95\% confidence interval of $[0.3560, 0.3748]$. In
contrast, the non-memory FO-DP-SGD variant with $\beta=1.00$ achieves
$0.3241 \pm 0.0091$ with a 95\% confidence interval of $[0.3128, 0.3354]$. The
non-overlapping confidence intervals between these two variants indicate that
the improvement obtained by the proposed fractional-memory configuration is
statistically consistent across the evaluated random seeds.

Compared with the strongest differentially private optimizer baselines, the
proposed method also provides a clear improvement in final accuracy. DP-SAM
reaches $0.3292 \pm 0.0111$ with a 95\% confidence interval of
$[0.3154, 0.3430]$, while DP-SAT reaches $0.3281 \pm 0.0105$ with a 95\%
confidence interval of $[0.3150, 0.3412]$. Therefore, the proposed FO-DP-SGD
method improves final accuracy by approximately $3.62$ percentage points over
DP-SAM and $3.73$ percentage points over DP-SAT. This result suggests that
incorporating fractional-order memory into the private optimization dynamics
improves the final utility of the trained model under the same experimental
setting.

\subsection{Ablation on Query-Level Memory Design}
\label{subsec:ablation_query_memory_design}

To isolate the role of the query-level memory construction, we compare several
memory designs under the same private training protocol. All variants use the
same mechanism-level template: a memory-enhanced sum-level query is formed
before Gaussian perturbation, and then Gaussian noise is added at the sum level.
Thus, the comparison does not change the location at which memory enters the
private optimization pipeline; instead, it changes only how the historical
private releases are weighted when constructing the memory state.

At iteration \(t\), each memory-based variant forms a query of the form
\begin{equation}
r_t
=
\beta s_t(D;m_t)
+
(1-\beta)u_{t-1},
\label{eq:generic_query_memory_design}
\end{equation}
where \(s_t(D;m_t)\) is the current clipped subsampled sum and \(u_{t-1}\) is a
memory state constructed from previous private sum-level releases. After
forming \(r_t\), the mechanism releases
\begin{equation}
\tilde s_t
=
r_t+Z_t,
\qquad
Z_t\sim\mathcal N(0,\sigma^2 C^2 I),
\label{eq:query_memory_design_release}
\end{equation}
and the model is updated using \(\tilde s_t/L\). Therefore, all memory variants
remain mechanism-level query modifications rather than post-processing filters
applied after a standard DP-SGD release.

We compare four query-level memory designs.

\paragraph{Current-only / standard DP-SGD.}
The current-only variant removes the recursive memory term entirely. It is
equivalent to standard DP-SGD under the same clipping, sampling, Gaussian noise,
and privacy-accounting setup. Its query is
\begin{equation}
r_t^{\mathrm{cur}}
=
s_t(D;m_t),
\qquad
u_{t-1}^{\mathrm{cur}}=0.
\label{eq:current_only_memory_design}
\end{equation}
Equivalently, this corresponds to the generic query in
\eqref{eq:generic_query_memory_design} with \(\beta=1\) and no memory term.
This baseline tests whether recursive query-level memory is beneficial beyond
the ordinary current-gradient DP-SGD mechanism.

\paragraph{Uniform memory.}
The uniform-memory variant uses the same query-level template but constructs
the memory state as an equal-weight average of the previous private sum-level
releases:
\begin{equation}
u_{t-1}^{\mathrm{uni}}
=
\frac{1}{K_t-1}
\sum_{j=1}^{K_t-1}
\tilde s_{t-j},
\qquad
K_t=\min\{K,t+1\}.
\label{eq:uniform_memory_design}
\end{equation}
The corresponding query is
\begin{equation}
r_t^{\mathrm{uni}}
=
\beta s_t(D;m_t)
+
(1-\beta)u_{t-1}^{\mathrm{uni}}.
\label{eq:uniform_query_memory_design}
\end{equation}
This variant tests whether simply averaging prior private releases is sufficient
to improve private optimization.

\paragraph{Exponential memory.}
The exponential-memory variant assigns geometrically decaying weights to older
private releases. For a decay parameter \(\gamma_{\mathrm{exp}}\in(0,1)\), the
normalized weights are
\begin{equation}
w_{t,j}^{\mathrm{exp}}
=
\frac{\gamma_{\mathrm{exp}}^{j-1}}
{\sum_{\ell=1}^{K_t-1}\gamma_{\mathrm{exp}}^{\ell-1}},
\qquad j=1,\ldots,K_t-1.
\label{eq:exponential_weights_memory_design}
\end{equation}
The memory state is then
\begin{equation}
u_{t-1}^{\mathrm{exp}}
=
\sum_{j=1}^{K_t-1}
w_{t,j}^{\mathrm{exp}}\tilde s_{t-j},
\label{eq:exponential_memory_design}
\end{equation}
and the query becomes
\begin{equation}
r_t^{\mathrm{exp}}
=
\beta s_t(D;m_t)
+
(1-\beta)u_{t-1}^{\mathrm{exp}}.
\label{eq:exponential_query_memory_design}
\end{equation}
This variant tests whether a standard short-memory decay rule can match the
benefit of the proposed fractional query-level memory.

\paragraph{Confidence-aware tempered fractional memory: FO-DP-SGD.}
The proposed FO-DP-SGD variant uses the same query-level template in
\eqref{eq:generic_query_memory_design}, but constructs \(u_{t-1}\) using the
confidence-aware tempered fractional memory introduced in
Section~\ref{subsec:fodpsgd_ema_confidence}. In particular, FO-DP-SGD combines
a fractional power-law memory kernel with EMA-based inconsistency measurement
and confidence-aware tempering before forming the recursive sum-level query.
Thus, unlike uniform memory, which treats recent releases equally, and
exponential memory, which imposes a fixed geometric decay, FO-DP-SGD uses a
transcript-dependent fractional memory rule that can suppress stale or
inconsistent private releases while preserving useful long-memory information.

\begin{figure}[H]
    \centering
    \includegraphics[width=0.82\linewidth]{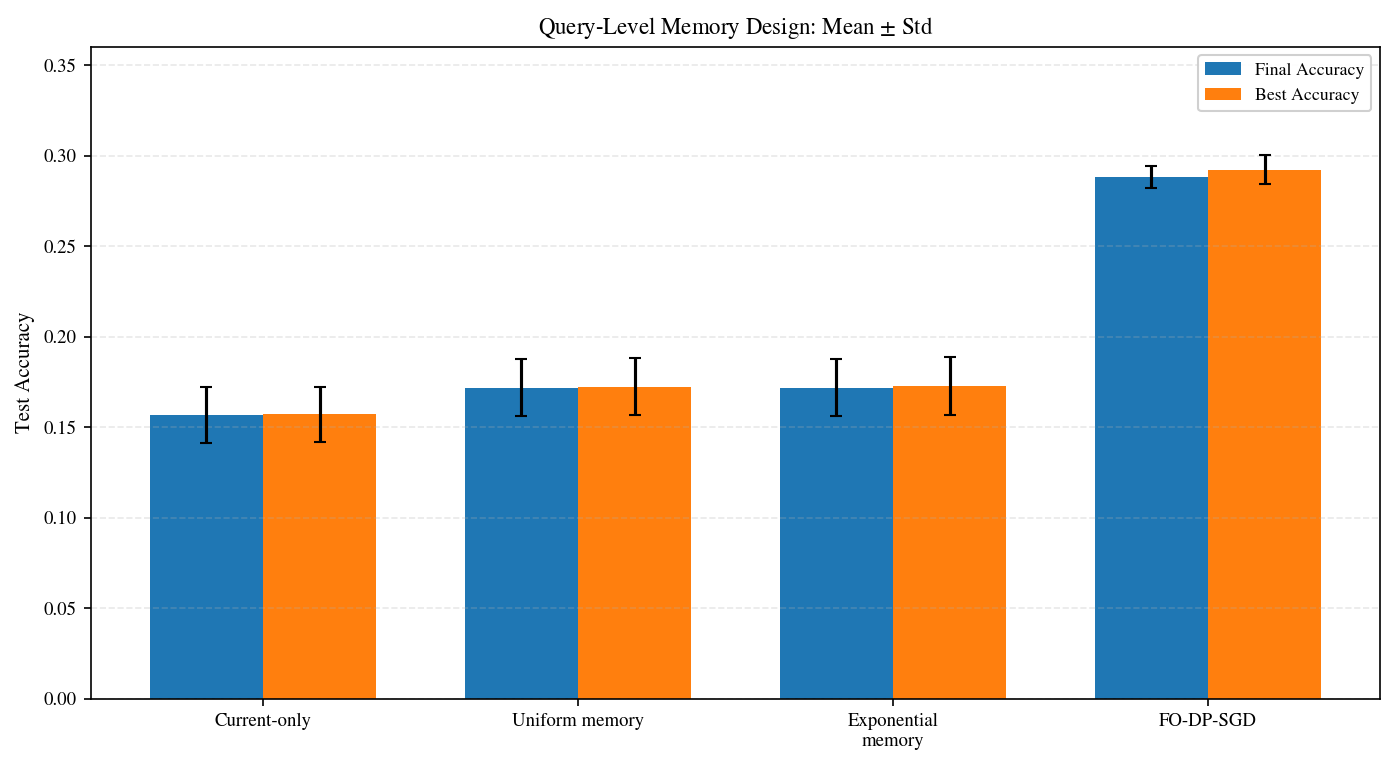}
    \caption{
    \textbf{Ablation on query-level memory design.}
    Final and best test accuracy are reported as mean \(\pm\) standard deviation
    over multiple runs. The current-only variant is equivalent to standard
    DP-SGD and removes recursive memory. Uniform memory uses equal weights over
    previous private releases, exponential memory uses a geometric decay kernel,
    and FO-DP-SGD uses the proposed confidence-aware tempered fractional
    query-level memory from Section~\ref{subsec:fodpsgd_ema_confidence}.
    FO-DP-SGD achieves the highest final and best accuracy, indicating that the
    observed gain is not merely due to adding memory, but also to the specific
    confidence-aware fractional structure of the query-level memory.
    }
    \label{fig:query_memory_design_ablation}
\end{figure}

Figure~\ref{fig:query_memory_design_ablation} shows that the proposed
FO-DP-SGD memory design substantially outperforms the other query-level memory
variants. The current-only baseline, which is equivalent to standard DP-SGD,
obtains the lowest accuracy, confirming that removing recursive memory weakens
private optimization under the evaluated setting. Uniform and exponential
memory improve over the current-only DP-SGD variant, suggesting that using
previous private releases can stabilize the noisy training trajectory. However,
both remain below the proposed FO-DP-SGD design.

The strongest performance is achieved by FO-DP-SGD, which uses the
confidence-aware tempered fractional memory introduced in
Section~\ref{subsec:fodpsgd_ema_confidence}. This result suggests that the
benefit of FO-DP-SGD is not simply caused by adding any memory term to the
private query. Rather, the form of the memory matters: uniform memory treats all
recent releases equally, exponential memory imposes a fixed geometric decay,
while FO-DP-SGD combines fractional long-memory structure with
transcript-dependent inconsistency suppression. Consequently, this ablation
supports the claim that the proposed query-level confidence-aware fractional
memory is a key component of the overall improvement.

\section{Theoretical Analysis and Interpretation of FO-DP-SGD}
\label{app:theoretical-analysis}
\label{sec:fodpsgd_theory}

This section provides a theoretical analysis of FO-DP-SGD and clarifies why the
proposed recursive query remains compatible with standard differentially
private accounting. We analyze the method from four complementary perspectives:
privacy validity and accounting, structural decomposition of the recursive
private release, interpretation of the EMA-based confidence-aware tempering
rule, and special or limiting regimes that recover simpler private optimization
mechanisms.

The central point is that FO-DP-SGD modifies the \emph{sum-level private query}
before Gaussian perturbation, while preserving the classical
sum-then-noise-then-divide structure of DP-SGD. At iteration \(t\), the recursive
query and release are
\begin{equation}
r_t(D;m_t,h_t)
=
\beta s_t(D;m_t)
+
(1-\beta)u_{t-1}^{\mathrm{CA}}(h_t),
\label{eq:theory_recursive_query_recall}
\end{equation}
and
\begin{equation}
\tilde s_t
=
r_t(D;m_t,h_t)+Z_t,
\qquad
Z_t\sim\mathcal N(0,\sigma^2 C^2 I).
\label{eq:theory_release_recall}
\end{equation}
Here \(s_t(D;m_t)\) is the current clipped subsampled sum, \(h_t\) is the prior
private transcript, \(u_{t-1}^{\mathrm{CA}}(h_t)\) is the confidence-aware
fractional private-memory state, \(C\) is the clipping norm, \(\sigma\) is the
noise multiplier, and \(\beta\in(0,1]\) controls the current-step contribution.

\subsection{Standing Assumptions and Notation}
\label{subsec:fodpsgd_theory_assumptions}

We first state the assumptions used throughout the theoretical analysis.

\paragraph{Adjacency.}
We use add/remove adjacency. Two datasets \(D\) and \(D'\) are adjacent,
denoted \(D\sim D'\), if one can be obtained from the other by adding or
removing one data point.

\paragraph{Per-example clipping.}
At iteration \(t\), every per-example gradient is clipped to have Euclidean norm
at most \(C\):
\begin{equation}
\|\bar g_t(x_i)\|_2\le C.
\label{eq:theory_clipping_assumption}
\end{equation}
For a Poisson sampling mask \(m_t=(m_{t,1},\ldots,m_{t,N})\), the clipped
subsampled sum is
\begin{equation}
s_t(D;m_t)
=
\sum_{i=1}^N m_{t,i}\bar g_t(x_i).
\label{eq:theory_subsampled_sum}
\end{equation}

\paragraph{Poisson subsampling.}
Each inclusion variable is sampled independently as
\[
m_{t,i}\sim \mathrm{Bernoulli}(q),
\]
where \(q\in(0,1]\) is the sampling probability. Privacy amplification is
accounted for using the Poisson-subsampled Gaussian mechanism.

\paragraph{Transcript-conditioned state.}
Let
\[
h_t=(\tilde s_0,\tilde s_1,\ldots,\tilde s_{t-1})
\]
denote the prior private transcript before the release at iteration \(t\). The
model parameter \(\theta_t\), the EMA trend, the inconsistency scores, the
confidence factor, the memory weights, and the memory state are deterministic
functions of the prior private transcript \(h_t\), public hyperparameters, and
the algorithmic update rule. Therefore, once \(h_t\) is fixed, the current
algorithmic state \(\theta_t\) and all transcript-dependent memory quantities
are fixed. The only fresh data-dependent quantity at iteration \(t\) is the
current clipped subsampled sum \(s_t(D;m_t)\).

\paragraph{Gaussian perturbation.}
FO-DP-SGD adds fresh independent Gaussian noise at the sum level:
\[
Z_t\sim \mathcal N(0,\sigma^2 C^2I).
\]
The released noisy gradient is \(\tilde g_t=\tilde s_t/L\), where \(L=Nq\) is
the expected lot size.

\subsection{Privacy Validity, Mechanism-Level Novelty, and Privacy Accounting}
\label{subsec:fodpsgd_theory_privacy_accounting}

We begin by formalizing the sense in which FO-DP-SGD remains privacy-valid while
modifying the sum-level query itself.

\begin{lemma}[Sensitivity of the clipped subsampled sum under a fixed mask]
\label{lem:fodpsgd_delta_s}
Fix the model state \(\theta_t\) and a realization of the sampling mask \(m_t\).
Under add/remove adjacency, the clipped subsampled sum satisfies
\begin{equation}
\sup_{D\sim D'}
\|s_t(D;m_t)-s_t(D';m_t)\|_2
\le C.
\label{eq:theory_clipped_sum_sensitivity}
\end{equation}
\end{lemma}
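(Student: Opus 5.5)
The approach is a direct computation: write both subsampled sums over a common index set, cancel the shared terms, and bound the single leftover term by the clipping guarantee \eqref{eq:theory_clipping_assumption}. Without loss of generality, let \(D'\) be obtained from \(D\) by adding one example \(x^\ast\), so \(D'=D\cup\{x^\ast\}\). The reverse case (removal) follows by swapping the roles of \(D\) and \(D'\), because the norm of a difference is symmetric.

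\textbf{Common indexing.} The first step is to make ``the same mask'' precise, since \(D\) and \(D'\) have different sizes. I will index the mask over the larger dataset \(D'\): \(m_t=(m_{t,1},\ldots,m_{t,N},m_{t,\ast})\). The mask applied to \(D\) is taken to be the restriction of \(m_t\) to the first \(N\) coordinates. This is the usual coupling behind Poisson subsampling under add/remove adjacency.

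\textbf{Shared terms cancel.} Because \(\theta_t\) is fixed, every shared example \(x_i\) yields the same clipped gradient \(\bar g_t(x_i)\) under both datasets. The clipping map \eqref{eq:fodpsgd_clip} depends only on \(\theta_t\) and on \(x_i\), not on the rest of the dataset. Hence, from \eqref{eq:theory_subsampled_sum},
\[
s_t(D';m_t)-s_t(D;m_t)=m_{t,\ast}\,\bar g_t(x^\ast).
\]

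\textbf{Bounding the leftover term.} Since \(m_{t,\ast}\in\{0,1\}\) and \(\|\bar g_t(x^\ast)\|_2\le C\) by \eqref{eq:theory_clipping_assumption}, the norm of this difference is at most \(C\). It equals \(0\) when the differing example is not sampled. Taking the supremum over all adjacent pairs, which requires only the uniform clipping bound, gives \eqref{eq:theory_clipped_sum_sensitivity}.

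\textbf{Main obstacle.} The only real subtlety is the first step. One has to justify that conditioning on a single realized mask is meaningful across datasets of different sizes, and that the shared examples' gradients are truly unchanged. Both points rely on fixing \(\theta_t\) and on the coordinate-wise independence of the Poisson mask. I will state this coupling explicitly rather than leave it implicit, because the subsequent subsampled-Gaussian accounting uses exactly this mask-conditioned bound.
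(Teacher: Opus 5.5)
Your proposal is correct and is essentially the paper's argument: with \(\theta_t\) and the mask fixed, the two clipped subsampled sums differ in at most one clipped per-example contribution, and that contribution has norm at most \(C\) by clipping. Your explicit coupling (index the mask over the larger dataset and restrict it to the smaller one) makes precise what the paper's one-line proof leaves implicit; this deterministic fixed-mask bound does not need the mask coordinates to be independent, which matters only for the later subsampled-Gaussian accounting.
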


\begin{proof}
Under add/remove adjacency, the two datasets differ in at most one example. For
a fixed sampling mask \(m_t\), the two clipped subsampled sums can therefore
differ in at most one clipped per-example contribution. Since every clipped
gradient has norm at most \(C\), the difference between the two sums has norm at
most \(C\).
\end{proof}

\begin{proposition}[History-conditioned query structure]
\label{prop:fodpsgd_theory_history_conditioned}
Fix an iteration \(t\), a realizable prior private transcript \(h_t\), and a
realization of the current sampling mask \(m_t\). Since the current parameter
\(\theta_t\), EMA trend, inconsistency scores, confidence factor, normalized
memory weights, and private-memory state are deterministic functions of
\(h_t\) and public hyperparameters, they are fixed under this conditioning.
Consequently, the only newly data-dependent term in the recursive query
\(r_t(D;m_t,h_t)\) is \(\beta s_t(D;m_t)\).
\end{proposition}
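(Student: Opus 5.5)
The plan is to prove the proposition by induction over the causal structure of Algorithm~\ref{alg:fodpsgd}: every quantity other than \(s_t(D;m_t)\) that enters \(r_t\) is a measurable, deterministic function of \(h_t\) and public hyperparameters. Once that is established, the conditional decomposition of \(r_t\) follows by inspection of \eqref{eq:fodpsgd_recursive_query}.

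The first step is to show that \(\theta_t\) is a function of \(h_t\). I would argue by induction on \(t\). The base case is that \(\theta_0\) is initialized independently of \(D\), using public randomness or a fixed seed. For the inductive step, \eqref{eq:fodpsgd_update} gives
\[
\theta_{t}=\theta_0-\frac{\eta}{L}\sum_{r=0}^{t-1}\tilde s_r ,
\]
so \(\theta_t\) depends on \(D\) only through \(h_t\), given that \(\eta\), \(L=Nq\), and \(\theta_0\) are public. One subtlety must be flagged here. \(L=Nq\) involves \(N\), which differs between add/remove neighbors. I would treat \(L\) as a fixed public normalizing constant, as is standard in DP-SGD accounting; the alternative is to note that it only rescales a post-processed quantity.

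The second step is to show that the memory quantities are functions of \(h_t\). From \eqref{eq:fodpsgd_ema}, \(\bar s_t^{\mathrm{EMA}}\) is a fixed linear combination of \(\tilde s_0,\dots,\tilde s_{t-1}\); this follows by a short induction, since \(\bar s_1^{\mathrm{EMA}}=\tilde s_0\). Next, \(K_t=\min\{K,t+1\}\) is public. The quantities \(\nu_{t,j}\) in \eqref{eq:fodpsgd_inconsistency} and \(\chi_t\) in \eqref{eq:fodpsgd_confidence} are compositions of norms, max, and division applied to \(\tilde s_{t-j}\) and \(\bar s_t^{\mathrm{EMA}}\). The denominators are bounded below by \(\kappa+\varepsilon>0\) and \(\zeta>0\) respectively, so these maps are well defined. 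Each kernel value \(a^{\mathrm{CA}}_{t,j}\) is strictly positive. Hence the normalization \eqref{eq:fodpsgd_norm_weights} is well defined whenever \(K_t\ge 2\), and \(u^{\mathrm{CA}}_{t-1}(h_t)\) in \eqref{eq:fodpsgd_memory_state} is a function of \(h_t\) alone. When \(K_t=1\), it is identically zero.

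The third step combines these facts. Conditioning on \(h_t\) and \(m_t\) makes \((1-\beta)u^{\mathrm{CA}}_{t-1}(h_t)\) a constant vector \(c_t\), identical under \(D\) and \(D'\). Therefore
\[
r_t(D;m_t,h_t)-r_t(D';m_t,h_t)=\beta\bigl(s_t(D;m_t)-s_t(D';m_t)\bigr),
\]
where \(s_t\) is evaluated at the fixed \(\theta_t\). This shows that \(\beta s_t\) is the only newly data-dependent term, and it also prepares the \(\beta C\) bound via Lemma~\ref{lem:fodpsgd_delta_s}.

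The main obstacle I expect is making the phrase ``realizable transcript'' and the conditioning rigorous. The transcript \(h_t\) was itself generated from \(D\), so the argument must be phrased as the standard adaptive-composition viewpoint: for each fixed history, we analyze the conditional mechanism \(D\mapsto r_t(D;m_t,h_t)+Z_t\). The question is then whether the inductive claim holds for every history in the support under both \(D\) and \(D'\). Because the Gaussian noise has full support, every \(h_t\) is realizable under both datasets, which removes the difficulty.
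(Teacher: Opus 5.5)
Your proposal is correct and takes essentially the same route as the paper's proof. You fix \(h_t\), observe that \(\theta_t\), \(\bar s_t^{\mathrm{EMA}}\), \(\nu_{t,j}\), \(\chi_t\), the normalized weights, and \(u_{t-1}^{\mathrm{CA}}(h_t)\) are deterministic functions of \(h_t\) and public hyperparameters, and conclude that \(\beta s_t(D;m_t)\) is the only fresh data-dependent term. You add useful detail the paper omits: the closed form \(\theta_t=\theta_0-\frac{\eta}{L}\sum_{r<t}\tilde s_r\), and checks that the denominators and the normalization are well defined.

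Keep your primary choice of treating \(L\) as a public constant, because your suggested fallback does not work. If \(L=Nq\) depended on the dataset, then \(\theta_t\), and so every clipped gradient evaluated at step \(t\), would differ between \(D\) and \(D'\). In that case \(\theta_t\) would no longer be post-processing of \(h_t\) alone.
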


\begin{proof}
The prior transcript \(h_t=(\tilde s_0,\ldots,\tilde s_{t-1})\) fixes all
previous private releases. Since the update rule is deterministic given the
prior transcript, public hyperparameters, and public algorithmic choices, the
current model state \(\theta_t\) is fixed under this conditioning. Therefore,
the clipped gradients \(\bar g_t(x_i)\) are evaluated at a fixed parameter
\(\theta_t\).

The EMA trend \(\bar s_t^{\mathrm{EMA}}\) is a deterministic function of prior
private releases. The inconsistency scores \(\nu_{t,j}\), the confidence factor
\(\chi_t\), the raw kernel coefficients \(a_{t,j}^{\mathrm{CA}}\), the
normalized weights \(\hat a_{t,j}^{\mathrm{CA}}\), and the memory state
\[
u_{t-1}^{\mathrm{CA}}(h_t)
=
\sum_{j=1}^{K_t-1}\hat a_{t,j}^{\mathrm{CA}}\tilde s_{t-j}
\]
are therefore also fixed. Hence, in
\[
r_t(D;m_t,h_t)
=
\beta s_t(D;m_t)
+
(1-\beta)u_{t-1}^{\mathrm{CA}}(h_t),
\]
the only newly data-dependent term is \(\beta s_t(D;m_t)\).
\end{proof}

\begin{definition}[History- and mask-conditioned recursive-query sensitivity]
\label{def:fodpsgd_theory_conditional_sensitivity}
For a fixed prior private transcript \(h_t\) and a fixed sampling mask \(m_t\),
define the recursive-query sensitivity
\begin{equation}
\Delta_r(m_t,h_t)
=
\sup_{D\sim D'}
\|r_t(D;m_t,h_t)-r_t(D';m_t,h_t)\|_2.
\label{eq:theory_recursive_sensitivity_def}
\end{equation}
\end{definition}

\begin{theorem}[Conditional sensitivity of the recursive query]
\label{thm:fodpsgd_theory_conditional_sensitivity}
Under add/remove adjacency, for every fixed prior private transcript \(h_t\)
and every fixed sampling mask \(m_t\),
\begin{equation}
\Delta_r(m_t,h_t)\le \beta C.
\label{eq:theory_recursive_sensitivity_bound}
\end{equation}
\end{theorem}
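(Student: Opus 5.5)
The plan is to reduce the claim to Lemma~\ref{lem:fodpsgd_delta_s} by means of Proposition~\ref{prop:fodpsgd_theory_history_conditioned}. We fix an iteration \(t\), a realizable prior transcript \(h_t\), and a mask \(m_t\), and take any adjacent pair \(D\sim D'\). By Proposition~\ref{prop:fodpsgd_theory_history_conditioned}, the memory state \(u_{t-1}^{\mathrm{CA}}(h_t)\) depends only on \(h_t\) and public hyperparameters. The same holds for the EMA trend, \(\nu_{t,j}\), \(\chi_t\), and \(\hat a_{t,j}^{\mathrm{CA}}\). So the memory state is the same vector whether the query is evaluated on \(D\) or on \(D'\). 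The parameter \(\theta_t\) is likewise fixed, so the clipped gradients \(\bar g_t(x_i)\) of the shared examples coincide across \(D\) and \(D'\). (When \(K_t=1\), the memory state is the constant \(0\), and the same argument applies trivially.)

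The key step is to subtract the two queries. Using \eqref{eq:theory_recursive_query_recall}, the memory terms cancel exactly:
\[
r_t(D;m_t,h_t)-r_t(D';m_t,h_t)=\beta\bigl(s_t(D;m_t)-s_t(D';m_t)\bigr).
\]
Taking norms and using \(\beta>0\) gives \(\|r_t(D;m_t,h_t)-r_t(D';m_t,h_t)\|_2=\beta\|s_t(D;m_t)-s_t(D';m_t)\|_2\). Lemma~\ref{lem:fodpsgd_delta_s} bounds the right-hand side by \(\beta C\). Taking the supremum over \(D\sim D'\) in Definition~\ref{def:fodpsgd_theory_conditional_sensitivity} then yields \(\Delta_r(m_t,h_t)\le\beta C\).

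The main subtlety is in justifying the cancellation rigorously, not in the arithmetic. The concern is that conditioning on the same \(h_t\) under both datasets is legitimate. We must also be sure that no hidden dependence on \(D\) enters \(u_{t-1}^{\mathrm{CA}}\) or \(\theta_t\), for instance through dataset size \(N\) in \(L=Nq\). I would address this by noting that \(L\) is treated as a public constant in the update \eqref{eq:fodpsgd_update}, or else that \(\theta_t\) influences \(r_t\) only through \(s_t\). Either way, the memory term is a function of the transcript alone, which is exactly the content of Proposition~\ref{prop:fodpsgd_theory_history_conditioned}.

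A related technicality is the meaning of the mask \(m_t\) when \(D\) and \(D'\) have different sizes. I would adopt the standard convention that the mask is indexed over the union of the two datasets. Lemma~\ref{lem:fodpsgd_delta_s} already uses this convention, so invoking it directly suffices.
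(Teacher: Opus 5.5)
Your proposal is correct and follows the paper's proof essentially verbatim: condition on \(h_t\) and \(m_t\), use the history-conditioned proposition to freeze \(u_{t-1}^{\mathrm{CA}}(h_t)\) so that it cancels, and then bound \(\beta\|s_t(D;m_t)-s_t(D';m_t)\|_2\) by \(\beta C\) via the fixed-mask lemma. One small caution on your aside about \(L=Nq\): of the two fixes you offer, only the first (treating \(L\) as a public constant) works, and it is what the paper implicitly assumes when it calls \(\theta_t\) a function of \(h_t\) and public hyperparameters. The second fix does not help, because if \(\theta_t\) differed between \(D\) and \(D'\), the shared clipped gradients would no longer coincide and the lemma would not apply, even though \(\theta_t\) enters \(r_t\) only through \(s_t\).
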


\begin{proof}
Fix \(h_t\) and \(m_t\). By
Proposition~\ref{prop:fodpsgd_theory_history_conditioned}, the memory term
\(u_{t-1}^{\mathrm{CA}}(h_t)\) is fixed. Hence, for any adjacent datasets
\(D\sim D'\),
\[
\begin{aligned}
&\|r_t(D;m_t,h_t)-r_t(D';m_t,h_t)\|_2 \\
&=
\left\|
\beta s_t(D;m_t)
+
(1-\beta)u_{t-1}^{\mathrm{CA}}(h_t)
-
\beta s_t(D';m_t)
-
(1-\beta)u_{t-1}^{\mathrm{CA}}(h_t)
\right\|_2 \\
&=
\beta\|s_t(D;m_t)-s_t(D';m_t)\|_2.
\end{aligned}
\]
By Lemma~\ref{lem:fodpsgd_delta_s}, the clipped subsampled sum has fixed-mask
sensitivity at most \(C\). Therefore,
\[
\Delta_r(m_t,h_t)\le \beta C.
\]
\end{proof}

\begin{definition}[Classical and FO-DP-SGD sum-level queries]
\label{def:fodpsgd_theory_queries}
At iteration \(t\), define the classical DP-SGD clipped-sum query as
\begin{equation}
q_t^{\mathrm{std}}(D;m_t)
=
s_t(D;m_t),
\label{eq:theory_standard_query}
\end{equation}
and the FO-DP-SGD recursive query as
\begin{equation}
q_t^{\mathrm{FO}}(D;m_t,h_t)
=
r_t(D;m_t,h_t)
=
\beta s_t(D;m_t)
+
(1-\beta)u_{t-1}^{\mathrm{CA}}(h_t).
\label{eq:theory_fo_query}
\end{equation}
\end{definition}

\begin{proposition}[Mechanism-level modification]
\label{prop:fodpsgd_theory_mechanism_level}
Suppose \(\beta<1\) and \(K_t\ge 2\). Then the FO-DP-SGD query
\(q_t^{\mathrm{FO}}(D;m_t,h_t)\) differs structurally from the classical
clipped-sum query \(q_t^{\mathrm{std}}(D;m_t)\). In particular, FO-DP-SGD is not
merely a post-processing transformation of a standard DP-SGD private release;
it modifies the sum-level query itself before the fresh Gaussian perturbation
at iteration \(t\) is applied.
\end{proposition}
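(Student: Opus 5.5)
The plan has two parts. First, show that the two queries are genuinely different functions. Second, show that the FO-DP-SGD release at step \(t\) cannot be written as a deterministic map applied to the standard DP-SGD release at step \(t\) (together with the transcript).

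\textbf{Structural difference.} Fix \(h_t\) and \(m_t\) with \(K_t\ge 2\). By Proposition~\ref{prop:fodpsgd_theory_history_conditioned}, the memory state \(u_{t-1}^{\mathrm{CA}}(h_t)\) is a fixed vector. Since the normalized weights are positive and sum to one, it is a convex combination of the lagged releases. Subtracting the two queries gives
\[
q_t^{\mathrm{FO}}(D;m_t,h_t)-q_t^{\mathrm{std}}(D;m_t)=(1-\beta)\bigl(u_{t-1}^{\mathrm{CA}}(h_t)-s_t(D;m_t)\bigr).
\]
It therefore suffices to find a realizable transcript and dataset for which \(u_{t-1}^{\mathrm{CA}}(h_t)\neq s_t(D;m_t)\). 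This is easy to arrange. The Gaussian noise has full support, so the lagged releases \(\tilde s_{t-j}\) can take any value and \(u_{t-1}^{\mathrm{CA}}\) is almost surely nonzero. Meanwhile \(s_t(D;m_t)=0\) whenever the mask is empty. The FO query is also affine in \(s_t\) with slope \(\beta\neq 1\), so the two queries cannot agree as functions of \(s_t\) for any fixed history. This settles the structural difference.

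\textbf{Not post-processing.} Post-processing would mean that \(\tilde s_t\) equals \(f(h_t,\tilde s_t^{\mathrm{std}})\) for some map \(f\) (possibly randomized) that does not access \(D\). Because \(u_{t-1}^{\mathrm{CA}}\) is fixed given \(h_t\), one might try
\[
f(h_t,\tilde s^{\mathrm{std}}_t)=\beta\tilde s_t^{\mathrm{std}}+(1-\beta)u^{\mathrm{CA}}_{t-1}.
\]
This map has the wrong noise, \(\beta Z_t\) in place of \(Z_t\). To match the distribution, one would add independent Gaussian noise of variance \((1-\beta^2)\sigma^2C^2\). That \emph{does} produce the same law. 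So a distributional post-processing representation exists, and the stated claim cannot be read as a distributional impossibility. This step is the main obstacle.

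The plan is therefore to interpret ``merely a post-processing transformation'' at the level of the query, which is what the statement actually asserts. The claim becomes that the fresh noise \(Z_t\) is added to \(r_t\) rather than to \(s_t\), and that the released sum-level statistic is a different Gaussian mechanism. That mechanism has noise-to-sensitivity ratio \(\sigma/\beta\) and a mean shifted by the history, versus \(\sigma\) for the standard release. I will show this mechanism is realized directly. Concretely, the mean of \(\tilde s_t\) conditional on \((h_t,m_t)\) is \(r_t\), while its covariance is \(\sigma^2C^2I\), not \(\beta^2\sigma^2C^2I\).

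By contrast, the pathwise post-processing \(\beta\tilde s^{\mathrm{std}}_t+(1-\beta)u_{t-1}\) of the same noise draw has covariance \(\beta^2\sigma^2C^2I\). This pathwise map is the Post-FM construction in \eqref{eq:post_fm_update_direction_results}. Comparing covariances shows that the FO release is not that deterministic filter of a standard release sharing the same fresh noise. I will close by remarking honestly that the structural claim concerns where memory enters relative to the noise, and is not a claim of distributional non-simulability.
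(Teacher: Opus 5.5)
Your proposal is correct. Its core matches the paper's proof, which notes that for $\beta<1$, $K_t\ge 2$ the FO query carries the term $(1-\beta)u_{t-1}^{\mathrm{CA}}(h_t)$ absent from $s_t(D;m_t)$, and that $Z_t$ is added to this recursive query, so memory enters before noise ``by construction.'' You go further in two ways.

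On the structural half, you actually show the two queries are different functions: the difference is $(1-\beta)\bigl(u_{t-1}^{\mathrm{CA}}(h_t)-s_t(D;m_t)\bigr)$, and the FO query is affine in $s_t$ with slope $\beta\neq 1$. The paper merely asserts the memory term is nonzero. You need only one realizable transcript with $u_{t-1}^{\mathrm{CA}}\neq s_t$ (e.g.\ all past releases equal to a common nonzero vector), so you can drop the unproved ``almost surely nonzero'' claim.

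On the post-processing half, you expose what the paper's argument glosses over. Conditionally on $h_t$, adding independent $\mathcal N\bigl(0,(1-\beta^2)\sigma^2C^2I\bigr)$ noise to $\beta\tilde s_t^{\mathrm{std}}+(1-\beta)u_{t-1}^{\mathrm{CA}}(h_t)$ reproduces the law of $\tilde s_t$. So the ``not post-processing'' claim has no distributional content and can only be read constructionally, which is the reading the paper adopts implicitly.

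Your caveat can be sharpened. Since $\tilde s_t=\beta\bigl(s_t(D;m_t)+Z_t/\beta\bigr)+(1-\beta)u_{t-1}^{\mathrm{CA}}(h_t)$, each FO release is even a \emph{deterministic}, pathwise post-processing of a standard clipped-sum release with noise multiplier $\sigma/\beta$. Recursively, so is the whole transcript. This is the same fact that yields the $\sigma/\beta$ accounting. Hence your covariance comparison is valid, but it only separates the FO release from a same-noise filter of a multiplier-$\sigma$ release. Like the paper's proof, it certifies where the memory term sits in the construction, not an intrinsic property of the released output.

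One minor point: the map $\beta\tilde s_t^{\mathrm{std}}+(1-\beta)u_{t-1}^{\mathrm{CA}}(h_t)$ is only a one-step analogue of Post-FM-DP-SGD. Post-FM builds its memory from past standard releases $\tilde g^{\mathrm{std}}_{t-j}$, not from the FO transcript.
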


\begin{proof}
When \(\beta<1\) and \(K_t\ge 2\), the recursive query contains the nonzero
history-dependent term
\[
(1-\beta)u_{t-1}^{\mathrm{CA}}(h_t).
\]
By contrast, the classical DP-SGD clipped-sum query is exactly
\[
q_t^{\mathrm{std}}(D;m_t)=s_t(D;m_t)
\]
and contains no contribution from previous private releases. Therefore,
\(q_t^{\mathrm{FO}}\) and \(q_t^{\mathrm{std}}\) are structurally different
query objects.

Moreover, the fresh Gaussian noise in FO-DP-SGD is applied to the recursive
query:
\[
\tilde s_t=q_t^{\mathrm{FO}}(D;m_t,h_t)+Z_t.
\]
Thus, FO-DP-SGD first constructs a new history-dependent sum-level query and
then privatizes that query. A purely post-processing construction would first
release a standard private DP-SGD quantity and only afterward transform the
already-private output. Hence, FO-DP-SGD is a mechanism-level modification.
\end{proof}

\begin{theorem}[Per-step RDP accounting]
\label{thm:fodpsgd_theory_per_step_privacy}
Let
\[
\varepsilon_{\mathrm{SGM}}(\lambda;q,\rho)
\]
denote any valid R\'enyi differential privacy upper bound, at order
\(\lambda>1\), for the Poisson-subsampled Gaussian mechanism with subsampling
probability \(q\) and noise-to-sensitivity ratio \(\rho\). Conditioned on any
realizable prior private transcript \(h_t\), the FO-DP-SGD release
\[
\tilde s_t=r_t(D;m_t,h_t)+Z_t
\]
satisfies
\begin{equation}
\left(\lambda,
\varepsilon_{\mathrm{SGM}}(\lambda;q,\sigma/\beta)
\right)\text{-RDP}.
\label{eq:theory_per_step_rdp}
\end{equation}
\end{theorem}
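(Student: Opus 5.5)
The plan is to reduce the conditional FO-DP-SGD release to a scaled copy of the standard Poisson-subsampled Gaussian mechanism, and then transfer its RDP bound.

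\textbf{Step 1: fix the conditioning.} Fix a realizable prior transcript \(h_t\). By Proposition~\ref{prop:fodpsgd_theory_history_conditioned}, \(\theta_t\) and the memory state \(v:=(1-\beta)u_{t-1}^{\mathrm{CA}}(h_t)\) are deterministic. Hence, as a function of \(D\), the release has the form
\[
\tilde s_t=\beta\, s_t(D;m_t)+v+Z_t,
\]
where \(m_t\) is the fresh Poisson mask, \(Z_t\sim\mathcal N(0,\sigma^2C^2I)\) is independent noise, and \(v\) is constant.

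\textbf{Step 2: post-processing reduction.} Define the auxiliary mechanism
\[
M(D)=s_t(D;m_t)+Z_t/\beta, \qquad Z_t/\beta\sim\mathcal N\bigl(0,(\sigma/\beta)^2C^2I\bigr).
\]
Then \(\tilde s_t=\beta M(D)+v\), which is a fixed, data-independent affine bijection of \(M(D)\). The map is invertible because \(\beta>0\). By the post-processing property of Rényi divergence, the divergence between the output laws under \(D\) and \(D'\) is at most that of \(M\); invertibility in fact gives equality.

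\textbf{Step 3: identify \(M\) as the standard mechanism.} \(M\) is precisely the Poisson-subsampled Gaussian mechanism with sampling rate \(q\), applied to a sum of per-example contributions of norm at most \(C\), with noise standard deviation \((\sigma/\beta)C\). Its noise-to-sensitivity ratio is therefore \(\sigma/\beta\). Lemma~\ref{lem:fodpsgd_delta_s}, or equivalently Theorem~\ref{thm:fodpsgd_theory_conditional_sensitivity} before rescaling, supplies the fixed-mask sensitivity \(C\) needed for the standard analysis. Applying the assumed valid bound then gives \((\lambda,\varepsilon_{\mathrm{SGM}}(\lambda;q,\sigma/\beta))\)-RDP for \(M\), and hence for \(\tilde s_t\) conditioned on \(h_t\).

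\textbf{Main obstacle.} The subtle point is that the sensitivity bound in Theorem~\ref{thm:fodpsgd_theory_conditional_sensitivity} holds only for a \emph{fixed} mask, whereas the subsampled-Gaussian bound concerns the mixture over \(m_t\). So it is not enough to cite the fixed-mask sensitivity and plug in. One must check that the standard subsampled-Gaussian analysis applies to \(M\), which requires three things:

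\begin{enumerate}
\item The mask is freshly drawn, independent of \(h_t\) and of the noise.
\item Under add/remove adjacency, the differing example enters the sum only when its own Bernoulli indicator fires.
\item Its clipped gradient, evaluated at the fixed \(\theta_t\), has norm at most \(C\).
\end{enumerate}

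These are exactly the hypotheses under which the Poisson-subsampled Gaussian RDP bounds are proved, for instance in Mironov et al.'s analysis. I would verify them explicitly. I would also note that the affine shift \(v\) is common to \(D\) and \(D'\), because it depends only on \(h_t\), which is fixed in the conditioning. That observation justifies treating the shift as post-processing rather than as part of the data-dependent query.
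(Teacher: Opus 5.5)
Your proposal is correct and follows essentially the same argument as the paper. Conditioned on $h_t$, the memory term is a fixed shift, and the data-dependent part $\beta s_t(D;m_t)$ has fixed-mask sensitivity $\beta C$ against noise of scale $\sigma C$, so the Poisson-subsampled Gaussian bound applies with ratio $\sigma/\beta$. Your explicit rescaling $M(D)=s_t(D;m_t)+Z_t/\beta$, with $\tilde s_t=\beta M(D)+v$ as a bijective post-processing, and your check that the SGM bound relies on the mixture over the fresh mask rather than on fixed-mask sensitivity alone, spell out in more detail what the paper states in its proof and in Remark~\ref{rem:fodpsgd_theory_conditioning}.
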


\begin{proof}
Conditioned on the prior transcript \(h_t\), the memory term is fixed and the
only fresh data-dependent quantity is the Poisson-subsampled clipped sum
scaled by \(\beta\). By Theorem~\ref{thm:fodpsgd_theory_conditional_sensitivity},
the fixed-mask sensitivity of the recursive query is at most \(\beta C\). The
added Gaussian noise has standard deviation \(\sigma C\) at the sum level.
Therefore, the effective noise-to-sensitivity ratio is
\[
\frac{\sigma C}{\beta C}
=
\frac{\sigma}{\beta}.
\]
Since the current data access is performed through Poisson subsampling with
probability \(q\), the release is accounted for using the Poisson-subsampled
Gaussian mechanism with ratio \(\sigma/\beta\). This gives the stated
per-step RDP bound.
\end{proof}

\begin{theorem}[Adaptive composition over \(T\) iterations]
\label{thm:fodpsgd_theory_total_privacy}
Suppose FO-DP-SGD is run for \(T\) iterations with fixed
\((q,\beta,\sigma)\). Then, for every R\'enyi order \(\lambda>1\), the transcript
\[
\mathcal T(D)
=
(\tilde s_0,\tilde s_1,\ldots,\tilde s_{T-1})
\]
satisfies
\begin{equation}
\left(\lambda,
\varepsilon_{\mathrm{tot}}(\lambda)
\right)\text{-RDP},
\qquad
\varepsilon_{\mathrm{tot}}(\lambda)
=
T\,\varepsilon_{\mathrm{SGM}}(\lambda;q,\sigma/\beta).
\label{eq:theory_total_rdp}
\end{equation}
Consequently, for any \(\delta\in(0,1)\), the mechanism satisfies
\((\varepsilon_\delta,\delta)\)-DP with
\begin{equation}
\varepsilon_\delta
=
\inf_{\lambda>1}
\left\{
T\,\varepsilon_{\mathrm{SGM}}(\lambda;q,\sigma/\beta)
+
\frac{\log(1/\delta)}{\lambda-1}
\right\}.
\label{eq:theory_rdp_to_dp}
\end{equation}
If \((q_t,\beta_t,\sigma_t)\) vary across iterations, the more general
composition bound is
\begin{equation}
\varepsilon_{\mathrm{tot}}(\lambda)
=
\sum_{t=0}^{T-1}
\varepsilon_{\mathrm{SGM}}(\lambda;q_t,\sigma_t/\beta_t).
\label{eq:theory_variable_total_rdp}
\end{equation}
\end{theorem}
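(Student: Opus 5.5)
Theorem~\ref{thm:fodpsgd_theory_total_privacy} follows from three standard ingredients: the per-step bound of Theorem~\ref{thm:fodpsgd_theory_per_step_privacy}, adaptive sequential composition of R\'enyi DP, and the standard RDP-to-$(\varepsilon,\delta)$ conversion. The plan is to view the transcript $\mathcal T(D)$ as an adaptive composition of $T$ mechanisms $\mathcal M_0,\dots,\mathcal M_{T-1}$. Here $\mathcal M_t$ takes as input the dataset $D$ together with the previously released outputs $h_t=(\tilde s_0,\dots,\tilde s_{t-1})$, and produces $\tilde s_t=r_t(D;m_t,h_t)+Z_t$. The fresh Poisson mask $m_t$ and the fresh Gaussian noise $Z_t$ are drawn independently of everything before them. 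By Proposition~\ref{prop:fodpsgd_theory_history_conditioned}, the quantities $\theta_t$, $u_{t-1}^{\mathrm{CA}}(h_t)$ and all weights are deterministic functions of $h_t$. So $\mathcal M_t(\cdot\,;h_t)$ is a well-defined randomized map of $D$ for each fixed $h_t$.

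The first step is to check the hypothesis of adaptive composition: for every fixed realizable $h_t$ and every adjacent pair $D\sim D'$,
\[
D_\lambda\bigl(\mathcal M_t(D;h_t)\,\|\,\mathcal M_t(D';h_t)\bigr)\le \varepsilon_{\mathrm{SGM}}(\lambda;q,\sigma/\beta).
\]
This is exactly what Theorem~\ref{thm:fodpsgd_theory_per_step_privacy} supplies. The fixed shift $(1-\beta)u_{t-1}^{\mathrm{CA}}(h_t)$ is identical under $D$ and $D'$, and a common translation leaves R\'enyi divergence unchanged. What remains is $\beta s_t(D;m_t)+Z_t$, which is a Poisson-subsampled Gaussian mechanism with noise-to-sensitivity ratio $\sigma/\beta$ by Theorem~\ref{thm:fodpsgd_theory_conditional_sensitivity}.

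The second step is the chain rule for R\'enyi divergence, as in Mironov's adaptive composition theorem. Writing the joint law of $\mathcal T(D)$ as a product of conditionals, one inducts on $t$: the divergence of the joint law is at most the divergence of the law of $h_t$ plus the supremum over $h_t$ of the conditional divergence. This yields $\varepsilon_{\mathrm{tot}}(\lambda)=\sum_t \varepsilon_{\mathrm{SGM}}(\lambda;q_t,\sigma_t/\beta_t)$, which is the variable-parameter bound~\eqref{eq:theory_variable_total_rdp}. Specializing to constant parameters gives $T\varepsilon_{\mathrm{SGM}}(\lambda;q,\sigma/\beta)$.

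The final step applies the standard conversion: $(\lambda,\varepsilon)$-RDP implies $(\varepsilon+\log(1/\delta)/(\lambda-1),\delta)$-DP. Since this holds for every order $\lambda>1$, taking the infimum over $\lambda$ gives~\eqref{eq:theory_rdp_to_dp}.

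The main obstacle is the first step: making sure the per-step guarantee holds uniformly over \emph{every} realizable history $h_t$, as adaptive composition requires. The subtle point is that the mask $m_t$ must remain random and independent of $h_t$, so that subsampling amplification still applies. The plan handles this by conditioning only on $h_t$ and not on $m_t$. The fixed-mask sensitivity bound $\beta C$ from Theorem~\ref{thm:fodpsgd_theory_conditional_sensitivity} then feeds the subsampled Gaussian analysis pointwise in $m_t$. One further point must be confirmed: the memory never reuses earlier masks or raw data, only released outputs, so no hidden dependence on $D$ enters through $h_t$ beyond what is already accounted for.
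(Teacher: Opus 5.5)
Your proposal is correct and follows the same route as the paper: the per-step bound from Theorem~\ref{thm:fodpsgd_theory_per_step_privacy}, then adaptive sequential RDP composition (valid because each release depends on the past only through the released transcript $h_t$), then the standard RDP-to-$(\varepsilon,\delta)$ conversion optimized over $\lambda$. You make explicit what the paper leaves implicit: the chain-rule inequality behind adaptive composition, translation invariance for the fixed memory shift, and conditioning on $h_t$ but not on the fresh mask $m_t$, so that subsampling amplification is retained.
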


\begin{proof}
The per-step guarantee follows from
Theorem~\ref{thm:fodpsgd_theory_per_step_privacy}. Although the algorithm is
adaptive, each release may depend on previous private releases only through the
prior private transcript and deterministic post-processing of that transcript.
Adaptive sequential composition of RDP therefore applies. Summing the per-step
RDP costs gives
\[
\varepsilon_{\mathrm{tot}}(\lambda)
=
\sum_{t=0}^{T-1}\varepsilon_t(\lambda).
\]
For fixed \((q,\beta,\sigma)\), each term equals
\(\varepsilon_{\mathrm{SGM}}(\lambda;q,\sigma/\beta)\), giving
\eqref{eq:theory_total_rdp}. The conversion from RDP to
\((\varepsilon,\delta)\)-DP yields \eqref{eq:theory_rdp_to_dp}. The
time-varying case follows by summing the corresponding per-step costs.
\end{proof}

\begin{remark}[Why \(\sigma/\beta\) appears]
\label{rem:fodpsgd_theory_sigma_over_beta}
The ratio \(\sigma/\beta\) is not an additional modeling assumption. It follows
from the recursive-query sensitivity bound and the sum-level Gaussian
perturbation. The current-step sensitivity is bounded by \(\beta C\), while the
Gaussian noise has standard deviation \(\sigma C\). Hence the effective
noise-to-sensitivity ratio is \(\sigma/\beta\).
\end{remark}

\begin{remark}[Conditioning and privacy amplification]
\label{rem:fodpsgd_theory_conditioning}
The transcript conditioning isolates the newly data-dependent component of the
current recursive query. The fixed-mask sensitivity argument characterizes the
sensitivity of the clipped subsampled sum for each realized mask. The final
privacy accountant does not condition away the subsampling randomness; instead,
it retains Poisson subsampling as part of the mechanism and uses the
subsampled-Gaussian RDP bound.
\end{remark}

\subsection{Signal--Memory--Noise Decomposition}
\label{subsec:fodpsgd_theory_decomposition}

We next decompose the recursive release into current signal, inherited
recursive-state memory, inherited release-noise memory, and fresh Gaussian
noise. This decomposition is useful for understanding both the optimization
benefit and the limitations of the method.

\begin{definition}[Inherited recursive-state memory and inherited release-noise memory]
\label{def:fodpsgd_theory_memory_terms}
For \(K_t\ge 2\), define
\begin{align}
M_t^{\mathrm{rec}}
&=
(1-\beta)\sum_{j=1}^{K_t-1}\hat a_{t,j}^{\mathrm{CA}}\,
r_{t-j}(D;m_{t-j},h_{t-j}),
\label{eq:theory_m_rec}
\\
M_t^{\mathrm{noise}}
&=
(1-\beta)\sum_{j=1}^{K_t-1}\hat a_{t,j}^{\mathrm{CA}}\,Z_{t-j},
\label{eq:theory_m_noise}
\end{align}
where each previous release satisfies
\[
\tilde s_{t-j}
=
r_{t-j}(D;m_{t-j},h_{t-j})+Z_{t-j}.
\]
If \(K_t=1\), both \(M_t^{\mathrm{rec}}\) and \(M_t^{\mathrm{noise}}\) are
defined to be zero.
\end{definition}

\begin{lemma}[Signal--memory--noise decomposition]
\label{lem:fodpsgd_theory_decomposition}
At every iteration \(t\), the recursive query, private release, and noisy
gradient admit the decompositions
\begin{align}
r_t(D;m_t,h_t)
&=
\beta s_t(D;m_t)
+
M_t^{\mathrm{rec}}
+
M_t^{\mathrm{noise}},
\label{eq:fodpsgd_theory_rt_decomp_new}
\\
\tilde s_t
&=
\beta s_t(D;m_t)
+
M_t^{\mathrm{rec}}
+
M_t^{\mathrm{noise}}
+
Z_t,
\label{eq:fodpsgd_theory_tildes_decomp_new}
\\
\tilde g_t
&=
\frac{1}{L}
\left(
\beta s_t(D;m_t)
+
M_t^{\mathrm{rec}}
+
M_t^{\mathrm{noise}}
+
Z_t
\right).
\label{eq:fodpsgd_theory_tildeg_decomp_new}
\end{align}
\end{lemma}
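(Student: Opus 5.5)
The decomposition is a direct algebraic unfolding of the recursive query, so I will argue by substitution and split into the two cases $K_t=1$ and $K_t\ge 2$.

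\emph{Case $K_t=1$.} This covers $t=0$, and also all $t$ when $K=1$. By construction $u_{t-1}^{\mathrm{CA}}(h_t)=0$, so $r_t(D;m_t,h_t)=\beta s_t(D;m_t)$. By Definition~\ref{def:fodpsgd_theory_memory_terms}, $M_t^{\mathrm{rec}}=M_t^{\mathrm{noise}}=0$. Hence \eqref{eq:fodpsgd_theory_rt_decomp_new} holds trivially.

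\emph{Case $K_t\ge 2$.} Start from \eqref{eq:theory_recursive_query_recall} and the memory state \eqref{eq:fodpsgd_memory_state}:
\[
u_{t-1}^{\mathrm{CA}}(h_t)=\sum_{j=1}^{K_t-1}\hat a_{t,j}^{\mathrm{CA}}\,\tilde s_{t-j}.
\]
For each lag $j=1,\dots,K_t-1$, the index $t-j$ lies in $\{0,\dots,t-1\}$, so \eqref{eq:theory_release_recall} applied at iteration $t-j$ gives
\[
\tilde s_{t-j}=r_{t-j}(D;m_{t-j},h_{t-j})+Z_{t-j}.
\]
Substitute this into the memory sum. The weights $\hat a_{t,j}^{\mathrm{CA}}$ are fixed numbers given $h_t$, so distributing them over the sum splits $(1-\beta)u_{t-1}^{\mathrm{CA}}(h_t)$ into exactly the two terms $M_t^{\mathrm{rec}}$ of \eqref{eq:theory_m_rec} and $M_t^{\mathrm{noise}}$ of \eqref{eq:theory_m_noise}. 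Adding $\beta s_t(D;m_t)$ then yields \eqref{eq:fodpsgd_theory_rt_decomp_new}.

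The remaining two identities follow directly. Adding $Z_t$ to both sides of \eqref{eq:fodpsgd_theory_rt_decomp_new} and using \eqref{eq:theory_release_recall} gives \eqref{eq:fodpsgd_theory_tildes_decomp_new}. Dividing by $L$, using $\tilde g_t=\tilde s_t/L$, gives \eqref{eq:fodpsgd_theory_tildeg_decomp_new}.

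There is no real obstacle here; the one point needing care is bookkeeping. The substitution must use the release identity at the \emph{earlier} iteration $t-j$, together with that iteration's own mask and transcript, and not the current one. The weights $\hat a_{t,j}^{\mathrm{CA}}$ also depend on the noisy releases. That dependence is harmless for this purely algebraic identity, since the decomposition is pointwise, holding for every realization. It is not a claim about independence of the terms, so no distributional argument is required.
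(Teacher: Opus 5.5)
Your proposal is correct and matches the paper's proof: you substitute $\tilde s_{t-j}=r_{t-j}(D;m_{t-j},h_{t-j})+Z_{t-j}$ into the memory sum, split it by linearity into $M_t^{\mathrm{rec}}$ and $M_t^{\mathrm{noise}}$, then add $Z_t$ and divide by $L$. Your explicit $K_t=1$ case is a small addition that the paper leaves implicit through the zero convention in Definition~\ref{def:fodpsgd_theory_memory_terms}.
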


\begin{proof}
Starting from the recursive query,
\[
r_t(D;m_t,h_t)
=
\beta s_t(D;m_t)
+
(1-\beta)\sum_{j=1}^{K_t-1}\hat a_{t,j}^{\mathrm{CA}}\tilde s_{t-j},
\]
substitute
\[
\tilde s_{t-j}
=
r_{t-j}(D;m_{t-j},h_{t-j})+Z_{t-j}
\]
for each active lag \(j\). Then
\[
\begin{aligned}
r_t(D;m_t,h_t)
&=
\beta s_t(D;m_t)
+
(1-\beta)\sum_{j=1}^{K_t-1}\hat a_{t,j}^{\mathrm{CA}}
\bigl(r_{t-j}(D;m_{t-j},h_{t-j})+Z_{t-j}\bigr) \\
&=
\beta s_t(D;m_t)
+
(1-\beta)\sum_{j=1}^{K_t-1}\hat a_{t,j}^{\mathrm{CA}}
r_{t-j}(D;m_{t-j},h_{t-j}) \\
&\qquad
+
(1-\beta)\sum_{j=1}^{K_t-1}\hat a_{t,j}^{\mathrm{CA}}Z_{t-j}.
\end{aligned}
\]
By Definition~\ref{def:fodpsgd_theory_memory_terms}, the second and third terms
are \(M_t^{\mathrm{rec}}\) and \(M_t^{\mathrm{noise}}\). This proves
\eqref{eq:fodpsgd_theory_rt_decomp_new}. Adding the fresh Gaussian noise \(Z_t\)
gives \eqref{eq:fodpsgd_theory_tildes_decomp_new}, and dividing by \(L\) gives
\eqref{eq:fodpsgd_theory_tildeg_decomp_new}.
\end{proof}

\begin{proposition}[Convexity of the memory weights]
\label{prop:fodpsgd_theory_convex_weights}
For every iteration \(t\) with \(K_t\ge 2\), the normalized confidence-aware
fractional weights satisfy
\[
\hat a_{t,j}^{\mathrm{CA}}\ge 0,
\qquad
\sum_{j=1}^{K_t-1}\hat a_{t,j}^{\mathrm{CA}}=1.
\]
Hence the private memory state
\[
u_{t-1}^{\mathrm{CA}}(h_t)
=
\sum_{j=1}^{K_t-1}\hat a_{t,j}^{\mathrm{CA}}\tilde s_{t-j}
\]
is a convex combination of previously released private sums.
\end{proposition}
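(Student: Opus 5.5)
The argument follows directly from the definitions in \eqref{eq:fodpsgd_kernel} and \eqref{eq:fodpsgd_norm_weights}. Fix \(t\) with \(K_t\ge 2\), so that the index set \(\{1,\dots,K_t-1\}\) is nonempty.

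The first step is to show that every raw coefficient \(a_{t,j}^{\mathrm{CA}}\) is well defined and strictly positive.
\begin{itemize}
\item The power-law factor \((j+1)^{\alpha-1}\) is positive, since \(j+1\ge 2>0\).
\item The exponential factor is positive for any real exponent. That exponent, \(-(\lambda+\chi_t\tau\nu_{t,j})j\), is a finite real number.
\item \(\nu_{t,j}\) is finite because its denominator satisfies \(\max(\|\bar s_t^{\mathrm{EMA}}\|_2,\kappa)+\varepsilon\ge\kappa+\varepsilon>0\).
\item \(\chi_t\in[0,1)\) is finite because \(\zeta>0\).
\end{itemize}
Here the constants \(\kappa\), \(\varepsilon\), \(\zeta\) must be invoked explicitly; without them the weights could be undefined.

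Next, the normalizing denominator \(\sum_{l=1}^{K_t-1}a_{t,l}^{\mathrm{CA}}\) is a finite sum of strictly positive terms, hence strictly positive. Therefore each \(\hat a_{t,j}^{\mathrm{CA}}\) is a ratio of a positive number to a positive number, which gives \(\hat a_{t,j}^{\mathrm{CA}}>0\), and in particular \(\ge 0\). Summing the normalized weights over \(j\), the numerator becomes exactly the denominator, so the sum equals \(1\).

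The convex-combination claim then follows immediately from the definition \eqref{eq:fodpsgd_memory_state}. There, \(u_{t-1}^{\mathrm{CA}}(h_t)\) is the linear combination of \(\tilde s_{t-1},\dots,\tilde s_{t-K_t+1}\) with these nonnegative weights summing to one. As a corollary, \(\|u_{t-1}^{\mathrm{CA}}\|_2\le\max_j\|\tilde s_{t-j}\|_2\).

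There is no real obstacle here. The only point requiring care is the well-definedness and positivity of the normalization: confirming that \(K_t\ge 2\) makes the sum nonempty, and that the stability constants keep \(\nu_{t,j}\) and \(\chi_t\) finite. Both are guaranteed by the standing parameter ranges.
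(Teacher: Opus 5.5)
Your proof is correct and follows the same route as the paper: show that each raw coefficient $a_{t,j}^{\mathrm{CA}}$ is strictly positive, so the normalized weights are nonnegative and sum to one by construction, which makes $u_{t-1}^{\mathrm{CA}}(h_t)$ a convex combination. Your added checks are a slight sharpening of the paper's argument: the finiteness of $\nu_{t,j}$ and $\chi_t$ via $\kappa$, $\varepsilon$, $\zeta$, and the nonemptiness of the normalizing sum when $K_t\ge 2$. The paper instead cites the sign conditions on $\lambda,\chi_t,\tau,\nu_{t,j}$, which are not actually needed for positivity of the exponential factor.
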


\begin{proof}
The raw weights are
\[
a_{t,j}^{\mathrm{CA}}
=
(j+1)^{\alpha-1}
\exp\!\Bigl(-(\lambda+\chi_t\tau\nu_{t,j})j\Bigr).
\]
Since \(\alpha\in(0,1]\), \(j\ge 1\), \(\lambda\ge0\), \(\chi_t\ge0\),
\(\tau\ge0\), and \(\nu_{t,j}\ge0\), every raw coefficient is positive. The
normalized coefficient is
\[
\hat a_{t,j}^{\mathrm{CA}}
=
\frac{a_{t,j}^{\mathrm{CA}}}{\sum_{\ell=1}^{K_t-1}a_{t,\ell}^{\mathrm{CA}}},
\]
so each normalized coefficient is nonnegative and the normalized coefficients
sum to one.
\end{proof}

\begin{remark}[Inherited noise is not explicitly denoised]
\label{rem:fodpsgd_inherited_noise_no_denoising}
The decomposition in Lemma~\ref{lem:fodpsgd_theory_decomposition} shows that
the memory state contains both inherited recursive-state signal and inherited
Gaussian release noise. FO-DP-SGD can down-weight unreliable history through
inconsistency-aware and confidence-aware weighting, but it does not explicitly
separate or remove the noise terms \(Z_{t-j}\) from previous releases. Designing
transcript-based denoising or uncertainty-aware filtering rules for the private
memory state is therefore a natural direction for future work.
\end{remark}

\subsection{Interpretation of EMA-Based Confidence-Aware Tempering}
\label{subsec:fodpsgd_theory_interpretation}

We now analyze how the EMA-based confidence-aware tempering rule shapes the
private memory mechanism. The goal of this subsection is structural
interpretation rather than privacy accounting, which was established above.

The raw confidence-aware fractional kernel is
\begin{equation}
a_{t,j}^{\mathrm{CA}}
=
(j+1)^{\alpha-1}
\exp\!\Bigl(-(\lambda+\chi_t\tau\nu_{t,j})j\Bigr),
\qquad j=1,\ldots,K_t-1.
\label{eq:theory_ca_kernel}
\end{equation}

\begin{proposition}[Dependence on the inconsistency-aware tempering coefficient \(\tau\)]
\label{prop:fodpsgd_theory_tau_monotonicity}
Fix \(t\), \(j\), \(\alpha\), \(\lambda\), \(\chi_t\), and \(\nu_{t,j}\). The raw
coefficient \(a_{t,j}^{\mathrm{CA}}\) is nonincreasing in \(\tau\). If
\(\chi_t>0\) and \(\nu_{t,j}>0\), then it decreases strictly as \(\tau\)
increases.
\end{proposition}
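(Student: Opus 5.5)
The plan is to treat the raw coefficient as a function of \(\tau\) alone and read off its monotonicity from the exponential. With \(t\), \(j\), \(\alpha\), \(\lambda\), \(\chi_t\), and \(\nu_{t,j}\) held fixed, \eqref{eq:theory_ca_kernel} can be written as
\[
a_{t,j}^{\mathrm{CA}}(\tau)
=
c_{t,j}\,\exp\!\bigl(-b_{t,j}\,\tau\bigr),
\qquad
c_{t,j}=(j+1)^{\alpha-1}e^{-\lambda j},
\qquad
b_{t,j}=\chi_t\,\nu_{t,j}\,j .
\]
A key point is that \(\chi_t\) and \(\nu_{t,j}\) do not depend on \(\tau\). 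They are defined in \eqref{eq:fodpsgd_confidence} and \eqref{eq:fodpsgd_inconsistency} purely from the prior transcript and the EMA trend. Treating them as constants is therefore legitimate and matches the statement's hypothesis.

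The next step is to check the signs of the two constants. Since \(j\ge 1\) and \(\alpha\in(0,1]\), we have \((j+1)^{\alpha-1}>0\), and \(e^{-\lambda j}>0\), so \(c_{t,j}>0\). This is the same positivity already used in Proposition~\ref{prop:fodpsgd_theory_convex_weights}. Next, \(\chi_t\ge 0\) by \eqref{eq:fodpsgd_confidence}. Also \(\nu_{t,j}\ge 0\), because it is a norm divided by a denominator that is at least \(\kappa+\varepsilon>0\). Together with \(j\ge1\), this gives \(b_{t,j}\ge 0\).

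Monotonicity then follows from a derivative computation on \(\tau\ge 0\):
\[
\frac{\partial a_{t,j}^{\mathrm{CA}}}{\partial\tau}
=
-\,b_{t,j}\,a_{t,j}^{\mathrm{CA}}\le 0 .
\]
Alternatively, for \(\tau_1<\tau_2\), the ratio is \(a(\tau_2)/a(\tau_1)=e^{-b_{t,j}(\tau_2-\tau_1)}\le 1\), which avoids calculus altogether. For the strict part, if \(\chi_t>0\) and \(\nu_{t,j}>0\), then \(b_{t,j}>0\) because \(j\ge 1\). The ratio is then strictly less than \(1\), so the coefficient strictly decreases. If instead either factor vanishes, then \(b_{t,j}=0\) and the coefficient is constant in \(\tau\), which is consistent with "nonincreasing."

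No step here is a real obstacle. The only care needed is to state explicitly that \(\chi_t\) and \(\nu_{t,j}\) are independent of \(\tau\), and to note that \(j\ge1\) so the factor \(j\) cannot kill strictness. The claim concerns the raw coefficient only; the normalized weights \(\hat a_{t,j}^{\mathrm{CA}}\) need not be monotone in \(\tau\), and no claim about them is made.
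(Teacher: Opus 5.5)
Your argument is correct and is essentially the paper's proof: both isolate the $\tau$-dependence in the factor $\exp\!\bigl(-(\lambda+\chi_t\tau\nu_{t,j})j\bigr)$ and use $j\ge 1$, $\chi_t\ge 0$, $\nu_{t,j}\ge 0$ (strictly positive for the strict claim). Your factorization $c_{t,j}e^{-b_{t,j}\tau}$ also makes explicit that the prefactor is positive, which the paper leaves implicit. One nuance: $\chi_t$ and $\nu_{t,j}$ are free of $\tau$ only for a fixed transcript $h_t$, and the transcript itself generally depends on $\tau$ through earlier memory weights, so it is the statement's hypothesis of holding them fixed that licenses treating them as constants.
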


\begin{proof}
The dependence on \(\tau\) appears only through
\[
\exp\!\Bigl(-(\lambda+\chi_t\tau\nu_{t,j})j\Bigr).
\]
Since \(j\ge1\), \(\chi_t\ge0\), and \(\nu_{t,j}\ge0\), increasing \(\tau\) can
only make the exponent more negative. If \(\chi_t>0\) and \(\nu_{t,j}>0\), the
exponent decreases strictly.
\end{proof}

\begin{proposition}[Suppression of inconsistent historical releases]
\label{prop:fodpsgd_theory_inconsistency_suppression}
Fix \(t\), \(j\), \(\alpha\), \(\lambda\), \(\chi_t\), and \(\tau\). The raw
coefficient \(a_{t,j}^{\mathrm{CA}}\) is nonincreasing in the inconsistency
score \(\nu_{t,j}\). If \(\chi_t>0\) and \(\tau>0\), then the decrease is
strict.
\end{proposition}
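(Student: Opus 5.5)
The proof mirrors that of Proposition~\ref{prop:fodpsgd_theory_tau_monotonicity}, with the roles of \(\tau\) and \(\nu_{t,j}\) exchanged. I would treat the raw coefficient in \eqref{eq:theory_ca_kernel} as a function of the single variable \(\nu=\nu_{t,j}\ge 0\), holding \(t,j,\alpha,\lambda,\chi_t,\tau\) fixed:
\[
f(\nu)=(j+1)^{\alpha-1}\exp\!\bigl(-(\lambda+\chi_t\tau\nu)j\bigr).
\]
The prefactor \((j+1)^{\alpha-1}\) does not depend on \(\nu\). It is strictly positive since \(j\ge 1\), and the factor \(e^{-\lambda j}\) is likewise positive and constant. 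So \(f(\nu)=c\,\exp(-\chi_t\tau j\,\nu)\) with \(c>0\).

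Monotonicity then reduces to the sign of the slope \(\chi_t\tau j\) of the linear exponent. Since \(\chi_t\ge 0\) by \eqref{eq:fodpsgd_confidence}, \(\tau\ge 0\), and \(j\ge 1\), this slope is nonnegative. Concretely, take \(0\le\nu<\nu'\). Then \(\chi_t\tau j\nu\le\chi_t\tau j\nu'\), and because \(x\mapsto e^{-x}\) is strictly decreasing, \(f(\nu')\le f(\nu)\). Equivalently, one can note that \(f'(\nu)=-\chi_t\tau j\,f(\nu)\le 0\).

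For the strict statement, assume additionally \(\chi_t>0\) and \(\tau>0\). Then \(\chi_t\tau j>0\), so the exponent is strictly increasing in \(\nu\), and strict monotonicity of \(e^{-x}\) together with \(c>0\) gives \(f(\nu')<f(\nu)\).

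There is no real obstacle here. The only points needing care are that positivity of the prefactor \(c\) is required to preserve strictness, and that \(\chi_t\) and \(\tau\) are both held fixed, so \(\nu_{t,j}\) is treated as a free variable even though in the algorithm it is determined by the transcript.
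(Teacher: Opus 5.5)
Your proposal is correct and follows essentially the same argument as the paper: $\nu_{t,j}$ enters only through the exponential factor, whose exponent has nonnegative slope $\chi_t\tau j$ in $\nu_{t,j}$, and this slope is strictly positive when $\chi_t>0$ and $\tau>0$. You also note explicitly that the positive prefactor $(j+1)^{\alpha-1}e^{-\lambda j}$ is what carries strictness over to $a_{t,j}^{\mathrm{CA}}$, a small point the paper leaves implicit.
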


\begin{proof}
The dependence on \(\nu_{t,j}\) appears only through
\[
\exp\!\Bigl(-(\lambda+\chi_t\tau\nu_{t,j})j\Bigr).
\]
Since \(j\ge1\), \(\chi_t\ge0\), and \(\tau\ge0\), increasing \(\nu_{t,j}\) can
only decrease the exponential factor. The decrease is strict when
\(\chi_t>0\) and \(\tau>0\).
\end{proof}

\begin{proposition}[Dependence on the confidence factor \(\chi_t\)]
\label{prop:fodpsgd_theory_chi_monotonicity}
Fix \(t\), \(j\), \(\alpha\), \(\lambda\), \(\tau\), and \(\nu_{t,j}\). The raw
coefficient \(a_{t,j}^{\mathrm{CA}}\) is nonincreasing in \(\chi_t\). If
\(\tau>0\) and \(\nu_{t,j}>0\), then the decrease is strict.
\end{proposition}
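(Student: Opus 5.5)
The plan is to argue exactly as in Propositions~\ref{prop:fodpsgd_theory_tau_monotonicity} and~\ref{prop:fodpsgd_theory_inconsistency_suppression}, isolating how the raw coefficient depends on \(\chi_t\). In
\[
a_{t,j}^{\mathrm{CA}}
=
(j+1)^{\alpha-1}
\exp\!\Bigl(-(\lambda+\chi_t\tau\nu_{t,j})j\Bigr),
\]
the prefactor \((j+1)^{\alpha-1}\) does not involve \(\chi_t\), and it is strictly positive because \(j\ge1\). I will therefore write \(a_{t,j}^{\mathrm{CA}}=c_j\,\phi(\chi_t)\), where \(c_j>0\) and
\[
\phi(x)=\exp\!\bigl(-(\lambda+x\tau\nu_{t,j})j\bigr),
\]
and treat \(\chi_t\) as a free variable. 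It ranges over \([0,1)\) by \eqref{eq:fodpsgd_confidence}, but the argument works on all of \([0,\infty)\).

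Next I will look at the exponent \(E(x)=-(\lambda+x\tau\nu_{t,j})j=-\lambda j-x\,(\tau\nu_{t,j}j)\). This is affine in \(x\) with slope \(-\tau\nu_{t,j}j\). We have \(\tau\ge0\) by assumption, \(\nu_{t,j}\ge0\) because it is a ratio of a norm to the positive quantity \(\max(\|\bar s_t^{\mathrm{EMA}}\|_2,\kappa)+\varepsilon\), and \(j\ge1\). So the slope is \(\le0\) and \(E\) is nonincreasing. Composing with the strictly increasing function \(\exp\) and multiplying by \(c_j>0\) then shows that \(a_{t,j}^{\mathrm{CA}}\) is nonincreasing in \(\chi_t\).

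For strictness, when \(\tau>0\) and \(\nu_{t,j}>0\) the slope \(-\tau\nu_{t,j}j\) is strictly negative. Hence \(E\) is strictly decreasing, and so is \(c_j\exp(E)\).

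There is no real obstacle here. The only point that needs care is to state explicitly that \(\nu_{t,j}\ge0\) and that the denominator in \eqref{eq:fodpsgd_inconsistency} is positive (because \(\kappa,\varepsilon>0\)). I will also note that "fixing \(\chi_t\)" means treating it formally as an independent variable, even though in the algorithm it is coupled to \(\bar s_t^{\mathrm{EMA}}\), just as \(\nu_{t,j}\) was treated in Proposition~\ref{prop:fodpsgd_theory_inconsistency_suppression}.
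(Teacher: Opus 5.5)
Your proposal is correct and follows the paper's own proof: both isolate the \(\chi_t\)-dependence in the factor \(\exp\!\bigl(-(\lambda+\chi_t\tau\nu_{t,j})j\bigr)\) and use \(j\ge1\), \(\tau\ge0\), \(\nu_{t,j}\ge0\) to see that the exponent is nonincreasing in \(\chi_t\), and strictly decreasing when \(\tau>0\) and \(\nu_{t,j}>0\). You also state explicitly two points the paper leaves implicit, namely that the prefactor \((j+1)^{\alpha-1}\) is positive and that the denominator in \(\nu_{t,j}\) is positive.
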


\begin{proof}
The dependence on \(\chi_t\) appears only through
\[
\exp\!\Bigl(-(\lambda+\chi_t\tau\nu_{t,j})j\Bigr).
\]
Since \(j\ge1\), \(\tau\ge0\), and \(\nu_{t,j}\ge0\), increasing \(\chi_t\) can
only make the exponent smaller. The decrease is strict when \(\tau>0\) and
\(\nu_{t,j}>0\).
\end{proof}

Together, Propositions~\ref{prop:fodpsgd_theory_tau_monotonicity}--
\ref{prop:fodpsgd_theory_chi_monotonicity} characterize the role of
confidence-aware tempering. The inconsistency score \(\nu_{t,j}\) measures how
far a lagged private release is from the EMA-based private trend. The parameter
\(\tau\) controls the strength of inconsistency-aware attenuation. The
confidence factor \(\chi_t\) gates this attenuation according to the magnitude
of the private trend: when the trend is weak, the mechanism avoids overly
aggressive reweighting; when the trend is stronger, inconsistent history is
suppressed more strongly.

\begin{proposition}[Effect of \(\beta\) on current sensitivity and current signal]
\label{prop:fodpsgd_theory_beta_tradeoff}
Decreasing \(\beta\) reduces the current-step conditional sensitivity bound
linearly from \(C\) toward \(0\), but it also attenuates the direct coefficient
of the current clipped-sum contribution \(\beta s_t(D;m_t)\) by the same factor.
\end{proposition}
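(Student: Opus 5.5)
The argument reads both halves of the statement directly off the recursive query \eqref{eq:theory_recursive_query_recall} and the sensitivity bound of Theorem~\ref{thm:fodpsgd_theory_conditional_sensitivity}. The sensitivity claim and the signal claim are proved separately, and then combined to show they scale by the same factor.

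For the sensitivity half, fix a realizable transcript \(h_t\) and a mask \(m_t\). By Theorem~\ref{thm:fodpsgd_theory_conditional_sensitivity}, \(\Delta_r(m_t,h_t)\le \beta C\). The bound \(b(\beta)=\beta C\) is linear in \(\beta\), with \(b(1)=C\), which recovers Lemma~\ref{lem:fodpsgd_delta_s} for standard DP-SGD, and \(b(\beta)\to 0\) as \(\beta\downarrow 0\). Since \(\beta\in(0,1]\), the limit \(\beta=0\) is never attained, which is why the statement says the bound goes ``toward'' \(0\) rather than reaching it. I would also note that the bound is essentially attained. The proof of Theorem~\ref{thm:fodpsgd_theory_conditional_sensitivity} gives the identity \(\|r_t(D)-r_t(D')\|_2=\beta\|s_t(D)-s_t(D')\|_2\). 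When the differing example is sampled and has a clipped gradient of norm exactly \(C\), the actual sensitivity therefore equals \(\beta C\). So the linear reduction describes the true sensitivity, not just a loose upper bound.

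For the signal half, I would use Lemma~\ref{lem:fodpsgd_theory_decomposition}, which writes \(\tilde s_t=\beta s_t(D;m_t)+M_t^{\mathrm{rec}}+M_t^{\mathrm{noise}}+Z_t\). In this decomposition the only term carrying the fresh current-step data is \(\beta s_t(D;m_t)\). Its coefficient is \(\beta\), the same factor that multiplies \(C\) in the sensitivity bound. For two values \(0<\beta'<\beta\le 1\), both the sensitivity bound and the direct current-signal coefficient are multiplied by \(\beta'/\beta\). The ratio of direct signal to current sensitivity, \(\beta\|s_t\|_2/(\beta C)=\|s_t\|_2/C\), is therefore independent of \(\beta\). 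This makes precise the sense in which the two effects move ``by the same factor.''

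The main obstacle is interpretive rather than technical: I must make clear that ``current signal'' means the direct coefficient on \(s_t\), not the total signal in \(\tilde s_t\). The inherited term \(M_t^{\mathrm{rec}}\) contains earlier \(\beta s_{t-j}\) contributions, weighted by \((1-\beta)\) and the convex weights of Proposition~\ref{prop:fodpsgd_theory_convex_weights}. It also depends on \(\beta\), in a way that is not monotone. I would state explicitly that the proposition only concerns the fresh-data coefficient, and that the fixed noise scale \(\sigma C\) is unaffected by \(\beta\). With the quantities restricted this way, the proof is a direct combination of the two cited results.
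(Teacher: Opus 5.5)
Your proposal is correct and follows essentially the same route as the paper: the sensitivity half comes directly from the bound $\Delta_r(m_t,h_t)\le\beta C$ of Theorem~\ref{thm:fodpsgd_theory_conditional_sensitivity}, and the signal half reads off the coefficient $\beta$ on $s_t(D;m_t)$. The paper reads that coefficient from the recursive query itself rather than from Lemma~\ref{lem:fodpsgd_theory_decomposition}, which is an equivalent step. Your tightness and ratio remarks are extras the statement does not need, and tightness holds only when some admissible example at $\theta_t$ has clipped gradient norm exactly $C$ and is included by the mask.
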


\begin{proof}
By Theorem~\ref{thm:fodpsgd_theory_conditional_sensitivity},
\[
\Delta_r(m_t,h_t)\le \beta C,
\]
which is linear in \(\beta\). Thus, smaller \(\beta\) gives a smaller
current-step sensitivity bound. At the same time, the recursive query is
\[
r_t(D;m_t,h_t)
=
\beta s_t(D;m_t)
+
(1-\beta)u_{t-1}^{\mathrm{CA}}(h_t),
\]
so the direct coefficient of the current clipped sum is exactly \(\beta\).
Reducing \(\beta\) therefore weakens the immediate current-gradient signal.
\end{proof}

\begin{remark}[Role of \(\lambda\)]
\label{rem:fodpsgd_theory_lambda}
For fixed \(\alpha\), \(\tau\), \(\chi_t\), and \(\nu_{t,j}\), the parameter
\(\lambda\) applies uniform exponential damping \(e^{-\lambda j}\) across
active lags. It controls baseline temporal forgetting independently of whether
a historical release is consistent or inconsistent with the recent private
trend.
\end{remark}

\begin{remark}[Role of \(\gamma\)]
\label{rem:fodpsgd_theory_gamma}
The EMA parameter \(\gamma\in(0,1]\) controls how quickly the private trend
\[
\bar s_t^{\mathrm{EMA}}
=
\gamma\tilde s_{t-1}
+
(1-\gamma)\bar s_{t-1}^{\mathrm{EMA}}
\]
reacts to recent private history. Larger \(\gamma\) makes the trend more
responsive to the latest release, while smaller \(\gamma\) yields a smoother and
slower-moving trend proxy.
\end{remark}

\begin{remark}[Role of \(\kappa\)]
\label{rem:fodpsgd_theory_kappa}
The parameter \(\kappa>0\) stabilizes the normalized inconsistency score
\[
\nu_{t,j}
=
\frac{\|\tilde s_{t-j}-\bar s_t^{\mathrm{EMA}}\|_2}
{\max(\|\bar s_t^{\mathrm{EMA}}\|_2,\kappa)+\varepsilon}
\]
by preventing the normalization scale from collapsing when the EMA trend
magnitude is small.
\end{remark}

\begin{remark}[Role of \(\zeta\)]
\label{rem:fodpsgd_theory_zeta}
The confidence parameter \(\zeta>0\) determines how aggressively the confidence
factor
\[
\chi_t
=
\frac{\|\bar s_t^{\mathrm{EMA}}\|_2}
{\|\bar s_t^{\mathrm{EMA}}\|_2+\zeta}
\]
responds to the trend magnitude. Larger \(\zeta\) yields smaller \(\chi_t\) for
a fixed trend norm and therefore weakens inconsistency-aware tempering; smaller
\(\zeta\) makes the transition toward stronger gating more rapid.
\end{remark}

\begin{remark}[Role of \(\alpha\)]
\label{rem:fodpsgd_theory_alpha}
The parameter \(\alpha\in(0,1]\) controls the fractional power-law component
\((j+1)^{\alpha-1}\). For \(\alpha\in(0,1)\), this factor decreases with the lag
\(j\), favoring more recent private releases. At \(\alpha=1\), the power-law
factor becomes constant, so lag discrimination is governed by exponential
tempering and confidence-aware inconsistency modulation.
\end{remark}

\begin{remark}[Role of \(K\)]
\label{rem:fodpsgd_theory_K}
The memory window \(K\) controls how many past private releases are eligible to
enter the recursive memory state through
\[
K_t=\min\{K,t+1\}.
\]
Larger \(K\) expands the support of the memory kernel and therefore broadens
both inherited recursive-state memory and inherited release-noise memory.
\end{remark}

\subsection{Special Cases and Limiting Regimes}
\label{subsec:fodpsgd_theory_special_cases}

We conclude with exact regimes that clarify how FO-DP-SGD relates to simpler
mechanisms.

\begin{proposition}[Case \(\tau=0\)]
\label{prop:fodpsgd_theory_tau_zero}
If \(\tau=0\), then
\[
a_{t,j}^{\mathrm{CA}}
=
(j+1)^{\alpha-1}e^{-\lambda j}
\]
for every active lag \(j\). Thus FO-DP-SGD reduces to a tempered fractional
private-memory mechanism without inconsistency-aware modulation.
\end{proposition}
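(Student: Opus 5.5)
The plan is direct substitution into the raw kernel \eqref{eq:theory_ca_kernel}, followed by a check of which transcript-dependent quantities remain in the weights. Fix an iteration \(t\) with \(K_t\ge 2\) and an active lag \(j\in\{1,\dots,K_t-1\}\). Setting \(\tau=0\) in the exponent gives
\[
\lambda+\chi_t\tau\nu_{t,j}=\lambda+\chi_t\cdot 0\cdot\nu_{t,j}=\lambda,
\]
and this holds for every realizable transcript \(h_t\). The reason is that \(\chi_t\in[0,1)\) by \eqref{eq:fodpsgd_confidence}, and \(\nu_{t,j}\) is finite because its denominator is at least \(\kappa+\varepsilon>0\) by \eqref{eq:fodpsgd_inconsistency}. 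The only thing to check here is that the product \(\chi_t\tau\nu_{t,j}\) is a well-defined finite number, so that multiplying by \(\tau=0\) really yields zero. The two bounds just cited settle this. Hence \(a_{t,j}^{\mathrm{CA}}=(j+1)^{\alpha-1}e^{-\lambda j}\).

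Next, I will note what this kernel no longer depends on. It does not involve \(\nu_{t,j}\) or \(\chi_t\), so it is free of the EMA trend \(\bar s_t^{\mathrm{EMA}}\) and of the prior releases used in \eqref{eq:fodpsgd_inconsistency}. It depends only on the lag \(j\) and the public hyperparameters \(\alpha,\lambda\).

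Normalizing via \eqref{eq:fodpsgd_norm_weights} then gives deterministic weights
\[
\hat a_{t,j}^{\mathrm{CA}}=\frac{(j+1)^{\alpha-1}e^{-\lambda j}}{\sum_{l=1}^{K_t-1}(l+1)^{\alpha-1}e^{-\lambda l}}.
\]
These are positive and sum to one, as in Proposition~\ref{prop:fodpsgd_theory_convex_weights}. They depend on \(t\) only through \(K_t\).

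Substituting these weights into \eqref{eq:fodpsgd_memory_state} and \eqref{eq:fodpsgd_recursive_query} shows that the memory state is a fixed power-law-times-exponential (tempered fractional) convex combination of the past releases \(\tilde s_{t-j}\). Thus the inconsistency-aware modulation is absent, which is the interpretive claim. The case \(K_t=1\) is trivial: the memory state is \(0\) by definition, regardless of \(\tau\).

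I do not expect any real obstacle. The only point to state explicitly is the finiteness of \(\nu_{t,j}\) noted above. I might also remark, though it is not required for the statement, that the privacy analysis of Theorems~\ref{thm:fodpsgd_theory_conditional_sensitivity}--\ref{thm:fodpsgd_theory_total_privacy} is unchanged in this case, since it never used the form of the weights.
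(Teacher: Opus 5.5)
Your proposal is correct and takes the same approach as the paper: substitute \(\tau=0\) into the kernel so that the term \(\chi_t\tau\nu_{t,j}\) vanishes and \(a_{t,j}^{\mathrm{CA}}=(j+1)^{\alpha-1}e^{-\lambda j}\). Your further checks are valid but go beyond what the paper's one-line proof states: the finiteness of \(\nu_{t,j}\) and \(\chi_t\), and the explicit deterministic normalized weights that depend on \(t\) only through \(K_t\).
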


\begin{proof}
Set \(\tau=0\) in
\[
a_{t,j}^{\mathrm{CA}}
=
(j+1)^{\alpha-1}
\exp\!\Bigl(-(\lambda+\chi_t\tau\nu_{t,j})j\Bigr).
\]
The inconsistency-aware term vanishes, leaving
\[
a_{t,j}^{\mathrm{CA}}
=
(j+1)^{\alpha-1}e^{-\lambda j}.
\]
\end{proof}

\begin{proposition}[Case \(\lambda=0\) and \(\tau=0\)]
\label{prop:fodpsgd_theory_lambda_tau_zero}
If \(\lambda=0\) and \(\tau=0\), then
\[
a_{t,j}^{\mathrm{CA}}
=
(j+1)^{\alpha-1}
\]
for every active lag \(j\). Hence FO-DP-SGD reduces to the raw power-law
fractional private-memory rule.
\end{proposition}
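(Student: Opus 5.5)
The statement is the case \(\lambda=0\), \(\tau=0\), and I would verify it by direct substitution into the raw kernel \eqref{eq:fodpsgd_kernel}. The approach mirrors the proof of Proposition~\ref{prop:fodpsgd_theory_tau_zero}.

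First, I would set \(\tau=0\). For every active lag \(j=1,\dots,K_t-1\), the inconsistency term \(\chi_t\tau\nu_{t,j}\) in the exponent vanishes. This holds regardless of the transcript-dependent values of \(\chi_t\in[0,1)\) and \(\nu_{t,j}\ge0\), which are finite because the denominators \(\max(\|\bar s_t^{\mathrm{EMA}}\|_2,\kappa)+\varepsilon>0\) and \(\|\bar s_t^{\mathrm{EMA}}\|_2+\zeta>0\). Second, with \(\lambda=0\) the remaining exponent \(-\lambda j\) is also zero, so the exponential factor equals \(e^{0}=1\). This leaves \(a_{t,j}^{\mathrm{CA}}=(j+1)^{\alpha-1}\), as claimed.

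For the interpretive conclusion, I would note that these raw coefficients are then independent of the private transcript \(h_t\). Feeding them through the normalization \eqref{eq:fodpsgd_norm_weights} gives
\[
\hat a_{t,j}^{\mathrm{CA}}=\frac{(j+1)^{\alpha-1}}{\sum_{l=1}^{K_t-1}(l+1)^{\alpha-1}}.
\]
These are convex weights by Proposition~\ref{prop:fodpsgd_theory_convex_weights}, and they depend only on \(\alpha\) and \(K_t\). The memory state \(u_{t-1}^{\mathrm{CA}}\) is therefore a fixed normalized power-law average of past releases, which is exactly the raw power-law fractional memory rule.

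There is no real obstacle here. The only point needing care is that \(\chi_t\) and \(\nu_{t,j}\) are well defined and finite, so that multiplying by \(\tau=0\) genuinely kills the term. This is guaranteed by \(\kappa,\varepsilon,\zeta>0\).
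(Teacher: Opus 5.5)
Your proposal is correct and follows the paper's route: substitute \(\lambda=0\) and \(\tau=0\) into the raw kernel, so that the exponential factor becomes \(e^{0}=1\) and \(a_{t,j}^{\mathrm{CA}}=(j+1)^{\alpha-1}\). Your added remarks are valid points that the paper's one-line proof leaves implicit: \(\chi_t\) and \(\nu_{t,j}\) are finite because \(\kappa,\varepsilon,\zeta>0\), and the resulting normalized weights depend only on \(\alpha\) and \(K_t\), not on the transcript.
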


\begin{proof}
Substituting \(\lambda=0\) and \(\tau=0\) into the kernel makes the exponential
factor equal to one. Thus
\[
a_{t,j}^{\mathrm{CA}}
=
(j+1)^{\alpha-1}.
\]
\end{proof}

\begin{proposition}[Case \(\beta=1\)]
\label{prop:fodpsgd_theory_beta_one}
If \(\beta=1\), then for every iteration \(t\),
\[
r_t(D;m_t,h_t)=s_t(D;m_t),
\qquad
\tilde s_t=s_t(D;m_t)+Z_t,
\qquad
\tilde g_t=\frac{s_t(D;m_t)+Z_t}{L}.
\]
Thus FO-DP-SGD reduces exactly to the standard clipped-sum Gaussian release
under Poisson subsampling, i.e., ordinary DP-SGD under the same clipping,
sampling, noise, and privacy-accounting setup.
\end{proposition}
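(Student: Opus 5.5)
The plan is direct substitution of \(\beta=1\) into the recursive query \eqref{eq:theory_recursive_query_recall}, followed by a check that every downstream object coincides with its DP-SGD counterpart. Since the factor multiplying the memory state is \(1-\beta=0\), we get
\[
r_t(D;m_t,h_t)=1\cdot s_t(D;m_t)+0\cdot u_{t-1}^{\mathrm{CA}}(h_t)=s_t(D;m_t).
\]
This holds for every \(t\) and every realizable transcript \(h_t\). The memory state \(u_{t-1}^{\mathrm{CA}}(h_t)\) is always a finite vector: by Proposition~\ref{prop:fodpsgd_theory_convex_weights} it is a convex combination of finitely many past releases, with well-defined weights because the raw kernel coefficients are strictly positive, and it is defined to be \(0\) when \(K_t=1\). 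Hence the product \(0\cdot u_{t-1}^{\mathrm{CA}}\) vanishes identically, with no division-by-zero or undefined-weight issue. I will handle the case \(K_t=1\), which includes \(t=0\), separately for clarity: there \(r_t=\beta s_t=s_t\) directly.

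Next I substitute into the release \eqref{eq:theory_release_recall} to get \(\tilde s_t=s_t(D;m_t)+Z_t\) with \(Z_t\sim\mathcal N(0,\sigma^2C^2I)\). Dividing by \(L\) gives \(\tilde g_t=(s_t(D;m_t)+Z_t)/L\). These are exactly \eqref{eq:std_dpsgd_release_results}, and the update \(\theta_{t+1}=\theta_t-\eta\tilde g_t\) is the ordinary DP-SGD step. An induction on \(t\) then shows the whole trajectory \((\theta_t,\tilde s_t)\) has the same law as DP-SGD under the same masks and noise. The transcript-dependent quantities (EMA trend, \(\nu_{t,j}\), \(\chi_t\), weights) are still computed, but they never affect the release or the parameter update, so they are dead computation.

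For the privacy-accounting clause, I invoke Theorem~\ref{thm:fodpsgd_theory_per_step_privacy} and Theorem~\ref{thm:fodpsgd_theory_total_privacy} with \(\beta=1\). The effective ratio \(\sigma/\beta\) becomes \(\sigma\), so the per-step bound is \(\varepsilon_{\mathrm{SGM}}(\lambda;q,\sigma)\), which is the standard DP-SGD accountant. The sensitivity bound of Theorem~\ref{thm:fodpsgd_theory_conditional_sensitivity} becomes \(C\), matching Lemma~\ref{lem:fodpsgd_delta_s}.

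There is no real obstacle here. The only point requiring care is to state that the memory term is finite and well defined, so that multiplying it by zero is legitimate, and that the auxiliary history computations have no effect on any output.
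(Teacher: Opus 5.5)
Your proposal is correct and follows the paper's proof, which consists of substituting \(\beta=1\) into the recursive query to remove the memory term and then passing \(r_t=s_t(D;m_t)\) through the release rule and the division by \(L\). Your extra checks are sound and make explicit what the paper leaves implicit: the well-definedness of \(u_{t-1}^{\mathrm{CA}}\), the \(K_t=1\) case, and specializing the accounting ratio \(\sigma/\beta\) to \(\sigma\).
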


\begin{proof}
Setting \(\beta=1\) in
\[
r_t(D;m_t,h_t)
=
\beta s_t(D;m_t)
+
(1-\beta)u_{t-1}^{\mathrm{CA}}(h_t)
\]
eliminates the memory term and gives \(r_t(D;m_t,h_t)=s_t(D;m_t)\). Substituting
this into the release rule and dividing by \(L\) yields the stated expressions.
\end{proof}

\begin{proposition}[Case \(K=1\)]
\label{prop:fodpsgd_theory_K_one}
If \(K=1\), then \(K_t=1\) for all \(t\), the memory sum is empty, and
\[
r_t(D;m_t,h_t)=\beta s_t(D;m_t).
\]
Thus FO-DP-SGD becomes a purely current-step sensitivity-scaled release with no
temporal memory. It coincides with ordinary DP-SGD only in the additional
special case \(\beta=1\).
\end{proposition}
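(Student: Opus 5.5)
The proof is a direct unwinding of the definitions in Sections~\ref{subsec:fodpsgd_ema_confidence} and~\ref{subsec:fodpsgd_mechanism}. First, since \(K_t=\min\{K,t+1\}\) and \(t\ge 0\), setting \(K=1\) gives \(K_t=\min\{1,t+1\}=1\) for every iteration \(t\). Consequently the index range \(j=1,\dots,K_t-1\) in \eqref{eq:fodpsgd_memory_state} is empty. In any case, the definition explicitly stipulates \(u_{t-1}^{\mathrm{CA}}(h_t)=0\) whenever \(K_t=1\). The EMA trend, inconsistency scores, confidence factor and kernel weights are therefore never used; this matches the \(K_t=1\) branch of Algorithm~\ref{alg:fodpsgd}.

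Next, I will substitute \(u_{t-1}^{\mathrm{CA}}(h_t)=0\) into the recursive query \eqref{eq:fodpsgd_recursive_query}. This yields \(r_t(D;m_t,h_t)=\beta s_t(D;m_t)+(1-\beta)\cdot 0=\beta s_t(D;m_t)\), so the release is \(\tilde s_t=\beta s_t(D;m_t)+Z_t\). This release does not depend on \(h_t\) through any memory term, which justifies calling it a purely current-step, sensitivity-scaled release with no temporal memory. Its conditional sensitivity is \(\beta C\), in agreement with Theorem~\ref{thm:fodpsgd_theory_conditional_sensitivity}.

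For the final claim I will compare this release with the standard release \eqref{eq:std_dpsgd_release_results}, \(\tilde s_t^{\mathrm{std}}=s_t(D;m_t)+Z_t\). If \(\beta=1\), the two coincide exactly; this also follows from Proposition~\ref{prop:fodpsgd_theory_beta_one}. If \(\beta<1\), the mean of the release is \(\beta s_t(D;m_t)\neq s_t(D;m_t)\) whenever \(s_t\neq 0\), while the noise \(Z_t\sim\mathcal N(0,\sigma^2C^2I)\) is unchanged. The output distributions therefore differ, so the mechanism is not ordinary DP-SGD with the same \(\sigma\). Its privacy accounting also differs, because the effective ratio is \(\sigma/\beta\ne\sigma\) by Theorem~\ref{thm:fodpsgd_theory_per_step_privacy}.

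The only mildly delicate point is making "coincides only if" precise. I would phrase it as equality of the release distributions (equivalently, of the query maps \(D\mapsto r_t\)) for all datasets, and exhibit any \(D\) and mask with \(s_t(D;m_t)\neq 0\) to show failure when \(\beta<1\).
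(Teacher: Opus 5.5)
Your proposal is correct and follows the paper's proof: \(K_t=\min\{1,t+1\}=1\), so the memory state is zero, substitution gives \(r_t=\beta s_t\), and \(\beta=1\) recovers the standard clipped-sum release. Your distributional argument for the ``only'' direction goes beyond the paper, which verifies only the \(\beta=1\) case. Your qualifier ``with the same \(\sigma\)'' is essential there: since \(\theta_{t+1}=\theta_t-(\beta\eta)\,(s_t+Z_t/\beta)/L\) with \(Z_t/\beta\sim\mathcal N(0,(\sigma/\beta)^2C^2I)\), the \(K=1\) method is exactly ordinary DP-SGD with learning rate \(\beta\eta\) and noise multiplier \(\sigma/\beta\).
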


\begin{proof}
Since \(K_t=\min\{K,t+1\}\), setting \(K=1\) gives \(K_t=1\) for all \(t\). The
memory sum is empty, so \(u_{t-1}^{\mathrm{CA}}(h_t)=0\). Substitution into the
recursive query yields
\[
r_t(D;m_t,h_t)=\beta s_t(D;m_t).
\]
If additionally \(\beta=1\), this becomes the standard DP-SGD clipped-sum query.
\end{proof}

\begin{proposition}[Case \(\alpha=1\)]
\label{prop:fodpsgd_theory_alpha_one}
If \(\alpha=1\), then
\[
a_{t,j}^{\mathrm{CA}}
=
\exp\!\Bigl(-(\lambda+\chi_t\tau\nu_{t,j})j\Bigr).
\]
Thus the power-law component disappears and lag dependence is governed entirely
by baseline exponential tempering and confidence-aware inconsistency modulation.
\end{proposition}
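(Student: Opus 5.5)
The plan is a direct substitution into the raw kernel \eqref{eq:theory_ca_kernel}, in the same style as Propositions~\ref{prop:fodpsgd_theory_tau_zero} and~\ref{prop:fodpsgd_theory_lambda_tau_zero}.

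First, I will set \(\alpha=1\) in
\[
a_{t,j}^{\mathrm{CA}}=(j+1)^{\alpha-1}\exp\!\Bigl(-(\lambda+\chi_t\tau\nu_{t,j})j\Bigr).
\]
The power-law factor becomes \((j+1)^{0}=1\) for every active lag \(j=1,\dots,K_t-1\). This is well defined because \(j+1\ge 2>0\), so there is no \(0^0\) ambiguity. What remains is exactly the exponential factor, which is the displayed identity.

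Second, for the interpretive claim, I will observe that after this substitution all dependence on \(j\) enters only through the product \((\lambda+\chi_t\tau\nu_{t,j})j\) in the exponent. Splitting this product as \(e^{-\lambda j}\cdot e^{-\chi_t\tau\nu_{t,j}j}\) separates it into two parts:
\begin{itemize}
\item the baseline exponential tempering of Remark~\ref{rem:fodpsgd_theory_lambda};
\item the confidence-aware inconsistency modulation analyzed in Propositions~\ref{prop:fodpsgd_theory_tau_monotonicity}--\ref{prop:fodpsgd_theory_chi_monotonicity}.
\end{itemize}

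The normalization \eqref{eq:fodpsgd_norm_weights} is unaffected, since a common factor equal to one changes neither positivity nor the normalized weights. Proposition~\ref{prop:fodpsgd_theory_convex_weights} therefore still applies.

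There is no genuine obstacle; the only care needed is to state that the identity holds for each active lag and at each iteration with \(K_t\ge2\).
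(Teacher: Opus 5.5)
Your proposal is correct and follows the paper's proof, which simply substitutes \(\alpha=1\) into the raw kernel and uses \((j+1)^{0}=1\). Your added remarks (factoring the exponential into \(e^{-\lambda j}\) and the confidence-aware part, and noting that positivity and normalization are unaffected) are harmless elaborations of the interpretive sentence.
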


\begin{proof}
Substitute \(\alpha=1\) into
\[
a_{t,j}^{\mathrm{CA}}
=
(j+1)^{\alpha-1}
\exp\!\Bigl(-(\lambda+\chi_t\tau\nu_{t,j})j\Bigr).
\]
Since \((j+1)^0=1\), the result follows.
\end{proof}

\begin{proposition}[Confidence-vanishing regime \(\chi_t\to0\)]
\label{prop:fodpsgd_theory_chi_to_zero}
For fixed \(t,j,\alpha,\lambda,\tau\), and \(\nu_{t,j}\), if
\(\chi_t\to0\), then
\[
a_{t,j}^{\mathrm{CA}}
\to
(j+1)^{\alpha-1}e^{-\lambda j}.
\]
Hence the mechanism approaches baseline tempered fractional memory without
effective inconsistency-aware tempering.
\end{proposition}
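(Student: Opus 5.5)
The plan is to treat the raw coefficient as a continuous function of the single variable \(\chi_t\), with every other quantity held fixed, and then pass to the limit. Concretely, for fixed \(t,j,\alpha,\lambda,\tau\) and \(\nu_{t,j}\), I will define
\[
f(\chi)=(j+1)^{\alpha-1}\exp\!\bigl(-(\lambda+\chi\tau\nu_{t,j})j\bigr),
\qquad \chi\in[0,1),
\]
so that \(a_{t,j}^{\mathrm{CA}}=f(\chi_t)\) by \eqref{eq:fodpsgd_kernel}. The prefactor \((j+1)^{\alpha-1}\) does not involve \(\chi\). The exponent \(-(\lambda+\chi\tau\nu_{t,j})j\) is affine in \(\chi\), and the exponential is continuous, so \(f\) is continuous on \([0,1)\), including at \(\chi=0\).

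Next I will evaluate at the endpoint. Setting \(\chi=0\) removes the inconsistency-aware term from the exponent, which gives \(f(0)=(j+1)^{\alpha-1}e^{-\lambda j}\). This is exactly the \(\tau=0\) kernel of Proposition~\ref{prop:fodpsgd_theory_tau_zero}. By continuity, \(\chi_t\to0\) then gives \(a_{t,j}^{\mathrm{CA}}\to(j+1)^{\alpha-1}e^{-\lambda j}\). If a quantitative version is wanted, the elementary inequality \(|e^{-x}-1|\le x\) for \(x\ge0\) gives
\[
\bigl|a_{t,j}^{\mathrm{CA}}-(j+1)^{\alpha-1}e^{-\lambda j}\bigr|\le (j+1)^{\alpha-1}e^{-\lambda j}\,\chi_t\tau\nu_{t,j}\,j,
\]
and the right-hand side is linear in \(\chi_t\). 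One can also note that \(\chi_t\to0\) happens precisely when \(\|\bar s_t^{\mathrm{EMA}}\|_2/\zeta\to0\), by \eqref{eq:fodpsgd_confidence}.

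For the interpretive sentence, I will extend the limit to the normalized weights. There are finitely many active lags, and the limiting raw coefficients are strictly positive, so the normalizing sum stays bounded away from zero. Hence \(\hat a_{t,j}^{\mathrm{CA}}\) also converges, namely to the normalized tempered fractional kernel. This matches the \(\tau=0\) mechanism, which justifies saying the inconsistency-aware tempering becomes ineffective.

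The main subtlety is the phrase ``fixed \(\nu_{t,j}\)''. In the actual algorithm, \(\chi_t\) and \(\nu_{t,j}\) are both functions of \(\bar s_t^{\mathrm{EMA}}\), so they cannot truly be varied independently. I will read the statement, as written, as a limit in which \(\nu_{t,j}\) is held fixed. I will also remark that the same conclusion holds whenever \(\nu_{t,j}\) stays bounded along the limit. That case is guaranteed in practice, because the denominator of \eqref{eq:fodpsgd_inconsistency} is at least \(\kappa+\varepsilon>0\), so \(\nu_{t,j}\) is finite for any finite transcript. What is needed in the bound above is that \(\chi_t\nu_{t,j}\to0\).
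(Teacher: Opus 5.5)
Your proof is correct and follows the paper's argument: the paper simply lets \(\chi_t\to0\) in the kernel formula, which is your continuity step. Your linear error bound and the extension to the normalized weights \(\hat a_{t,j}^{\mathrm{CA}}\) are valid additions. One loose point is in your side remark: finiteness of \(\nu_{t,j}\) for each transcript does not by itself give boundedness along a limit, so you also need the lagged releases \(\tilde s_{t-j}\) to stay bounded, after which the \(\kappa+\varepsilon\) floor in the denominator bounds \(\nu_{t,j}\); this does not affect the statement as posed, where \(\nu_{t,j}\) is held fixed.
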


\begin{proof}
Taking \(\chi_t\to0\) in
\[
a_{t,j}^{\mathrm{CA}}
=
(j+1)^{\alpha-1}
\exp\!\Bigl(-(\lambda+\chi_t\tau\nu_{t,j})j\Bigr)
\]
removes the inconsistency-aware term and gives the stated limit.
\end{proof}

\begin{proposition}[Full-confidence regime \(\chi_t\to1\)]
\label{prop:fodpsgd_theory_chi_to_one}
For fixed \(t,j,\alpha,\lambda,\tau\), and \(\nu_{t,j}\), if
\(\chi_t\to1\), then
\[
a_{t,j}^{\mathrm{CA}}
\to
(j+1)^{\alpha-1}
\exp\!\bigl(-(\lambda+\tau\nu_{t,j})j\bigr).
\]
Thus the mechanism approaches fully active inconsistency-aware tempering without
confidence suppression.
\end{proposition}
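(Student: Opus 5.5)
The plan is to treat the raw coefficient as a continuous function of the confidence factor, with every other quantity held fixed, and pass to the limit directly. This is the same route as Proposition~\ref{prop:fodpsgd_theory_chi_to_zero}.

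Fix \(t\), \(j\), \(\alpha\), \(\lambda\), \(\tau\), and \(\nu_{t,j}\). Define
\[
\phi(\chi)=(j+1)^{\alpha-1}\exp\!\bigl(-(\lambda+\chi\tau\nu_{t,j})j\bigr),
\]
so that \(a_{t,j}^{\mathrm{CA}}=\phi(\chi_t)\) by \eqref{eq:fodpsgd_kernel}. The prefactor \((j+1)^{\alpha-1}\) is a finite positive constant that does not depend on \(\chi\). The exponent \(-(\lambda+\chi\tau\nu_{t,j})j\) is affine in \(\chi\), and the exponential is continuous on \(\mathbb R\). Hence \(\phi\) is continuous in \(\chi\).

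Since \(\chi_t\in[0,1)\) by \eqref{eq:fodpsgd_confidence}, the limit \(\chi_t\to1\) is understood from below. This corresponds to \(\|\bar s_t^{\mathrm{EMA}}\|_2/\zeta\to\infty\), that is, a large trend norm relative to \(\zeta\). No closure issue arises, because \(\phi\) is defined and continuous at \(\chi=1\) itself. Continuity therefore gives
\[
\phi(\chi_t)\to\phi(1)=(j+1)^{\alpha-1}\exp\!\bigl(-(\lambda+\tau\nu_{t,j})j\bigr),
\]
which is the claimed limit.

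The interpretive sentence then follows because \(\chi_t=1\) is the largest possible gate. By Proposition~\ref{prop:fodpsgd_theory_chi_monotonicity}, \(\phi\) is nonincreasing in \(\chi\), so the limiting value is the strongest attenuation that confidence gating can produce. In that limit the inconsistency penalty \(\tau\nu_{t,j}\) enters at full strength, with no confidence suppression.

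The only step needing care is the conditioning: \(\nu_{t,j}\) and \(\chi_t\) are both functions of \(\bar s_t^{\mathrm{EMA}}\), so in practice they do not vary independently. I would handle this by noting that the statement explicitly holds \(\nu_{t,j}\) fixed, so the limit is a formal one-variable limit and this coupling does not matter. As a remark, I would add that if \(\nu_{t,j}\) instead varies and converges to some \(\nu^\ast\), joint continuity of the exponential in \((\chi,\nu)\) gives the same formula with \(\nu^\ast\) in place of \(\nu_{t,j}\).
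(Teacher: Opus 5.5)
Your proposal is correct and matches the paper's proof, which simply takes the limit \(\chi_t\to1\) in the kernel \eqref{eq:fodpsgd_kernel}; your continuity-of-the-exponential step is exactly that argument made explicit. Your added remarks are accurate and go beyond the paper's one-line proof: the limit is one-sided since \(\chi_t<1\), monotonicity makes it the strongest attenuation the gate can give, and \(\nu_{t,j}\) and \(\chi_t\) are coupled through \(\bar s_t^{\mathrm{EMA}}\) but \(\nu_{t,j}\) is held fixed.
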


\begin{proof}
Take the limit \(\chi_t\to1\) in the confidence-aware kernel. The result follows
immediately.
\end{proof}

\begin{proposition}[Memory-dominated regime \(\beta\to0\)]
\label{prop:fodpsgd_theory_beta_to_zero}
For fixed \(t\) and fixed prior private transcript \(h_t\), if \(\beta\to0\),
then
\[
r_t(D;m_t,h_t)\to u_{t-1}^{\mathrm{CA}}(h_t).
\]
Thus the current release becomes dominated by inherited private history rather
than by the current clipped-sum term.
\end{proposition}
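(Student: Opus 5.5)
The plan is to argue directly from the definition of the recursive query in \eqref{eq:theory_recursive_query_recall}, holding everything except \(\beta\) fixed. Fix \(t\), the prior private transcript \(h_t\), the mask \(m_t\), and the dataset \(D\). By Proposition~\ref{prop:fodpsgd_theory_history_conditioned}, conditioning on \(h_t\) fixes \(\theta_t\), hence the clipped gradients and the vector \(s_t(D;m_t)\). The same conditioning fixes the EMA trend, the inconsistency scores \(\nu_{t,j}\), the confidence factor \(\chi_t\), and the normalized weights \(\hat a_{t,j}^{\mathrm{CA}}\). None of these quantities involves \(\beta\): the kernel \eqref{eq:fodpsgd_kernel} depends only on \(\alpha,\lambda,\tau,\chi_t,\nu_{t,j}\), and the transcript entries \(\tilde s_{t-j}\) are treated as given realized vectors. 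Therefore \(u_{t-1}^{\mathrm{CA}}(h_t)\) is a fixed vector independent of \(\beta\). When \(K_t\ge 2\) it is the convex combination from Proposition~\ref{prop:fodpsgd_theory_convex_weights}; when \(K_t=1\) it equals \(0\).

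With this in place, \(r_t(D;m_t,h_t)\) is an affine function of the scalar \(\beta\):
\[
r_t(D;m_t,h_t)=u_{t-1}^{\mathrm{CA}}(h_t)+\beta\bigl(s_t(D;m_t)-u_{t-1}^{\mathrm{CA}}(h_t)\bigr).
\]
This gives the explicit bound
\[
\|r_t(D;m_t,h_t)-u_{t-1}^{\mathrm{CA}}(h_t)\|_2=\beta\,\|s_t(D;m_t)-u_{t-1}^{\mathrm{CA}}(h_t)\|_2 .
\]
The right-hand side tends to \(0\) as \(\beta\to0\), since the norm factor is a finite constant under the conditioning. Both vectors are finite: \(s_t\) has norm at most \(|S_t|\,C\), and \(u_{t-1}^{\mathrm{CA}}\) is a finite combination of realized releases. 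The interpretive sentence then follows from two facts. The direct current-data term \(\beta s_t(D;m_t)\) has norm at most \(\beta|S_t|C\to 0\). The history term \((1-\beta)u_{t-1}^{\mathrm{CA}}\to u_{t-1}^{\mathrm{CA}}\). This is consistent with Proposition~\ref{prop:fodpsgd_theory_beta_tradeoff}.

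The only delicate point is making precise what "fixed \(h_t\)" means while \(\beta\) varies. In the actual algorithm, earlier releases \(\tilde s_{t-j}\) themselves depend on \(\beta\) through earlier queries. The statement conditions on a realized transcript, so the plan treats \(h_t\) as a fixed tuple of vectors and takes the limit in \(\beta\) only at step \(t\). I will state this explicitly, so the limit is read as a pointwise limit of the step-\(t\) query map and not as a limit along whole trajectories. I will also note that the case \(K_t=1\) gives the degenerate limit \(0\).
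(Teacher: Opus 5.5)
Your proposal is correct and takes the same route as the paper, which simply lets \(\beta\to0\) in \(r_t=\beta s_t(D;m_t)+(1-\beta)u_{t-1}^{\mathrm{CA}}(h_t)\). Your affine rewriting with the bound \(\|r_t-u_{t-1}^{\mathrm{CA}}(h_t)\|_2=\beta\|s_t(D;m_t)-u_{t-1}^{\mathrm{CA}}(h_t)\|_2\), together with your note that \(u_{t-1}^{\mathrm{CA}}(h_t)\) does not depend on \(\beta\) once \(h_t\) is held fixed, spells out what the paper leaves implicit.
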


\begin{proof}
Starting from
\[
r_t(D;m_t,h_t)
=
\beta s_t(D;m_t)
+
(1-\beta)u_{t-1}^{\mathrm{CA}}(h_t),
\]
taking \(\beta\to0\) eliminates the current clipped-sum term and leaves
\(u_{t-1}^{\mathrm{CA}}(h_t)\).
\end{proof}

\begin{proposition}[Full-participation limit \(q\to1\)]
\label{prop:fodpsgd_theory_q_to_one}
As \(q\to1\), Poisson subsampling approaches full participation, and the
clipped subsampled sum converges to the full clipped sum:
\[
s_t(D;m_t)
\to
\sum_{i=1}^N \bar g_t(x_i).
\]
Consequently, the recursive query approaches its full-participation
counterpart.
\end{proposition}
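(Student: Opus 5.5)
To prove the full-participation limit, I will fix the iteration \(t\), the prior private transcript \(h_t\), and hence (by Proposition~\ref{prop:fodpsgd_theory_history_conditioned}) the parameter \(\theta_t\) and all clipped gradients \(\bar g_t(x_i)\). Since the statement concerns a random object (the mask \(m_t\)) indexed by \(q\), I will make "converges" precise as convergence in probability, and in fact in \(L^2\). I would also note that it holds almost surely under the natural coupling \(m_{t,i}=\mathbf 1\{U_i\le q\}\) with \(U_i\) i.i.d.\ uniform on \([0,1]\).

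The key step is a direct moment computation. Write
\[
s_t(D;m_t)-\sum_{i=1}^N\bar g_t(x_i)=-\sum_{i=1}^N(1-m_{t,i})\bar g_t(x_i).
\]
By the triangle inequality and the clipping bound \eqref{eq:theory_clipping_assumption}, its norm is at most \(C\,\#\{i:m_{t,i}=0\}\). The expectation of this count is \(N(1-q)\), so the expected error is at most \(CN(1-q)\to0\). Markov's inequality then gives convergence in probability, and the \(L^2\) statement follows similarly from \(\mathbb E[\#\{i:m_{t,i}=0\}^2]\le N^2(1-q)\). Alternatively, \(\Pr(m_t\neq\mathbf 1)=1-q^N\to0\) yields the result immediately, which is the cleanest route and the one I will use. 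Under the uniform coupling, every \(m_{t,i}=1\) once \(q\ge\max_i U_i\), which happens eventually almost surely, so the sums coincide for \(q\) close enough to \(1\).

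For the consequence, \(u_{t-1}^{\mathrm{CA}}(h_t)\) does not depend on \(q\) or \(m_t\) once \(h_t\) is fixed. The map \(s\mapsto\beta s+(1-\beta)u_{t-1}^{\mathrm{CA}}(h_t)\) is affine and hence continuous, so by the continuous mapping theorem \(r_t(D;m_t,h_t)\to\beta\sum_i\bar g_t(x_i)+(1-\beta)u_{t-1}^{\mathrm{CA}}(h_t)\) in the same mode.

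The main obstacle is interpretive rather than technical: the transcript \(h_t\) itself depends on \(q\) through earlier masks, and \(L=Nq\) also varies with \(q\). I will handle this by stating the result conditionally on a fixed realizable \(h_t\), as the proposition is phrased per iteration. Because the query \(r_t\) involves only the sum and not the normalization by \(L\), the second issue does not arise.
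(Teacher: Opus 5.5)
Your proposal is correct and follows the same route as the paper: the mask tends to all ones, so the masked clipped sum tends to the full clipped sum, and you then substitute into the affine query. It is also more careful than the paper. The paper infers ``\(m_{t,i}\to1\) almost surely'' directly from \(\mathbb P(m_{t,i}=1)\to1\), which is only legitimate under a coupling such as your \(m_{t,i}=\mathbf 1\{U_i\le q\}\), whereas your identity \(\Pr(m_t\neq\mathbf 1)=1-q^N\) gives convergence in probability without one.

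One small correction concerns your claim that the dependence on \(L\) does not arise. Fixing \(h_t\) does not fix \(\theta_t=\theta_0-\frac{\eta}{Nq}\sum_{r<t}\tilde s_r\), so \(L\) does enter through the summands \(\bar g_t(x_i)\). This does not break the argument, because your estimate \(\|s_t(D;m_t)-\sum_i\bar g_t(x_i)\|_2\le C\,\#\{i:m_{t,i}=0\}\) is uniform in \(\theta_t\). Since \(u_{t-1}^{\mathrm{CA}}(h_t)\) depends only on \(h_t\), the difference between the recursive query and its full-participation counterpart still vanishes.
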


\begin{proof}
Under Poisson subsampling,
\[
m_{t,i}\sim\mathrm{Bernoulli}(q).
\]
As \(q\to1\), \(\mathbb P(m_{t,i}=1)\to1\), so \(m_{t,i}\to1\) almost surely for
each \(i\). Therefore,
\[
s_t(D;m_t)
=
\sum_{i=1}^N m_{t,i}\bar g_t(x_i)
\to
\sum_{i=1}^N \bar g_t(x_i).
\]
Substituting this limit into the recursive query gives the full-participation
counterpart.
\end{proof}



\newpage
\section*{NeurIPS Paper Checklist}

\begin{enumerate}

\item {\bf Claims}
    \item[] Question: Do the main claims made in the abstract and introduction accurately reflect the paper's contributions and scope?
    \item[] Answer: \answerYes{}.
    \item[] Justification: The main claims in the abstract and introduction are restricted to the proposed FO-DP-SGD mechanism, its mechanism-level fractional-memory design, its compatibility with standard privacy accounting, and the evaluated privacy--utility improvements. These claims are supported by the methodology in Section~\ref{sec:fodpsgd_methodology}, the empirical results in Section~\ref{sec:experiment}, and the additional ablation studies in Appendix~\ref{app:additional-results}.

\item {\bf Limitations}
    \item[] Question: Does the paper discuss the limitations of the work performed by the authors?
    \item[] Answer: \answerYes{}.
    \item[] Justification: The paper discusses limitations in Section~\ref{sec:conclusion}, including the use of benchmark image-classification datasets and compact neural architectures, dependence on privacy-accounting assumptions and hyperparameter choices, and the need for further evaluation on larger-scale architectures and deployment settings.

\item {\bf Theory assumptions and proofs}
    \item[] Question: For each theoretical result, does the paper provide the full set of assumptions and a complete (and correct) proof?
    \item[] Answer: \answerYes{}.
    \item[] Justification: The paper states the FO-DP-SGD mechanism in Section~\ref{sec:fodpsgd_methodology} and provides the full theoretical analysis in Appendix~\ref{app:theoretical-analysis}. The appendix includes assumptions, propositions, theorems, and proofs for history-conditioned query structure, sensitivity, privacy accounting, adaptive R\'enyi DP composition, signal--memory--noise decomposition, and limiting regimes.

\item {\bf Experimental result reproducibility}
    \item[] Question: Does the paper fully disclose all the information needed to reproduce the main experimental results of the paper to the extent that it affects the main claims and/or conclusions of the paper, regardless of whether the code and data are provided or not?
    \item[] Answer: \answerYes{}.
    \item[] Justification: Section~\ref{subsec:experimental-setup-reproducibility} specifies the datasets, fixed train/test subsets, preprocessing, model architecture, privacy parameters, optimizer settings, FO-DP-SGD hyperparameters, baselines, random seeds, and statistical reporting protocol. Additional empirical results and ablations are provided in Appendix~\ref{app:additional-results}.

\item {\bf Open access to data and code}
    \item[] Question: Does the paper provide open access to the data and code, with sufficient instructions to faithfully reproduce the main experimental results, as described in supplemental material?
    \item[] Answer: \answerNo{}.
    \item[] Justification: The paper uses publicly available benchmark datasets and provides detailed experimental information in Section~\ref{subsec:experimental-setup-reproducibility} and Appendix~\ref{app:additional-results}. The source code is not released at submission time; after publication, we plan to make the implementation and scripts available upon reasonable request to support reproduction of the reported results.

\item {\bf Experimental setting/details}
    \item[] Question: Does the paper specify all the training and test details, e.g., data splits, hyperparameters, how they were chosen, type of optimizer, necessary to understand the results?
    \item[] Answer: \answerYes{}.
    \item[] Justification: Section~\ref{subsec:experimental-setup-reproducibility} specifies the training and test setup, including datasets, fixed train/test subsets, preprocessing, architecture, clipping norm, noise multiplier, sampling rate, privacy parameter, privacy accountant, learning rates, baseline optimizers, FO-DP-SGD hyperparameters, ablation ranges, and seed-level statistical reporting.

\item {\bf Experiment statistical significance}
    \item[] Question: Does the paper report error bars suitably and correctly defined or other appropriate information about the statistical significance of the experiments?
    \item[] Answer: \answerYes{}.
    \item[] Justification: The main comparison experiments are repeated over five independent random seeds. Section~\ref{subsec:experimental-setup-reproducibility} defines the statistical reporting protocol, and Appendix~\ref{subsec:final-accuracy-analysis} reports final accuracy as mean $\pm$ standard deviation together with 95\% confidence intervals computed using Student's $t$ interval.

\item {\bf Experiments compute resources}
    \item[] Question: For each experiment, does the paper provide sufficient information on the computer resources, type of compute workers, memory, time of execution, needed to reproduce the experiments?
    \item[] Answer: \answerYes{}.
    \item[] Justification: The paper reports runtime comparisons and relative runtime overhead in Appendix~\ref{subsec:runtime_comparison_cifar10} and Appendix~\ref{subsec:relative_runtime_overhead_cifar10}. The experimental setup in Section~\ref{subsec:experimental-setup-reproducibility} also records per-run runtime as part of the seed-level logs used to generate the reported summaries.

\item {\bf Code of ethics}
    \item[] Question: Does the research conducted in the paper conform, in every respect, with the NeurIPS Code of Ethics \url{https://neurips.cc/public/EthicsGuidelines}?
    \item[] Answer: \answerYes{}.
    \item[] Justification: The work is methodological and experimental, uses public benchmark datasets, and does not involve human subjects, private user data, deceptive systems, or high-risk deployment. Privacy-related assumptions, limitations, and implications are discussed in Section~\ref{sec:conclusion} and Section~\ref{sec:broader-impact}.

\item {\bf Broader impacts}
    \item[] Question: Does the paper discuss both potential positive societal impacts and negative societal impacts of the work performed?
    \item[] Answer: \answerYes{}.
    \item[] Justification: Section~\ref{sec:broader-impact} discusses positive impacts of improving utility in privacy-preserving learning, as well as possible negative impacts from incorrect privacy accounting, excessive privacy budgets, distribution shift, downstream misuse, fairness concerns, and residual information leakage.

\item {\bf Safeguards}
    \item[] Question: Does the paper describe safeguards that have been put in place for responsible release of data or models that have a high risk for misuse, e.g., pre-trained language models, image generators, or scraped datasets?
    \item[] Answer: \answerNA{}.
    \item[] Justification: The paper does not release high-risk pretrained models, language models, image generators, scraped datasets, or dual-use datasets. The experiments use standard public benchmark datasets and focus on a differentially private optimization method.

\item {\bf Licenses for existing assets}
    \item[] Question: Are the creators or original owners of assets, e.g., code, data, models, used in the paper, properly credited and are the license and terms of use explicitly mentioned and properly respected?
    \item[] Answer: \answerYes{}.
    \item[] Justification: The paper uses public benchmark datasets and standard software libraries, which are credited in the experimental setup and references. Dataset access and usage follow their public terms, and no proprietary or restricted datasets are introduced.

\item {\bf New assets}
    \item[] Question: Are new assets introduced in the paper well documented and is the documentation provided alongside the assets?
    \item[] Answer: \answerNA{}.
    \item[] Justification: The paper does not introduce or release a new dataset, benchmark, pretrained model, or standalone software asset at submission time. The contribution is a private optimization method, and the algorithmic and experimental details are documented in Section~\ref{sec:fodpsgd_methodology}, Section~\ref{subsec:experimental-setup-reproducibility}, and Appendix~\ref{app:theoretical-analysis}.

\item {\bf Crowdsourcing and research with human subjects}
    \item[] Question: For crowdsourcing experiments and research with human subjects, does the paper include the full text of instructions given to participants and screenshots, if applicable, as well as details about compensation, if any?
    \item[] Answer: \answerNA{}.
    \item[] Justification: The paper does not involve crowdsourcing, user studies, surveys, annotation tasks, or research with human subjects. All experiments are conducted on public machine-learning benchmark datasets.

\item {\bf Institutional review board (IRB) approvals or equivalent for research with human subjects}
    \item[] Question: Does the paper describe potential risks incurred by study participants, whether such risks were disclosed to the subjects, and whether Institutional Review Board approvals, or an equivalent approval/review based on the requirements of your country or institution, were obtained?
    \item[] Answer: \answerNA{}.
    \item[] Justification: The paper does not involve human subjects, crowdsourcing, private participant data, or user studies; therefore IRB approval or equivalent human-subjects review is not applicable.

\item {\bf Declaration of LLM usage}
    \item[] Question: Does the paper describe the usage of LLMs if it is an important, original, or non-standard component of the core methods in this research? Note that if the LLM is used only for writing, editing, or formatting purposes and does \emph{not} impact the core methodology, scientific rigor, or originality of the research, declaration is not required.
    \item[] Answer: \answerNA{}.
    \item[] Justification: LLMs are not used as an important, original, or non-standard component of the proposed method, theory, experiments, or evaluation. Any use of language tools, if applicable, is limited to writing, editing, or formatting assistance and does not affect the scientific content or core methodology.

\end{enumerate}

\end{document}